\not \isundefined{\disputationsdatum} 
\not \isundefined{\disputationslokal}}   
  \or \boolean{detectedSTOC} \or \boolean{detectedFOCS}
  \or \boolean{detectedSIAM} \or \boolean{detectedIEEE}
  \or \boolean{detectedPoster}}
\or \boolean{detectedSIAM}
  \or \boolean{detectedSIAM}     \or \boolean{detectedLIPIcs}}
\or \boolean{detectedSIAM}         \or 
\or \boolean{detectedNOW}          \or 
\or \boolean{detectedACM}          \or
\or \boolean{detectedLIPIcs}       \or
\or \boolean{detectedAAAI}         \or
\or \boolean{detectedSigplanconf}}
\or \boolean{detectedFOCS}         \or 
\or \boolean{detectedPoster}       \or
\or \boolean{detectedLMCS}         \or
\or \boolean{detectedNOW}          \or
\or \boolean{detectedThesis}       \or
\or \boolean{detectedACM}          \or 
\or \boolean{detectedAAAI}         \or
\or \boolean{detectedIJCAI}        \or 
\or \boolean{detectedSigplanconf}}
\or \boolean{detectedSIAM} 
  \or \boolean{detectedLIPIcs}}
\DeclareMathAlphabet{\mathsfsl}{OT1}{cmss}{m}{sl}
\DeclareRobustCommand{\BibTeX}{{\normalfont B\kern-.05em{\scshape i\kern-.025em b}\kern-.08em \TeX}}
\newcommand{\formuladots}{\cdots}
\newcommand{\Bigoh}[1]{\mathrm{O} \bigl( #1 \bigr)}
\newcommand{\bigoh}[1]{\mathrm{O} ( #1 )}
\newcommand{\littleoh}[1]{\mathrm{o} ( #1 )}
\newcommand{\Bigomega}[1]{\Omega \bigl( #1 \bigr)}
\newcommand{\littleomega}[1]{\omega ( #1 )}
  \newcommand{\R}         {\mathbb{R}}
  \newcommand{\N}         {\mathbb{N}}
\newcommand{\MAXOFEXPR}[2][]{\max_{#1} \left\{ #2 \right\}}
\newcommand{\MINOFEXPR}[2][]{\min_{#1} \left\{ #2 \right\}}
\newcommand{\Maxofexpr}[2][]{\max_{#1} \bigl\{ #2 \bigr\}}
\newcommand{\Minofexpr}[2][]{\min_{#1} \bigl\{ #2 \bigr\}}
\newcommand{\MAXOFSET}[3][:]{\ifthenelse{\equal{#1}{;}}{\MAXOFEXPR{ #2 \,;\, #3 }}
     {\ifthenelse{\equal{#1}{:}}{\MAXOFEXPR{ #2 \,:\, #3 }}
     {\max \twincommandJN{\left\{}{#2}{\left#1}{\right}{\,#3}{\right\}}}}}
\newcommand{\MINOFSET}[3][:]{\ifthenelse{\equal{#1}{;}}{\MINOFEXPR{ #2 \,;\, #3 }}
     {\ifthenelse{\equal{#1}{:}}{\MINOFEXPR{ #2 \,:\, #3 }}
     {\min \twincommandJN{\left\{}{#2}{\left#1}{\right}{\,#3}{\right\}}}}}
\newcommand{\Maxofset}[3][:]{\ifthenelse{\equal{#1}{;}}{\Maxofexpr{ #2 \,;\, #3 }}
     {\ifthenelse{\equal{#1}{:}}{\Maxofexpr{ #2 \,:\, #3 }}
     {\max \twincommandJN{\bigl\{}{#2}{\bigl#1}{\bigr}{\,#3}{\bigr\}}}}}
\newcommand{\Minofset}[3][:]{\ifthenelse{\equal{#1}{;}}{\Minofexpr{ #2 \,;\, #3 }}
     {\ifthenelse{\equal{#1}{:}}{\Minofexpr{ #2 \,:\, #3 }}
     {\min \twincommandJN{\bigl\{}{#2}{\bigl#1}{\bigr}{\,#3}{\bigr\}}}}}
\newcommand{\F}{\mathbb{F}}
\newcommand{\twincommandJN}[6]{#1#2#3\vphantom{#2#5}\mspace{-2.05mu}#4.#5#6}
\newcommand{\set}[1]{\{ #1 \}}
\newcommand{\Set}[1]{\bigl\{ #1 \bigr\}}
\newcommand{\setdescr}[3][\mid]{\set{ #2 #1 #3 }}
\newcommand{\Setdescr}[3][|]{\ifthenelse{\equal{#1}{;}}{\Set{ #2 \,;\, #3 }}
     {\ifthenelse{\equal{#1}{:}}{\Set{ #2 \,:\, #3 }}
     {\twincommandJN{\bigl\{}{#2\,}{\bigl#1}{\bigr}{\,#3}{\bigr\}}}}}
\newcommand{\setsize}[1]{\lvert#1\rvert}
\newcommand{\intersection}{\cap}
\newcommand{\union}{\cup}
\newcommand{\Union}{\bigcup}
\newcommand{\olnot}[1]{\overline{#1}}
\newcommand{\nvar}{n}
\newcommand{\nvars}{\nvar}
\newcommand{\complclassformat}[1]{\textrm{\upshape{\textsf{#1}}}\xspace}
\newcommand{\NP}{\complclassformat{NP}}
\newcommand{\introduceterm}[1]{{\emph{#1}}}
\newcommand{\eqperiod}{\enspace .}
\newcommand{\eqcomma}{\enspace ,}
\renewcommand{\eqperiod}{\, .}
\renewcommand{\eqcomma}{\, ,}
\newcommand{\wrt}{with respect to\xspace}
\newcommand{\eg}{for instance\xspace} }
\newcommand{\ie}{i.e.,\ }
\renewcommand{\st}{\errmessage{Please do not use st}}}
\newcommand{\st}{such that\xspace}}
\newcommand{\etal}{et al.\@\xspace}
\newcommand{\aas}{asymptotically almost surely\xspace}
\newcommand{\refsec}[1]{Section~\ref{#1}}
\newcommand{\refapp}[1]{Appendix~\ref{#1}}
\newcommand{\reffig}[1]{Figure~\ref{#1}}
\newcommand{\refth}[1]{Theorem~\ref{#1}}
\newcommand{\refthm}[1]{Theorem~\ref{#1}}
\newcommand{\reflem}[1]{Lemma~\ref{#1}}
\newcommand{\refdef}[1]{Definition~\ref{#1}}
\newcommand{\refobs}[1]{Observation~\ref{#1}}
\newcommand{\refclaim}[1]{Claim~\ref{#1}}
\newcommand{\refitem}[1]{item~\ref{#1}}
\newcommand{\refeq}[1]{\eqref{#1}}}
\renewcommand{\refeq}[1]{\eqref{#1}}}
\not \boolean{detectedSTOC}        \and \not \boolean{detectedFOCS}
\not \boolean{detectedPoster}      \and \not \boolean{detectedElsevier} 
\not \boolean{detectedSIAM}        \and \not \boolean{detectedACM}
\not \boolean{detectedIEEE}        \and \not \boolean{detectedNOW}
\not \boolean{detectedToC}         \and \not \boolean{detectedThesis}
\not \boolean{detectedLIPIcs}      \and \not \boolean{detectedSIAM}
\not \boolean{detectedAAAI}        \and \not \boolean{detectedIJCAI}
\not \boolean{detectedSigplanconf} \and \not \boolean{detectedACMconf}   
\not \boolean{detectedCompCplx} \and \not \boolean{detectedEasyChair}}
\newtheorem{theorem}{Theorem}
\newtheorem{lemma}[theorem]{Lemma}
\newtheorem{proposition}[theorem]{Proposition}
\newtheorem{corollary}[theorem]{Corollary}
\newtheorem{observation}[theorem]{Observation}
\newtheorem{definition}[theorem]{Definition}
\newtheorem{claim}[theorem]{Claim}
\newtheorem{conjecture}{Conjecture}
\newtheorem{openproblem}[conjecture]{Open Problem}
\newcounter{unnumber}
\crefname{hypothesis}{Hypothesis}{Hypotheses}
\crefname{section}{Section}{Sections}
\crefname{claim}{Claim}{Claims}
\renewcommand{\refsec}[1]{\cref{#1}}
\renewcommand{\refapp}[1]{\cref{#1}}
\renewcommand{\reffig}[1]{\cref{#1}}
\renewcommand{\refth}[1]{\cref{#1}}
\renewcommand{\refthm}[1]{\cref{#1}}
\renewcommand{\reflem}[1]{\cref{#1}}
\renewcommand{\refdef}[1]{\cref{#1}}
\renewcommand{\refobs}[1]{\cref{#1}}
\newcommand{\refclaim}[1]{\cref{#1}}
\newtheorem{standardlocalcounter}{Dummy}[chapter]
\newtheorem{standardglobalcounter}{Dummy}
\newtheorem{theorem}[standardlocalcounter]{Theorem}
\newtheorem{lemma}[standardlocalcounter]{Lemma}
\newtheorem{proposition}[standardlocalcounter]{Proposition}
\newtheorem{corollary}[standardlocalcounter]{Corollary}
\newtheorem{observation}[standardlocalcounter]{Observation}
\newtheorem{fact}[standardlocalcounter]{Fact}
\newtheorem{conjecturelocalcounter}[standardlocalcounter]{Conjecture}
\newtheorem{conjectureglobalcounter}[standardglobalcounter]{Conjecture}
\newtheorem{conjecture}[standardglobalcounter]{Conjecture}
\newtheorem{openquestion}[standardglobalcounter]{Open Question}
\newtheorem{openproblem}[standardglobalcounter]{Open Problem}
\newtheorem{problem}{Problem}
\newtheorem{property}[standardlocalcounter]{Property}
\newtheorem{definition}[standardlocalcounter]{Definition}
\newtheorem{claim}[standardlocalcounter]{Claim}
\newtheorem{algorithm}[standardlocalcounter]{Algorithm}
\newtheorem{remark}[standardlocalcounter]{Remark}
\newtheorem{example}[standardlocalcounter]{Example}
\renewenvironment{proof}[1][Proof]{\par\trivlist
   \item[\hskip \labelsep{\itshape {#1}.}]\prooffont}
   {\hspace*{0pt plus1fill}\fboxsep2.5pt\fboxrule.5pt\raise3pt\hbox{\fbox{}}\endtrivlist}
\theoremstyle{plain}    
\newtheorem{theorem}[thm]{Theorem}
\newtheorem{lemma}[thm]{Lemma}
\newtheorem{proposition}[thm]{Proposition}
\newtheorem{corollary}[thm]{Corollary}
\newtheorem{observation}[thm]{Observation}
\newtheorem{conjecture}[thm]{Conjecture}
\newtheorem{problem}[thm]{Problem}
\newtheorem{openquestion}{Open Question}
\newtheorem{openproblem}{Open Problem}
\theoremstyle{definition}
\newtheorem{property}[thm]{Property}
\newtheorem{definition}[thm]{Definition}
\newtheorem{claim}[thm]{Claim}
\newtheorem{remark}[thm]{Remark}
\newtheorem{example}[thm]{Example}
\newtheorem{standardlocalcounter}{Dummy}[section]
\newtheorem{standardglobalcounter}{Dummy}
\theoremstyle{plain}    
\newtheorem{theorem}[standardglobalcounter]{Theorem}
\newtheorem{lemma}[standardglobalcounter]{Lemma}
\newtheorem{proposition}[standardglobalcounter]{Proposition}
\newtheorem{corollary}[standardglobalcounter]{Corollary}
\newtheorem{observation}[standardglobalcounter]{Observation}
\newtheorem{fact}[standardglobalcounter]{Fact}
\newtheorem{conjecture}[standardglobalcounter]{Conjecture}
\newtheorem{openquestion}{Open Question}
\newtheorem{openproblem}{Open Problem}
\newtheorem{problem}{Problem}
\theoremstyle{definition}
\newtheorem{property}[standardglobalcounter]{Property}
\newtheorem{definition}[standardglobalcounter]{Definition}
\newtheorem{claim}[standardglobalcounter]{Claim}
\theoremstyle{remark}
\newtheorem{remark}[standardglobalcounter]{Remark}
\newtheorem{example}[standardglobalcounter]{Example}
\newtheoremstyle{meta}{3pt}{3pt}{\scshape \small }{}{\scshape \small }{:}{ }{}
\theoremstyle{meta}
\newtheorem{meta}{Meta comment}
\newtheoremstyle{questions}{3pt}{3pt}{\sffamily \slshape}{}{\bfseries \sffamily \slshape}{:}{ }{}
\theoremstyle{questions}
\newtheorem{questions}{Open questions}
\spnewtheorem*{proofsketch}{Proof sketch}{\itshape}{\rmfamily}
\spnewtheorem{observation}{Observation}{\bfseries}{\itshape}
\spnewtheorem{fact}{Fact}{\bfseries}{\itshape}
\theoremstyle{acmplain}
\newtheorem{theorem}{Theorem}[section]        
\newtheorem{observation}[theorem]{Observation}
\newtheorem{fact}[theorem]{Fact}
\newtheorem{claim}[theorem]{Claim}
\newtheorem{property}[theorem]{Property}
\newtheorem{subclaim}[theorem]{Subclaim}
\newtheorem{openquestion}{Open Question}
\newtheorem{openproblem}{Open Problem}
\theoremstyle{plain}
\newtheorem{observation}[theorem]{Observation}
\newtheorem{openproblem}[theorem]{Open Problem}
\theoremstyle{definition}
\newtheorem{property}[theorem]{Property}
\renewcommand{\refth}[1]{\expref{Theorem}{#1}}
\renewcommand{\reflem}[1]{\expref{Lemma}{#1}}
\renewcommand{\refdef}[1]{\expref{Definition}{#1}}
\renewcommand{\refobs}[1]{\expref{Observation}{#1}}
\renewcommand{\refsec}[1]{\expref{Section}{#1}}
\renewcommand{\refapp}[1]{\expref{Appendix}{#1}}
\renewcommand{\reffig}[1]{\expref{Figure}{#1}}
\theoremstyle{plain}    
\newtheorem{fact}[theorem]{Fact}
\newtheorem{observation}[theorem]{Observation}
\newtheorem{standardlocalcounter}{Dummy}[section]
\newtheorem{standardglobalcounter}{Dummy}
\theoremstyle{plain}    
\newtheorem{theorem}[standardlocalcounter]{Theorem}
\newtheorem{lemma}[standardlocalcounter]{Lemma}
\newtheorem{proposition}[standardlocalcounter]{Proposition}
\newtheorem{corollary}[standardlocalcounter]{Corollary}
\newtheorem{observation}[standardlocalcounter]{Observation}
\newtheorem{fact}[standardlocalcounter]{Fact}
\newtheorem{conjecturelocalcounter}[standardlocalcounter]{Conjecture}
\newtheorem{conjectureglobalcounter}[standardglobalcounter]{Conjecture}
\newtheorem{conjecture}[standardglobalcounter]{Conjecture}
\newtheorem{openquestion}[standardglobalcounter]{Open Question}
\newtheorem{openproblem}[standardglobalcounter]{Open Problem}
\newtheorem{problem}[standardglobalcounter]{Problem}
\newtheorem{question}[standardglobalcounter]{Question}
\theoremstyle{definition}
\newtheorem{property}[standardlocalcounter]{Property}
\newtheorem{definition}[standardlocalcounter]{Definition}
\newtheorem{claim}[standardlocalcounter]{Claim}
\newtheorem{subclaim}[standardlocalcounter]{Subclaim}
\newtheorem{algorithm}[standardlocalcounter]{Algorithm}
\theoremstyle{remark}
\newtheorem{remark}[standardlocalcounter]{Remark}
\newtheorem{example}[standardlocalcounter]{Example}
                          \or \boolean{detectedElsevier}
                          \or \boolean{detectedEasyChair}}
\newtheorem{standardlocalcounter}{Dummy}[section]
\newtheorem{standardglobalcounter}{Dummy}
\theoremstyle{plain}    
\newtheorem{theorem}[standardlocalcounter]{Theorem}
\newtheorem{lemma}[standardlocalcounter]{Lemma}
\newtheorem{proposition}[standardlocalcounter]{Proposition}
\newtheorem{corollary}[standardlocalcounter]{Corollary}
\newtheorem{observation}[standardlocalcounter]{Observation}
\newtheorem{conjecturelocalcounter}[standardlocalcounter]{Conjecture}
\newtheorem{conjectureglobalcounter}[standardglobalcounter]{Conjecture}
\newtheorem{conjecture}[standardglobalcounter]{Conjecture}
\newtheorem{openquestion}[standardglobalcounter]{Open Question}
\newtheorem{openproblem}[standardglobalcounter]{Open Problem}
\newtheorem{problem}[standardglobalcounter]{Problem}
\theoremstyle{definition}
\newtheorem{property}[standardlocalcounter]{Property}
\newtheorem{definition}[standardlocalcounter]{Definition}
\newtheorem{claim}[standardlocalcounter]{Claim}
\theoremstyle{remark}
\newtheorem{remark}[standardlocalcounter]{Remark}
\newtheorem{example}[standardlocalcounter]{Example}
\newtheorem{standardlocalcounter}{Dummy}[chapter]
\newtheorem{standardglobalcounter}{Dummy}
\theoremstyle{plain}    
\newtheorem{theorem}[standardlocalcounter]{Theorem}
\newtheorem{lemma}[standardlocalcounter]{Lemma}
\newtheorem{proposition}[standardlocalcounter]{Proposition}
\newtheorem{corollary}[standardlocalcounter]{Corollary}
\newtheorem{observation}[standardlocalcounter]{Observation}
\theoremstyle{definition}
\newtheorem{definition}[standardlocalcounter]{Definition}
\newtheorem{claim}[standardlocalcounter]{Claim}
\theoremstyle{remark}
\newtheoremstyle{meta}{3pt}{3pt}{\scshape \small }{}{\scshape \small }{:}{ }{}
\theoremstyle{meta}
\newtheoremstyle{questions}{3pt}{3pt}{\sffamily \slshape}{}{\bfseries \sffamily \slshape}{:}{ }{}
\theoremstyle{questions}
\or \boolean{detectedThesis} \or 
\or \boolean{detectedToC}    \or 
\or \boolean{detectedAAAI}   \or
\or \boolean{detectedSIAM}}
\def\SetTime{\hours=\time
\global\divide\hours by 60
\minutes=\hours
\multiply\minutes by 60
\advance\minutes by-\time
\global\multiply\minutes by-1 }
\def\now{\number\hours:\ifnum\minutes<10 0\fi\number\minutes}
\newcommand{\proofstd}{\pi}
\newcommand{\emptycl}{\bot}
\newcommand{\formf}{\ensuremath{F}}
\newcommand{\clb}{\ensuremath{B}}
\newcommand{\clc}{\ensuremath{C}}
\newcommand{\cld}{\ensuremath{D}}
\newcommand{\pcsp}{\polynomialsetformat{P}}
\newcommand{\pcpolyp}{P}
\newcommand{\pcpolyq}{Q}
\newcommand{\pcmonm}{m}
\newcommand{\setsofvarsorlit}[2]{\mathit{#1}({#2})}
\newcommand{\Setsofvarsorlit}[2]{\mathit{#1}\bigl({#2}\bigr)}
\newcommand{\vars}[1]{\setsofvarsorlit{Vars}{#1}}
\newcommand{\Vars}[1]{\Setsofvarsorlit{Vars}{#1}}
\newcommand{\restrict}[2]{{{#1}\!\!\upharpoonright_{#2}}}
\newcommand{\genericformsmall}[2]{\mathit{#1}( #2 )}
\newcommand{\mdegreestd}{d}
\newcommand{\mdegreeof}[2][]{\genericformsmall{Deg_{#1}}{#2}}
\newcommand{\idealarg}[1]{\langle {#1} \rangle}
\newcommand{\idealforterm}[1]{\idealarg{\msupport{#1}}}
\newcommand{\msupport}[1]{S(#1)}
\newcommand{\redideal}[2]{R_{#1}(#2)}
\newcommand{\ropname}{\tilde{R}}
\newcommand{\roperator}[1]{\ropname(#1)}
\newcommand{\razborovoperator}{\ropname}
\newcommand{\monomm}{m}
\newcommand{\erdosrenyi}{Erd\H{o}s-R\'{e}nyi\xspace}
\DeclareMathOperator{\clop}{Cl}
\renewcommand{\epsilon}{\varepsilon}
\newcommand{\gnd}{\mathbb{G}_{n, d}}
\newcommand{\gnparg}[2]{\mathbb{G}({#1}, {#2})}
\newcommand{\gnp}{\gnparg{n}{p}}
\newcommand{\gndn}{\mathbb{G}(n, d/n)}
\renewcommand{\F}{\mathbb{F}}
\newcommand{\multring}{\F[x_1, \ldots, x_n]/\langle x_1^2 - x_1, \ldots, x_n^2 - x_n\rangle}
\newcommand{\graphpropp}{P}
\newcommand{\booleanaxioms}{\{x_1^2-x_1, \ldots, x_n^2- x_\nvar\}}
\newcommand{\polyvars}{x_1, \ldots, x_\nvar}
\newcommand{\Desc}[1]{\mathrm{Desc}( #1 )}
\newcommand{\Col}[2]{\mathrm{Col}(#1, #2)}
\newcommand{\hoplength}{\tau}
\newcommand{\hopexample}{Q}
\newcommand{\Smap}[1]{S( #1 )}
\newcommand{\graphpath}{P}
\newcommand{\polyinu}{q}
\newcommand{\nbhstd}[2]{N_{#1}( #2 )}
\newcommand{\kcolourcons}{k}
\newcommand{\gcolourability}{c}
\newcommand{\indeterminate}{x}
\newcommand{\closure}[2][]{\ensuremath\clop_{{#1}}( #2 )}
\newcommand{\starg}{B}
\newcommand{\eventa}{\mathcal{A}}
\renewcommand{\mdegreestd}{D}
\renewcommand{\pcpolyp}{p}
\renewcommand{\pcpolyq}{q}
\renewcommand{\pcsp}{\mathcal{P}}
\newcommand{\idealS}[1]{\langle S(#1)\rangle}
\newcommand{\sparseparam}{a}
\newcommand{\eps}{\varepsilon}
\newcommand{\badset}{T}
\newcommand{\indic}{\mathbf{1}}
\newcommand{\acceptablename}{support}
\newcommand{\acceptable}[1]{$#1$\nobreakdash-\acceptablename\xspace}
\newcommand{\subsetcalP}{\mathcal{Q}}
\newcommand{\Wset}{W}
\newcommand{\Dset}{Z}
\newcommand{\satcondition}{satisfiability\xspace}
\newcommand{\satlemma}{satisfiability lemma\xspace}
\newcommand{\Satlemma}{Satisfiability lemma\xspace}
\newcommand{\monomialorder}{\prec}
\newcommand{\vertexorder}{\prec_{\mathsf{v}}}
  \numberwithin{equation}{section}
\theoremstyle{theorem}
\newenvironment{restatablelem}[1]
  {\innercustomthm}
  {\endinnercustomthm}
\newenvironment{restatablethm}[1]
  {\innercustomthmthm}
  {\endinnercustomthmthm}
\begin{document}

\title{Graph
  Colouring Is Hard on Average for \\
  Polynomial Calculus and
  Nullstellensatz\thanks{This is the full-length version of a paper with the same title
    that appeared in the
    \emph{Proceedings of the     
      64th Annual IEEE Symposium on Foundations of Computer Science (FOCS '23)}.}
  }
\author{Jonas Conneryd \\
\textsl{Lund University}\\
\textsl{University of Copenhagen}
\and
Susanna F. de Rezende\\
\textsl{Lund University}
\and
Jakob Nordström \\
\textsl{University of Copenhagen} \\
\textsl{Lund University}
\and
Shuo Pang \\
\textsl{University of Copenhagen} \\
\textsl{Lund University}
\and
Kilian Risse \\
\textsl{EPFL}
}

\date{\today}

\maketitle

\ifthenelse{\boolean{publisherversion}}
{}
{
  \thispagestyle{empty}

  \pagestyle{fancy}
  \fancyhead{}
  \fancyfoot{}
  \renewcommand{\headrulewidth}{0pt}
  \renewcommand{\footrulewidth}{0pt}
 
  \fancyhead[CE]{\slshape 
    GRAPH COLOURING IS HARD ON AVERAGE FOR POLYNOMIAL CALCULUS
    }
  \fancyhead[CO]{\slshape \nouppercase\leftmark}
  \fancyfoot[C]{\thepage}
  
  \setlength{\headheight}{13.6pt}
}

\begin{abstract}

  We prove that polynomial calculus (and hence also Nullstellensatz) over any field requires linear degree to refute that sparse random regular graphs, as well as sparse \erdosrenyi random graphs, are $3$-colourable. Using the known relation between size and degree
  for polynomial calculus proofs,
  this implies
  strongly exponential lower bounds on proof size.
\end{abstract}

\pagenumbering{arabic}
  
\providecommand{\kcolourability}{$k$\nobreakdash-coloura\-bility\xspace}
\providecommand{\kcolouring}{$k$\nobreakdash-colouring\xspace}
\providecommand{\kcolourable}{$k$\nobreakdash-colourable\xspace}

\section{Introduction}
\label{sec:intro}

Determining the \emph{chromatic number} of a graph~$G$, \ie
how many colours are needed for the vertices of~$G$ if no two vertices
connected by an edge should have the same colour,
is one of the
classic  21~problems shown \NP-complete in the seminal work of
Karp~\cite{Karp72Reducibility}.
This
\emph{graph colouring problem},
as it is also referred to, has been
extensively studied
since then,
but there are still major gaps in our understanding.

The currently best known approximation algorithm computes a graph colouring
within
at most a 
factor $\Bigoh{n(\log \log n)^2/(\log n)^3}$ of the chromatic
number~\cite{Halldorsson93StillBetter},
and it is known that approximating this number to within a
factor~$n^{1-\epsilon}$ is
\NP-hard~\cite{Zuckerman07LinearDegreeExtractors}.
Even under the promise that the graph is \mbox{$3$-colourable},
the most 
parsimonious 
algorithm with guaranteed polynomial running
time 
needs
$\Bigoh{n^{0.19996}}$ colours~\cite{KT17Coloring3Colorable}.
This is very far from the lower bounds that are known---it is
\NP-hard to 
\mbox{$(2k-1)$-colour} a \mbox{$k$-colourable} graph~\cite{BBKO21AlgebraicCSP},
but the question of whether colouring a
\mbox{$3$-colourable} graph with $6$~colours is \NP-hard remains
open~\cite{KO22Invitation}.
It is widely believed that any algorithm
that colours graphs optimally
has to run
in exponential time in the worst case, and the currently
fastest algorithm for $3$\nobreakdash-colouring
has time complexity $\Bigoh{1.3289^{n}}$
\cite{Beigel05ColoringFixed}.
A survey on various algorithms and techniques for so-called exact
algorithms for graph colouring can be found in~\cite{Husfeldt15Colouring}.

Graph colouring instances of practical interest might not exhibit such
exponential-time
behaviour, however, and in such a context it is relevant to
study algorithms without worst-case guarantees and examine how they
perform in practice.
To understand such algorithms from 
a computational complexity viewpoint,
it is natural to investigate bounded models of computation
that are strong enough to describe the reasoning 
performed
by the 
algorithms and to prove unconditional lower bounds
that hold in 
these models.

\subsection{Previous Work}
\label{sec:intro-previous}

Focusing on random graphs, 
McDiarmid~\cite{McDiarmid84Colouring}
developed a method for
determining
\kcolourability
that captures a range of algorithmic approaches.
Beame \etal~\cite{BCCM05RandomGraph} showed that
this method could in turn be simulated by the
resolution proof system~\cite{Blake37Thesis,DP60ComputingProcedure,DLL62MachineProgram,Robinson65Machine-oriented},
and
established
average-case exponential lower bounds for resolution proofs of
non-$k$-colourability for random graph instances sampled so as not to be
\kcolourable with exceedingly high probability.

Different algebraic approaches for \kcolourability
have been considered in
\cite{AT92Colorings,Lovasz94Stable,Matiyasevich74Criterion,Matiyasevich04Algebraic}.
Bayer~\cite{Bayer82Division} seems to have been the first to use Hilbert's Nullstellensatz to attack graph colouring.
Informally, the idea is to 
write the problem as a set of polynomial equations
$\setdescr{\pcpolyp_i(x_1,\ldots, x_n) = 0}{i \in [m]}$
in such a way that legal 
$k$-colourings 
correspond to
common roots of these polynomials.
Finding polynomials   $\pcpolyq_1, \ldots, \pcpolyq_m$ 
such that
$\sum_{i=1}^{m} \pcpolyq_i \pcpolyp_i = 1$
then      proves that the graph is not $k$-colourable.
This latter equality is referred to as a 
\introduceterm{Nullstellensatz certificate}
of non-colourability, and the \introduceterm{degree} of this
certificate is the largest degree of any polynomial  $\pcpolyq_i \pcpolyp_i$ in the sum.
Later papers based on Nullstellensatz and Gröbner bases, such as
\cite{DeLoera95Grobner,Mnuk01Representing,HW08Algebraic},
culminated in an award-winning sequence of works
\cite{DLMM08Hilbert,DLMO09ExpressingCombinatorial,DLMM11ComputingInfeasibility,DMPRRSSS15GraphColouring}
presenting algorithms with surprisingly good practical performance.

For quite some time, no strong lower bounds 
were known for these
algebraic methods
or the corresponding proof systems
\introduceterm{Nullstellensatz}~\cite{BIKPP94LowerBounds}
and
\introduceterm{polynomial calculus}
\cite{CEI96Groebner,ABRW02SpaceComplexity}.
On the contrary, the authors
of~\cite{DLMO09ExpressingCombinatorial} reported that essentially 
all benchmarks they studied turned out to have
Nullstellensatz certificates of
small constant degree.
The degree lower bound~$k+1$ for $k$~colours
in~\cite{DMPRRSSS15GraphColouring} remained the best known until
optimal, linear, degree lower bounds for polynomial calculus were
established in~\cite{LN17GraphColouring} using a reduction from
so-called functional pigeonhole principle
formulas~\cite{MN15GeneralizedMethodDegree}.  A more general reduction
framework was devised in~\cite{AO19ProofCplx} to obtain optimal degree
lower bounds also for the proof systems
\mbox{\emph{Sherali-Adams}~\cite{SA90Hierarchy}}
and
\emph{sums-of-squares}~\cite{Lasserre01Explicit,Parrilo00Thesis},
as well as weakly exponential size lower bounds for
\emph{Frege proofs}~\cite{CR79Relative,Reckhow75Thesis}
of bounded depth.

The lower bounds discussed in the previous paragraph are not quite
satisfactory,
in that it is not clear how much they actually tell us about the graph colouring
problem, as opposed to the hardness of the problems being reduced from.
In order to improve our understanding for a wider range of graph instances, 
it seems both natural and desirable to establish average-case
lower bounds for random graphs, just as for resolution
in~\cite{BCCM05RandomGraph}.
However,  this goal
has remained elusive for
almost two decades, as pointed out, e.g., in
\cite{MN15GeneralizedMethodDegree,
  LN17GraphColouring,Lauria2018,BN21ProofCplxSAT}. For sparse random
graphs, where the number of edges is linear in the number of vertices,
no superconstant degree lower bounds at all have been established for
algebraic or semialgebraic proof systems. On the contrary, it was
shown in~\cite{JKM19Lovasz}, improving on~\cite{CojaOghlan05Lovasz},
that degree\nobreakdash-$2$ sums\nobreakdash-of\nobreakdash-squares
refutes $k$\nobreakdash-colourability on random
$d$\nobreakdash-regular graphs \aas whenever ${d \geq 4k^2}$. 
For dense random graphs, the strongest
lower bound seems to be the recent logarithmic
degree
bound in the sums-of-squares proof system for Erd\H{o}s-R\'{e}nyi random graphs with
edge probability~$1/2$ and $k = n^{1/2 + \epsilon}$
colours~\cite{KM21StressFree}.
Since this result is for a problem encoding using inequalities, however,
it is not clear whether this has any implications for Nullstellensatz
or polynomial calculus over the reals (which are known to be polynomially simulated by sums-of-squares). And for other fields
nothing has been known for the latter two proof systems---not even
logarithmic lower bounds.

\subsection{Our Contribution}
\label{sec:intro-contrib}

In this work, we establish optimal,
linear, degree lower bounds
and exponential size lower bounds
for
polynomial calculus proofs of non-colourability of random graphs.

\begin{theorem}[informal]
  \label{th:main-theorem}
  For any $d\geq 6$,
  polynomial calculus (and hence also Nullstellensatz) requires \aas
  linear degree to refute that
  random $d$-regular graphs,
  as well as
  \erdosrenyi random graphs,
  are $3$-colourable. These degree lower bounds hold over any field,
  and also imply exponential lower bounds on proof size.
\end{theorem}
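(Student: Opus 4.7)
The plan is to establish the linear degree lower bound via Razborov's $\roperatortext$ method and then invoke the standard size-degree tradeoff for polynomial calculus to get the exponential size lower bound. I would use the natural algebraic encoding of $3$-colourability with variables $x_{v,c}$ for $v \in V(G)$ and $c \in \{1,2,3\}$, together with the vertex axioms $\sum_{c} x_{v,c} - 1 = 0$, edge axioms $x_{v,c} x_{u,c} = 0$ for every edge $\{u,v\}$ and every colour $c$, and Boolean axioms $x_{v,c}^2 - x_{v,c} = 0$. The target is an $\F$-linear operator $R$ acting on polynomials of degree at most $d = \Theta(n)$ such that $R(1) = 1$, $R$ kills the axioms up to the degree budget, and $\ker R$ is preserved under every single inference step of polynomial calculus. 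Any PC refutation of degree at most $d$ would then yield the contradiction $R(1) = 0$.

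First I would establish the structural properties of the random graph models. Standard first-moment arguments show that for $d \geq 6$ both the random $d$-regular graph and sparse \erdosrenyi graphs $\gndn$ are \aas not $3$-colourable. The crucial input from random graph theory will be a bounded-expansion property: every vertex subset $U$ of size at most $\delta n$ contains a linear-sized collection of vertices with only very few neighbours in $U$. I would use this expansion to define a closure operator $\clop(U)$ by iteratively absorbing vertices whose local constraint inside the growing set is already essentially determined, and then prove that $|\clop(U)| = \bigoh{|U|}$ together with the key property that every proper $3$-colouring of the boundary of $\clop(U)$ extends to a proper $3$-colouring of the subgraph induced by $\clop(U)$.

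With those ingredients in place, I would define $R$ by specifying, for each monomial $m$ of bounded degree, that $R(m)$ equals the expected value of $m$ under a carefully chosen probability distribution on proper $3$-colourings of $\clop(\vars{m})$. The hard part, and the main obstacle, is ensuring that these local distributions are mutually consistent: for a product $m_1 m_2$ the value $R(m_1 m_2)$ has to coincide with what one obtains by gluing together the distributions on $\clop(\vars{m_1})$ and $\clop(\vars{m_2})$. This forces a delicate \satlemma asserting that any proper $3$-colouring of the shared boundary of two small closures extends to both sides with matching marginals. Establishing this lemma is where the bounded-expansion property will be used most heavily, most likely via a switching or coupling argument that transforms one extension into another while preserving the joint distribution.

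Once $R$ is in place, verifying that it annihilates the axioms will be immediate: the vertex axioms vanish because $R$ comes from probability distributions over colourings; the edge axioms vanish because those distributions are supported on \emph{proper} colourings; and the Boolean axioms vanish because colour indicators take only values in $\{0,1\}$. Linearity together with the consistency property then shows that $\ker R$ is preserved by both polynomial calculus inference rules, ruling out refutations of degree below $\Omega(n)$. The exponential size bound $\exp(\Omega(n))$ for polynomial calculus then follows by the size-degree tradeoff of Impagliazzo, Pudl\'ak and Sgall. The argument will be field-independent because the coefficients defining $R$ are rational numbers arising combinatorially, so the same construction transfers to every field.
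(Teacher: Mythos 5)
Your high-level plan (Razborov's criterion plus the size--degree tradeoff, sparsity and closure for random graphs) points in the right direction, but the core construction you propose does not work for polynomial calculus, and the step you dismiss as ``immediate'' is exactly where the difficulty of this problem lives. If you define $R(m)$ as the expectation of $m$ under a distribution on proper $3$-colourings of $\clop(\vars{m})$, then $R$ is a linear functional into $\F$, and its kernel is a codimension-one subspace. Such a functional is a \emph{pseudo-expectation}, which suffices for the static systems (Nullstellensatz, Sherali--Adams, sums-of-squares), but it does not satisfy Razborov's third condition $R(x_i m) = R\bigl(x_i R(m)\bigr)$: for an expectation operator, $\Expop[p]=0$ does not imply $\Expop[x_i p]=0$ (take $p = x_i - \Expop[x_i]$). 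What is needed is an operator from polynomials to polynomials, namely reduction modulo a monomial-dependent ideal $\langle S(m)\rangle$ under an admissible monomial order, and the whole technical content of the proof is showing that these local reductions are mutually consistent (the reducibility lemma). Your ``gluing of marginals'' has no analogue of this step.

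Moreover, the missing idea that unlocks the reducibility lemma for colouring is not expansion alone. The obstruction, which defeated earlier attempts, is that partially colouring a closure propagates constraints along paths through the rest of the graph. The fix (from the large-girth paper this work builds on) is to fix a proper $c$-colouring of the graph, order the vertices by colour class, and define the closure to be downward closed under \emph{descendants} in this order and free of short hops and lassos; decreasing paths then have length at most $c$, which bounds the closure size and lets one substitute variables outside a closed set $W$ by constants or by variables of strictly smaller order, mapping $\langle U\rangle$ into $\langle W\rangle$ without disturbing leading monomials. Your closure, defined without any vertex order, cannot deliver this. Finally, your field-independence argument (``the coefficients are rational'') is also an artifact of the expectation viewpoint and would fail in positive characteristic; the actual construction is field-independent because ideal reduction is defined over any field.
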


We prove our lower bound for the standard encoding in proof
complexity, where 
binary
variables~$x_{v,i}$ indicate whether vertex~$v$ is
coloured with colour~$i$ or not.
It should be pointed out
that, just as the results
in~\cite{LN17GraphColouring}, 
our degree lower bounds 
also 
apply to the
\kcolourability
encoding
introduced in~\cite{Bayer82Division}
and used in computational algebra papers such as
\cite{DLMM08Hilbert,DLMO09ExpressingCombinatorial,DLMM11ComputingInfeasibility,DMPRRSSS15GraphColouring},
where a primitive $k$th root of unity is adjoined to the field
and different colours of a vertex~$v$
are encoded by a variable~$x_v$
taking different powers of this root of unity.

Our 
lower bound 
proofs crucially use
a new idea for proving degree lower bounds for colouring graphs with
large girth~\cite{RT22GraphsLargeGirth}.
After adapting this approach
from the root-of-unity encoding to the
Boolean indicator variable encoding, and replacing the proof in
terms of girth with
a strengthened argument using
carefully chosen properties of random graphs, we
obtain a
remarkably
clean and simple solution to the long-standing
open problem of showing average-case polynomial calculus degree lower
bounds for graph colouring.
We elaborate on our techniques in more detail next.

\subsection{Discussion of Proof Techniques}
\label{sec:intro-techniques}

In most works on algebraic and semialgebraic proof systems such as
Nullstellensatz, polynomial calculus,
Sherali-Adams, and sums-of-squares,
the focus has been on proving upper and lower bounds on the
degree
of
proofs. Even when proof size is the measure of interest, almost all
size lower bounds have been established via degree lower bounds
combined with general results saying that for all of the above proof
systems except Nullstellensatz strong enough lower bounds on degree
imply lower bounds on size~\cite{IPS99LowerBounds,AH19SizeDegree}.

At a high level, the techniques for proving degree lower bounds for the
different proof systems have a
fairly   
similar flavour.
For the static proof systems, \ie Nullstellensatz, Sherali-Adams, and
sums-of-squares, 
it is enough to
show that the dual
problem   is feasible and thus rule
out
low-degree
proofs.
In more detail, for Nullstellensatz, one constructs a
\emph{design}~\cite{Buss98LowerBoundsNS}, which is a linear functional
mapping low-degree monomials to
elements in   the underlying field. This functional should map low-degree monomials multiplied
by any input polynomial~$\pcpolyp_i$ 
to~$0$, but should map $1$ to a non-zero field element.
If such a functional can be found, it is clear that there cannot exist
any low-degree Nullstellensatz certificate
$\sum_{i=1}^{m} \pcpolyq_i \pcpolyp_i = 1$ of unsatisfiability, as the design would map the left-hand side of the equation to zero
but the right-hand side to non-zero.
For Sherali-Adams, the analogous functional furthermore has to map
any low-degree monomials to non-negative numbers,
and for sums-of-squares this should also hold for
squares of low-degree polynomials.
Such a
\emph{pseudo-expectation} can be viewed as
an expectation over a
fake
probability
distribution
supported on
satisfying assignments to the
problem, which
is indistinguishable
from a true distribution for an adversary using only low-degree
polynomials.

Polynomial calculus is different from
these proof systems
in that
it does not present the certificate of unsatisfiability as a static
object, but instead,
given a set of polynomials $\mathcal{P}$,
dynamically derives new polynomials in the ideal
generated by~$\mathcal{P}$.
The derivation ends
when it reaches the polynomial~$1$, \ie the multiplicative identity in
the field, showing that there is no solution.
The most common way to prove degree lower bounds is to design a
\emph{pseudo-reduction operator}~\cite{Razborov98LowerBound}, which maps all
low-degree polynomials derived from $\mathcal{P}$ to~$0$ but sends $1$
to~$1$,  and which is indistinguishable from a true ideal reduction
operator if one is limited to reasoning with low-degree
polynomials. This means that for bounded-degree polynomial calculus derivations
it seems like
the set of input polynomials are consistent.

Following the method in~\cite{AR03LowerBounds},
a pseudo-reduction operator~$\ropname$ can be constructed by 
defining it on low-degree monomials and extending
it
to
polynomials by linearity.
For every monomial~$\monomm$, we
identify 
a set of related input
polynomials~$\msupport{\monomm}$,
let $\idealforterm{\monomm}$ be the ideal generated by these
polynomials,
and define
$
\roperator{\monomm}
=
\redideal{\idealforterm{\monomm}}{\monomm}
$
to be the reduction of~$\monomm$ modulo the ideal~$\idealforterm{\monomm}$.
Intuitively, we think of $\msupport{\monomm}$ as the 
(satisfiable) subset of polynomials that
might possibly have been used in a low-degree derivation of~$\monomm$,
but since the constant monomial~$1$ is not derivable in low degree it gets an
empty associated set of polynomials, 
meaning that
$
\roperator{1}
=
\redideal{\idealforterm{1}}{1}
=
1
$.
In order for~$\ropname$ to look like a real reduction operator,
we need to show 
that 
for polynomials $\pcpolyp$ and~$\pcpolyp'$
of not too high degree
it holds that   $
\roperator{\pcpolyp + \pcpolyp'}
=
\roperator{\pcpolyp} +
\roperator{\pcpolyp'}
$
and
$
\roperator{\pcpolyp \cdot \roperator{\pcpolyp'}}
=
\roperator{\pcpolyp \cdot \pcpolyp'}
$.
The first equality is immediate, since $\ropname$ is defined to be a linear operator,
but the second equality
is more problematic.
Since the polynomials~$\pcpolyp$ and~$\pcpolyp'$ will be reduced
modulo different ideals---in fact, this will be the case even for
different monomials within the same polynomial---a priori there is no
reason why $\ropname$~should
commute with multiplication.

Proving that a pseudo-reduction operator $\razborovoperator$ behaves
like
an actual
reduction operator
for low-degree polynomials is typically the most challenging technical
step in the lower bound proof.
Very roughly, the proof method
in~\cite{AR03LowerBounds} goes as follows. Suppose that 
$\monomm$ and~$\monomm'$ are monomials with associated polynomial sets
$\msupport{\monomm}$
and~$\msupport{\monomm'}$, respectively.
Using expansion properties
of the constraint-variable incidence graph for the
input polynomials,
we argue that the true reduction operator will not change if
we reduce both monomials modulo the larger ideal
$
\idealarg{\msupport{\monomm} \union
  \msupport{\monomm'}}
$
generated by the union of their associated sets of polynomials.
This implies that we have
$
\roperator{\monomm'}
=
\redideal{\idealforterm{\monomm'}}{\monomm'}
=
\redideal{\idealarg{
    \msupport{\monomm} \union
    \msupport{\monomm'} 
  }}{\monomm'}
$
and
$
\roperator{\monomm \cdot \monomm'}
=
\redideal{\idealarg{
    \msupport{\monomm} \union
    \msupport{\monomm'} 
  }}{\monomm \cdot \monomm'}
$,
from which it follows that
$
\roperator{\monomm \cdot \roperator{\monomm'}}
=
\roperator{\monomm \cdot \monomm'}
$
holds, just like for reduction modulo an actual ideal.
To prove that expanding the ideals does not change the reduction operator
is a delicate balancing act, though, since the
ideals will need to be large enough to guarantee non-trivial reduction,
but at the same time small enough so that different ideals can be
``patched together'' with only local adjustments.

All previous attempts to apply this lower bound strategy to the graph
colouring problem have failed. For other polynomial calculus lower
bounds it has been possible to limit the interaction between different
polynomials in the input. For graph colouring, however, applying the
reduction operator intuitively corresponds to partial colourings of
subsets of vertices, and it has not been known how to avoid that
locally assigned colours propagate new colouring constraints through
the rest of the graph.  In technical language, what is needed is a way
to order the vertices in the graph so that there will be no long
ordered paths of vertices along which colouring constraints can
spread.  It has seemed far from obvious how to construct such an
ordering, or even whether such an ordering should exist, and due to
this technical problem it has not been possible to join local ideal
reduction operators into a globally consistent pseudo-reduction
operator.

This
technical
problem was addressed in a recent
paper~\cite{RT22GraphsLargeGirth} by an ingenious, and in hindsight
surprisingly simple, idea.  The main insight is to consider a proper colouring
of the graph~$G$ with $\chi(G)$ colours,
and then order the vertices in each colour class consecutively. In
this way, order-decreasing paths are of length at most~$\chi(G)$,
and one can guarantee some form of locality.
Once this order is in place, the final challenge is to 
ensure that small cycles do not 
interfere when ``patching together'' reductions. 
In~\cite{RT22GraphsLargeGirth},
such conflicts are avoided
by ensuring that the graph should have high girth,
which
results
in a degree lower bound linear in the girth of the
graph. 
In terms of graph size, this cannot give better than logarithmic lower
bounds, however, 
since the girth is at most logarithmic in the number of vertices for
any graph with chromatic number larger \mbox{than $3$~\cite{Bollobas87Girth}.}

In our work, we
employ the same ordering as in~\cite{RT22GraphsLargeGirth}, but
instead of girth use the fact that random graphs are locally very
sparse. Once the necessary technical concepts are in place, the proof
becomes quite simple and elegant, which we view as an additional
strength of our result.

\subsection{Outline of This Paper}
The rest of this paper is organized as follows. In
\refsec{sec:prelims} we present some preliminaries and then, as a
warm-up, reprove the resolution width lower bound for the
colourability formula in \refsec{sec:resolution-lb}. After this, we
proceed to revisit the general framework to obtain polynomial calculus
lower bounds in \refsec{sec:techniques}. In \refsec{sec:closure} we
introduce the important notion of a closure and in
\refsec{sec:3-colouring-hard-merged} prove our main theorem. We
conclude with some final remarks and open problems in
\refsec{sec:conclusion}.

\section{Preliminaries}
\label{sec:prelims}

Let us start by briefly reviewing the necessary preliminaries from
proof complexity, graph theory, and algebra.  We use standard
asymptotic notation. In this paper $\log$ denotes the logarithm base
$2$, while $\ln$ denotes the natural logarithm.

For a field $\F$ we let $\F[x_1, \ldots, x_\nvars]$ denote the
polynomial ring over $\F$ in $\nvars$ variables and let a
\emph{monomial} denote a product of variables. We denote by $\vars{m}$
the \emph{variables of a monomial} $m$, that is, if
$m = \prod_{i\in I} x_i$, then $\vars{m} = \cup_{i\in I} x_i$ and
extend this notation to polynomials $p = \sum_m a_m m$ by
$\vars{p} = \cup_m \vars{m}$. For polynomials
$p_1, \ldots, p_m \in \F[x_1, \ldots, x_n]$ we let
$\langle p_1, \ldots, p_m \rangle$ denote the ideal generated by these
polynomials: $\langle p_1, \ldots, p_m \rangle$ contains all
polynomials of the form $\sum_{i =1}^m q_i p_i$, for
$q_i \in \F[x_1, \ldots, x_n]$.
For a polynomial~$\polyinu$ and a partial function $\rho$ mapping
variables to polynomials we let $\restrict{\polyinu}{\rho}$ denote the
polynomial obtained from $\polyinu$ by substituting every occurrence
of a variable $x_i$ in the domain of $\rho$ by $\rho(x_i)$.

\subsection{Proof Complexity} 
\label{sec:proof-complexity}
\emph{Polynomial calculus (PC)}~\cite{CEI96Groebner} is a proof system
that uses algebraic reasoning to deduce that a system $\pcsp$ of
polynomials over a field $\F$ involving the variables $\polyvars$ is
infeasible, \ie that the polynomials in~$\pcsp$ have no common
root. Polynomial calculus interprets $\pcsp$ as a set of generators of
an ideal and derives new polynomials in this ideal through two
derivation rules:
\begin{subequations}
\begin{align}
    \label{eq:pc-lin-combn}
  \textit{Linear combination: }
  &\frac{\pcpolyp \quad \pcpolyq}{a\pcpolyp + b\pcpolyq},\; a, b \in \F \, ;\\
    \label{eq:pc-mult}
  \textit{Multiplication: }
  & \frac{\pcpolyp}{\indeterminate_i \pcpolyp},\; \indeterminate_i \text{ any variable.}
\end{align}
\end{subequations}
A \emph{polynomial calculus derivation} $\proofstd$ of a
polynomial~$\pcpolyp$ starting from the set~$\pcsp$
is a sequence
of \mbox{polynomials  $(\pcpolyp_1, \ldots, \pcpolyp_\tau)$}, where
$\pcpolyp_\tau = \pcpolyp$ and each polynomial $\pcpolyp_i$
either is
in~$\pcsp$ or
is   obtained by applying one of the derivation rules
\refeq{eq:pc-lin-combn}-\refeq{eq:pc-mult}
to polynomials $\pcpolyp_j$ with $j < i$. A \emph{polynomial calculus
  refutation of $\pcsp$} is a derivation of the constant polynomial
$1$ from $\pcsp$. 
It is well-known 
that polynomial
calculus is sound and complete
when the system $\pcsp$ contains all Boolean axioms $\booleanaxioms$.
We often refer to $\pcsp$ as the set of \emph{axioms},
and we say that a subset of axioms $\subsetcalP\subseteq\pcsp$
is \emph{satisfiable} if the polynomials in $\subsetcalP$ have a 
common root.

The most common complexity measures
of polynomial calculus refutations are \emph{size} and \emph{degree}.
The \emph{size} of a polynomial $p$ is its number of monomials when
expanded into a linear combination of distinct monomials, and the
\emph{degree} of $\pcpolyp$ is the maximum degree among all
of its monomials. The size of a polynomial calculus refutation
$\proofstd$ is the sum of the sizes of the polynomials in~$\proofstd$,
and the degree of $\proofstd$ is the maximum degree among all
polynomials in~$\proofstd$. We follow the convention of not counting
applications of the Boolean axioms toward degree or size by tacitly
working over ${\multring}$, which only strengthens a lower bound on
either measure.
Polynomial calculus size and degree are connected through the
\emph{size-degree relation}~\cite{IPS99LowerBounds}: if $\pcsp$
consists of polynomials with initial degree $d$ and 
contains all Boolean axioms,
and if $\mdegreestd$ is
the minimal degree among all polynomial calculus refutations of
$\pcsp$, then every refutation of $\pcsp$ must have size
$\exp\big(\Bigomega{(D - d)^2/\nvars}\big)$.

The size-degree relation also applies to the stronger proof system
\emph{polynomial calculus resolution (PCR)}
\cite{ABRW02SpaceComplexity}, which is polynomial calculus where
additionally each variable $x_i$ appearing in~$\pcsp$ has a formal
negation $\overline{x}_i$, enforced by adding polynomials
$x_i + \overline{x}_i - 1$ to $\pcsp$. Polynomial calculus and PCR are
equivalent \wrt degree, since the map $\overline{x}_i \mapsto 1-x_i$
sends any PCR proof to a valid polynomial calculus proof of the same
degree. Therefore, to prove a lower bound on PCR size it suffices to
prove a lower bound on polynomial calculus degree, and in particular
all size lower bounds in this paper also apply to PCR. Finally, we
remark that lower bounds on polynomial calculus degree or size also
apply to the weaker \emph{Nullstellensatz} proof system mentioned in
\refsec{sec:intro-previous} and \refsec{sec:intro-contrib}.

\subsection{Graph Colouring and Pseudo-Reductions}

Given a graph $G$, we study the polynomial calculus degree required to
refute the system $\Col{G}{k}$ of polynomials
\begin{subequations}
  \begin{align}
    \label{eq:binary-encoding1}
    \sum_{i=1}^\kcolourcons x_{v, i} - 1
    & \qquad v \in V(G)
    &&[\text{every vertex is assigned a colour}]\\
    \label{eq:binary-encoding2}
    x_{v, i}x_{v, i'}
    & \qquad v \in V(G),\ i \neq i' \in [k]
    &&[\text{no vertex gets more than one colour}] \\
    \label{eq:binary-encoding3}
    x_{u, i}x_{v, i}
    & \qquad (u, v) \in E(G), \ i \in [k]
    &&[\text{no two adjacent vertices get the same colour}] \\
    \label{eq:binary-encoding4}
    x_{v, i}^2 - x_{v, i}
    & \qquad v \in V(G), \ i \in [k]
    &&[\text{Boolean axioms}]
  \end{align}   
\end{subequations}
whose common roots correspond precisely to proper
$k$\nobreakdash-colourings of $G$. We refer to axioms in
\refeq{eq:binary-encoding1} and \refeq{eq:binary-encoding2} as
\emph{vertex axioms} and to \refeq{eq:binary-encoding3} as \emph{edge
  axioms}. Note that the system $\Col{G}{k}$ of polynomials is not the
standard polynomial translation of the usual CNF formula (defined
in~\refsec{sec:resolution}) as \refeq{eq:binary-encoding1} does not
correspond to a single clause. We work with the above formulation for
the sake of exposition and our lower bound also applies to the
standard translation of the CNF formula.

If the field $\F$ we are working over contains, or can be extended to
contain, a primitive $k$th root of unity, then it is
known~\cite[Proposition 2.2]{LN17GraphColouring} that a polynomial
calculus degree lower bound for $\Col{G}{k}$ also applies to
\emph{Bayer's formulation}~\cite{Bayer82Division}
of~$k$\nobreakdash-colourability, where each colour corresponds to a
$k$th root of unity. This encoding has received considerable attention
in computational algebra~\cite{DLMM08Hilbert,
  DLMO09ExpressingCombinatorial,DLMM11ComputingInfeasibility,
  DMPRRSSS15GraphColouring,RT22GraphsLargeGirth}.

Our proof of \refthm{th:main-theorem} is based on the notion of a
\emph{pseudo-reduction} operator or \emph{$R$-operator}. The following
lemma is due to Razborov~\cite{Razborov98LowerBound}. 

\begin{definition}[Pseudo-reduction]
  \label{def:r-op}
  Let $D \in \N^+$ and $\pcsp$ be a set of polynomials over $\F[x_1, \ldots, x_\nvars]$. An $\F$-linear
  operator
  $\razborovoperator\colon \F[x_1, \ldots, x_n] \to \F[x_1, \ldots,
  x_n]$ 
  is a \emph{degree-$D$ pseudo-reduction for $\pcsp$} if
  \begin{enumerate}
  \item $\razborovoperator(1) = 1$, \label{eq:rprop1}\label{item:1}
    
  \item $\razborovoperator(\pcpolyp) = 0$ for every polynomial
    $\pcpolyp\in\pcsp$, and\label{eq:rprop2}
    \label{item:2}

  \item
    $\razborovoperator(\indeterminate_im) =
    \razborovoperator\bigl(\indeterminate_i
    \razborovoperator(m)\bigr)$ for any monomial $m$ of degree at most
    $\mdegreestd-1$ and any variable $\indeterminate_i$.
    \label{item:3}
  \end{enumerate}
\end{definition}

\begin{lemma}[\cite{Razborov98LowerBound}]
  \label{lem:r-operator}
  Let $D \in \N^+$ and $\pcsp$ be a set of polynomials over
  $\F[x_1, \ldots, x_\nvars]$. If there is a degree-$D$
  pseudo-reduction for $\pcsp$, then any polynomial calculus
  refutation of $\pcsp$ over $\F$ requires degree strictly greater
  than $D$.
\end{lemma}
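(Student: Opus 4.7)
The plan is to argue by contradiction. Assume there is a polynomial calculus refutation $\proofpi = (\pcpolyp_1, \ldots, \pcpolyp_\tau)$ of $\pcsp$ of degree at most $D$, so that $\pcpolyp_\tau = 1$. I would then show by induction on $i \in [\tau]$ that $\razborovoperator(\pcpolyp_i) = 0$. Applying this to the last line $\pcpolyp_\tau = 1$ contradicts property~(\ref{item:1}) of \refdef{def:r-op}, which says $\razborovoperator(1) = 1$, so no such refutation can exist.

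The base case is handled directly by property~(\ref{item:2}): any $\pcpolyp_i \in \pcsp$ is sent to zero by $\razborovoperator$. For the inductive step, the linear combination rule $\pcpolyp_i = a\pcpolyp_j + b\pcpolyp_k$ is immediate from the $\F$\nobreakdash-linearity of $\razborovoperator$ combined with the inductive hypothesis.

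The main obstacle is the multiplication rule $\pcpolyp_i = \indeterminate_\ell \pcpolyp_j$, and this is where the degree bound on the refutation plays an essential role. Since $\deg(\pcpolyp_i) \leq D$, the polynomial $\pcpolyp_j$ has degree at most $D-1$, so every monomial $m$ appearing in $\pcpolyp_j$ satisfies $\deg(m) \leq D-1$ and property~(\ref{item:3}) is applicable to it. Expanding $\pcpolyp_j = \sum_m c_m m$ and applying property~(\ref{item:3}) monomial by monomial together with linearity gives
\[
\razborovoperator(\indeterminate_\ell \pcpolyp_j)
= \sum_m c_m \razborovoperator(\indeterminate_\ell m)
= \sum_m c_m \razborovoperator\bigl(\indeterminate_\ell \razborovoperator(m)\bigr)
= \razborovoperator\bigl(\indeterminate_\ell \razborovoperator(\pcpolyp_j)\bigr),
\]
which vanishes by the inductive hypothesis $\razborovoperator(\pcpolyp_j) = 0$. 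The only non-routine ingredient is this interchange of variable multiplication with $\razborovoperator$, which is precisely what property~(\ref{item:3}) is designed to license in bounded degree; the rest of the argument is pure bookkeeping, and closing the induction completes the proof.
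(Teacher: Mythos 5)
Your proposal is correct and follows exactly the route the paper intends: the paper only sketches this proof in one sentence (apply $\razborovoperator$ to every line of a purported degree-$D$ refutation and induct), and your write-up supplies the standard details, with the key step being the use of property~(\ref{item:3}) monomial-by-monomial plus linearity to handle the multiplication rule.
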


The proof of \reflem{lem:r-operator} is straightforward: apply
$\razborovoperator$ to all polynomials in a purported polynomial
calculus refutation of~$\pcsp$ and conclude by induction that it is
impossible to reach contradiction in degree at most~$D$.

\subsection{Algebra Background}
\label{sec:algebra}

The definition of our pseudo-reduction operator requires some standard
notions from algebra.
A total well-order $\prec$ on the
monomials in~$\F[x_1,\ldots,x_n]$ is \emph{admissible} if
the following properties hold:
\begin{enumerate}
\item  For any monomial $m$ it holds that $1 \prec m$. 
\item For any monomials $m_1, m_2$ and $m$ such that~${m_1 \prec m_2}$, it holds
  that~$mm_1 \prec mm_2$.
\end{enumerate}
Note that this definition is the more standard definition of {admissible}
order (also known as monomial order) used in algebra, and differs from that
introduced in~\cite{Razborov98LowerBound}, and used subsequently 
in~\cite{AR01LowerBounds,MN15GeneralizedMethodDegree},
since it is defined over any monomial 
in~$\F[x_1,\ldots,x_n]$ and not only multilinear monomials,
and it does not necessarily have to respect the total degree
of monomials.
We use the above definition due to its simplicity. Recall that we only
concern ourselves with multilinear polynomials in this article by tacitly
working over ${\multring}$.

We write $m_1\preceq m_2$ to denote that $m_1 \prec m_2$ or
$m_1 = m_2$. The \emph{leading monomial} of a polynomial~$p$ is the
largest monomial according to~$\prec$ appearing in~$p$ with a non-zero
coefficient. For an ideal~$I$ over~$\F[x_1,\ldots,x_n]$, a monomial~$m$ is \emph{reducible modulo $I$} if $m$ is the leading
monomial of some polynomial~$\pcpolyq\in I$; otherwise $m$ is
\emph{irreducible modulo $I$}. Under a total monomial order it is
well\nobreakdash-known that for any ideal $I$ and any
polynomial~$\pcpolyp$ there exists a unique representation
$\pcpolyp = \pcpolyq+r$ such that $\pcpolyq \in I$ and $r$ is a linear
combination of irreducible monomials modulo $I$. We call~$r$ the
\emph{reduction of $\pcpolyp$ modulo $I$}, and denote by $R_I$ the
\emph{reduction operator} which maps polynomials~$p$ to the reduction
of $p$ modulo $I$. It is straightforward to verify that the reduction
operator is linear over the vector space of polynomials.

We will use the following standard fact.

\begin{observation}
  \label{obs:ideal-contain-reduction}
  If $I_1$ and $I_2$ are ideals over $\F[x_1, \ldots, x_n]$ and
  $I_1 \subseteq I_2$, then for any monomials $m$ and $m'$ it holds
  that~${R_{I_2}\bigl(m'R_{I_1}(m)\bigr) = R_{I_2}(m'm)}$.
\end{observation}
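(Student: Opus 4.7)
The plan is to reduce the claim directly to the defining property of the reduction operator together with the fact that ideals absorb multiplication by arbitrary polynomials. By the uniqueness of reduction modulo an ideal, for the monomial $m$ and the ideal $I_1$ we can write
\begin{equation*}
m = q + R_{I_1}(m),
\end{equation*}
where $q \in I_1$ and $R_{I_1}(m)$ is a linear combination of monomials irreducible modulo $I_1$. This is the key identity that turns the claim into a routine calculation.

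Next, I would multiply both sides by the monomial $m'$, obtaining
\begin{equation*}
m' m = m' q + m' R_{I_1}(m).
\end{equation*}
Since $I_1$ is an ideal and $q \in I_1$, we have $m' q \in I_1$. Applying the hypothesis $I_1 \subseteq I_2$ yields $m' q \in I_2$, and therefore $R_{I_2}(m' q) = 0$ by the definition of the reduction operator.

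Finally, I would apply $R_{I_2}$ to both sides of the displayed equation and use the fact (already noted just before the observation) that $R_{I_2}$ is $\F$-linear on the space of polynomials. This gives
\begin{equation*}
R_{I_2}(m' m) = R_{I_2}(m' q) + R_{I_2}\bigl(m' R_{I_1}(m)\bigr) = R_{I_2}\bigl(m' R_{I_1}(m)\bigr),
\end{equation*}
which is the desired identity. There is no substantive obstacle here: the only things being used are the definition of $R_I$, the absorption property of ideals under multiplication, and linearity of $R_{I_2}$. The only subtlety worth highlighting explicitly is the passage from $m' q \in I_1$ to $m' q \in I_2$, which is where the hypothesis $I_1 \subseteq I_2$ enters, and which is what makes the two reduction operators compatible in the stated sense.
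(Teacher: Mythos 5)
Your proof is correct and follows essentially the same route as the paper's: decompose $m = q + R_{I_1}(m)$ with $q \in I_1$, multiply by $m'$, and use $I_1 \subseteq I_2$ to absorb $m'q$ into $I_2$. The only cosmetic difference is that you finish by invoking linearity of $R_{I_2}$ together with $R_{I_2}(m'q)=0$, whereas the paper exhibits the decomposition of $m'm$ modulo $I_2$ explicitly and appeals to uniqueness; both are valid and equivalent.
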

\begin{proof}
  Write $m = q_1 + r_1$ where $r_1$ is the reduction of $m$ modulo
  $I_1$ and hence $q_1 \in I_1$. Similarly, let
  $m'\cdot R_{I_1}(m) = m' \cdot r_1 = q_2 + r_2$ for $q_2 \in I_2$ and $r_2$ the
  reduction of $m' \cdot r_1$ modulo $I_2$. We have that
  $R_{I_2}\bigl(m'R_{I_1}(m)\bigr) = r_2$ and we now argue that 
  $R_{I_2}(m'm) = r_2$.  Note that $m'(m-q_1) = r_2 + q_2$
  and hence $m'\cdot m = r_2 + q_2 + m' \cdot q_1$. Since $I_1 \subseteq I_2$ and
  ideals are closed under multiplication and addition it holds that
  $q_2 + m_2q_1 \in I_2$. Moreover, since $r_2$ is irreducible modulo $I_2$,
  it follows that $R_{I_2}(m'm) = r_2$ by uniqueness of a
  reduction modulo an ideal. \end{proof}

Finally, let us record the following form of Hilbert's Nullstellensatz
on the Boolean cube.

\begin{lemma}
  \label{lem:implication-ideal}
  Let $g$ be a polynomial and $Q$ be a set of polynomials
  in~$\F[x_1, \ldots, x_n]$, and suppose that $Q$ contains all the
  Boolean axioms. Then it holds that $g$ vanishes on all common roots
  of~$Q$ if and only if $g \in \langle Q \rangle$.
\end{lemma}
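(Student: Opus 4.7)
The easy direction is immediate: any $g = \sum_i q_i p_i$ with $p_i \in Q$ vanishes wherever all the $p_i$ do. For the other direction, the plan is to exploit that the Boolean axioms $x_i^2 - x_i \in Q$ force every common root of $Q$ to lie in the finite Boolean cube $\{0,1\}^n$, and that every polynomial has a unique multilinear representative modulo $\langle x_i^2 - x_i : i \in [n]\rangle \subseteq \langle Q\rangle$.

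Concretely, for each $a \in \{0,1\}^n$, consider the indicator polynomial $I_a(x) = \prod_{i : a_i = 1} x_i \prod_{i : a_i = 0}(1-x_i)$, which satisfies $I_a(b) = \delta_{a,b}$ for all $b \in \{0,1\}^n$. From uniqueness of multilinear representatives one obtains two routine identities modulo the Boolean axioms: $h \equiv \sum_{a} h(a)\, I_a$ for any polynomial $h$, and $I_a I_b \equiv \delta_{a,b}\, I_a$; both follow by checking values on $\{0,1\}^n$ and using that multilinear polynomials agreeing as functions on the cube are equal. The central step is then to show that $I_a \in \langle Q\rangle$ for every $a \in \{0,1\}^n \setminus V$, where $V$ denotes the common root set of $Q$. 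Since $a \notin V$, some $p_a \in Q$ has $p_a(a) \neq 0$; expanding $p_a \equiv \sum_b p_a(b)\, I_b$ and applying orthogonality gives $p_a I_a \equiv p_a(a)\, I_a$ modulo the Boolean axioms, and dividing by the nonzero scalar $p_a(a) \in \F$ exhibits $I_a$ as an element of $\langle Q\rangle$.

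To conclude, assuming $g$ vanishes on $V$, the expansion $g \equiv \sum_{a \in \{0,1\}^n} g(a)\, I_a$ modulo the Boolean axioms kills every summand with $a \in V$ and leaves a sum of elements of $\langle Q\rangle$, placing $g$ itself in $\langle Q\rangle$. I do not anticipate a real obstacle: once the indicator basis is available, the argument is essentially bookkeeping. The only point meriting care is that reduction modulo the Boolean axioms preserves evaluation at points of $\{0,1\}^n$, so ``$p_a(a) \neq 0$'' is unambiguous whether read in $\F[x_1,\ldots,x_n]$ or in its multilinear quotient.
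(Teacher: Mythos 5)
Your proposal is correct and follows essentially the same route as the paper's proof in Appendix~\ref{sec:polyclaim}: both arguments expand $g$ in the basis of Boolean indicator polynomials and show that each indicator $I_a$ with $a$ outside the common root set lies in $\langle Q\rangle$ by multiplying a witnessing axiom $p_a$ with $p_a(a)\neq 0$ against $I_a$ and reducing modulo the Boolean axioms. The only cosmetic difference is that you route the key identity $p_a I_a \equiv p_a(a) I_a$ through the orthogonality relation $I_aI_b \equiv \delta_{a,b}I_a$, whereas the paper checks it directly by comparing values on $\{0,1\}^n$.
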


Note that the interesting direction of \reflem{lem:implication-ideal}
immediately follows from the implicational completeness of the
Nullstellensatz proof system. For the convenience of the reader we
provide a self-contained proof in \refapp{sec:polyclaim}.

\subsection{Graph Theory}
\label{sec:graph-theory}
We consider graphs $G=(V,E)$ that are finite and undirected, and
contain no self-loops or multi-edges.
Given a vertex set $U \subseteq V$, the \emph{neighbourhood of $U$} in
$G$ is~${N(U) = \{v \in V \mid \exists u \in U\colon (u, v) \in E\}}$,
and for a second set~$W \subseteq V$ we let the \emph{neighbourhood of
  $U$ in~$W$} be denoted by $\nbhstd{W}{U} = N(U) \intersection W$.
The set of edges between vertices in~$U$ is denoted by~$E(U)$ and 
we let $G[U]$ denote the subgraph induced by $U$ in $G$, that is,
$G[U] = (U, E(U))$. A graph is said to be \emph{$d$-regular} if all
vertices are of degree $d$. Note that a graph on $n$ vertices can be
$d$-regular only if $d < n$ and $dn$ is even.

For an edge $e = (u, v)\in E$ the \emph{contraction} of $G$ \wrt $e$
is the graph obtained from $G$ by identifying the vertices $u$ and $v$
as a single, new, vertex $v_e$ and adding an edge between a vertex
$w \in V \setminus \set{u,v}$ and $v_e$ if and only if there is an
edge between $w$ and at least one of $u$ or $v$ in $G$. For a set of
edges $S \subseteq E$ we let the \emph{contraction of $G$ \wrt $S$} be
the graph obtained from $G$ by contracting the edges in $S$ one at a
time, in any order.

A graph is said to be \emph{$k$-colourable} if there is a
mapping~$\chi\colon V \to [k]$ satisfying~$\chi(u) \neq \chi(v)$ for
all edges ${(u, v)\in E}$, in which case we refer to $\chi$ as a
\emph{proper $k$-colouring of $G$}. The \emph{chromatic number} of
$G$, denoted by $\chi(G)$, is the smallest integer $k$ such that $G$
is \emph{$\kcolourcons$-colourable}.

\begin{definition}[Sparsity]\label{def:sparse}
  A graph $G=(V,E)$ is
  \mbox{\emph{$(\ell, \epsilon)$\nobreakdash-sparse}} if every vertex
  set~$U\subseteq V$ of size at most $\ell$ satisfies
  $\setsize{E(U)} \leq (1+ \epsilon)\setsize{ U }$.
\end{definition}

The essential property of sparse graphs we use to obtain our main
result is that large subsets of vertices in such graphs are
$3$-colourable.

\begin{lemma}
  \label{lem:sparse-colourable}
  If a graph $G=(V, E)$ is $(\ell, \epsilon)$\nobreakdash-sparse for
  some $\epsilon < 1/2$, then it holds for every subset $U\subseteq V$
  of size at most $\ell$ that~$G[U]$ is $3$-colourable.
\end{lemma}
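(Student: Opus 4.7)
The plan is to proceed by strong induction on $\setsize{U}$, using the standard observation that a graph in which every subgraph contains a vertex of degree at most~$2$ is $3$\nobreakdash-colourable. The base case $\setsize{U} \leq 3$ is immediate, since any graph on at most $3$ vertices is trivially $3$\nobreakdash-colourable.

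For the inductive step, I first want to show that $G[U]$ contains a vertex of degree at most~$2$. By the $(\ell, \epsilon)$\nobreakdash-sparsity of $G$ applied to $U$ (which has size at most $\ell$), we have
\begin{equation*}
  \sum_{v \in U} \vdegree[{G[U]}]{v} \;=\; 2\setsize{E(U)} \;\leq\; 2(1+\epsilon)\setsize{U} \;<\; 3\setsize{U}\eqcomma
\end{equation*}
where the last inequality uses $\epsilon < 1/2$. By averaging, some vertex $v \in U$ satisfies $\vdegree[{G[U]}]{v} \leq 2$.

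Now let $U' = U \setminus \set{v}$. Since every subset of $U'$ of size at most $\ell$ is also a subset of $V$ of size at most $\ell$, the graph~$G$ remains $(\ell, \epsilon)$\nobreakdash-sparse when we restrict attention to subsets of $U'$, so the inductive hypothesis applies to~$U'$. Let $\chi'\colon U' \to [3]$ be a proper $3$\nobreakdash-colouring of $G[U']$. Because $v$ has at most~$2$ neighbours in $G[U]$, at least one colour in $\set{1,2,3}$ is not used by any neighbour of~$v$; assigning this colour to~$v$ extends $\chi'$ to a proper $3$\nobreakdash-colouring of $G[U]$, completing the induction.

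The argument is essentially a one-line averaging observation combined with the classical greedy $3$\nobreakdash-colouring inductive argument; the only point that needs a moment's thought is that the sparsity property is hereditary on induced subgraphs (which is immediate from the definition), so no real obstacle is expected.
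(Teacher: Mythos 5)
Your proof is correct and follows essentially the same route as the paper's: both find a vertex of degree at most $2$ in $G[U]$ via the averaging bound $2\setsize{E(U)} \leq 2(1+\epsilon)\setsize{U} < 3\setsize{U}$, delete it, apply the inductive hypothesis, and greedily extend the colouring. The only differences are cosmetic (base case $\setsize{U}\leq 3$ versus $\setsize{U}=1$, and your explicit remark that sparsity is hereditary, which the paper leaves implicit).
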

  
\begin{proof}
  By induction on $\setsize{U}$. The base case $\setsize{U} = 1$ is
  immediate. For the inductive step we may assume that the claim holds
  for sets of size at most $s-1$. Consider a set~$U\subseteq V$ of
  size $s \leq \ell$. The average degree of a vertex in~$G[U]$
  is~$2\setsize{E(U)}/s$, which is at most $2(1+\epsilon) < 3$ by the
  assumption on sparsity. Hence, since graph degrees are integers,
  there exists a vertex~$v\in U$ with degree at most 2 in~$G[U]$. The
  graph~$G{[U\setminus \set{v}]}$ is $3$-colourable by the inductive
  hypothesis, and every $3$-colouring witnessing this will leave at
  least one colour available to properly colour $v$. Hence every
  $3$-colouring of~${G[U\setminus \set{v}]}$ can be extended to
  $G[U]$, which concludes the proof.
\end{proof}

We consider two models of random graphs. One is the
\emph{Erd\H{o}s-R\'enyi random graph model $\gnp$}, which is the
distribution over graphs on $n$ vertices where each edge is included
independently with probability $p$. The other is the \emph{random
  $d$-regular graph model~$\gnd$}, which is the uniform distribution
over $d$-regular graphs on $n$ vertices. A graph property $P$ holds
\emph{\aas} for a random graph model
$\mathbb{G} = \set{\mathbb{G}_n}_{n=1}^\infty$ if~$\lim_{n \to \infty} \Pr_{G\sim\mathbb{G}_n} [\text{$G$ has property
  $\graphpropp$}] = 1$.

Random graphs are sparse with excellent parameters, as stated in the
following lemma, which is essentially due
to~\cite{Razborov17WidthSemialgebraic}. (See \refapp{sec:sparsity} for
a proof.)

\begin{lemma}[Sparsity lemma]
\label{lem:razborov-sparse}
  For $n,d\in \N^+$ and $\epsilon, \delta \in \R^+$ such that $\epsilon \delta = \omega(1/\log n)$, the following holds  \aas.
  \begin{enumerate}
  \item If $G$ is a graph sampled from~$\gndn$, then it is
    $((4d)^{- (1+\delta)(1+\epsilon)/\epsilon}n, \epsilon)$-sparse.
  \item For $d^2\le \epsilon \delta \log n$, if $G$ 
    is a graph sampled from~$\gnd$, then it is
      $((8d)^{- (1+\delta)(1+\epsilon)/\epsilon}n, \epsilon)$-sparse.
  \end{enumerate}
\end{lemma}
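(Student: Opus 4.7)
The plan is to prove both parts by a first moment (union bound) computation: for each $k \le \ell$, estimate the expected number of vertex subsets $U$ of size $k$ violating $\setsize{E(U)} \le (1+\epsilon)\setsize{U}$, and show the total is $o(1)$ as $n \to \infty$.

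For the \erdosrenyi case $\gndn$, fix $k \le \ell$. For a fixed $k$-subset $U$, the count $\setsize{E(U)}$ is binomial with parameters $\binom{k}{2}$ and $p = d/n$, so for the smallest integer $m$ exceeding $(1+\epsilon)k$ we have $\Pr[\setsize{E(U)} \ge m] \le \binom{\binom{k}{2}}{m}(d/n)^m$. Multiplying by $\binom{n}{k}$ to union-bound over choices of $U$, and applying the standard estimates $\binom{n}{k} \le (en/k)^k$ and $\binom{\binom{k}{2}}{m} \le (ek^2/(2m))^m$, a short algebraic simplification delivers a bound of the form
\begin{equation*}
\Pr[\text{some $k$-set violates sparsity}] \le \bigl[B_\epsilon \cdot (k/n)^\epsilon \cdot d^{1+\epsilon}\bigr]^k,
\end{equation*}
where $B_\epsilon$ is bounded uniformly in $\epsilon \in (0,1)$. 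The declared $\ell = (4d)^{-(1+\delta)(1+\epsilon)/\epsilon}n$ is calibrated precisely so that $B_\epsilon \cdot (\ell/n)^\epsilon \cdot d^{1+\epsilon} \le \gamma$ for some $\gamma < 1$. I then split the sum over $k \in [1,\ell]$ at $k_0 = \log n$: the tail $k \ge k_0$ is a convergent geometric series of magnitude $\BIGOH{\gamma^{k_0}} = n^{-\Omega(1)}$, while each head term $k < k_0$ carries a saving $(k/n)^\epsilon \le n^{-\epsilon(1-o(1))}$, so the head sums to $\BIGOH{n^{-\epsilon/2}} = o(1)$ using $\epsilon\log n \to \infty$ (which follows from $\epsilon\delta\log n \to \infty$ assuming $\delta$ is bounded).

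For the random $d$-regular case $\gnd$, I would invoke the \emph{configuration model}: a uniformly random perfect matching on $dn$ labelled half-edges, grouped into $n$ buckets of size~$d$, defines a multigraph whose law, conditioned on being simple, is exactly~$\gnd$. Under the hypothesis $d^2 \le \epsilon\delta\log n$ the simplicity probability is $\Omega(e^{-d^2/4}) = n^{-o(1)}$, so expected-count estimates transfer from the configuration model to $\gnd$ at the cost of an absorbed subpolynomial factor. Counting the expected number of $m$-tuples of disjoint half-edge pairs inside a fixed $k$-set and multiplying by the probability that all such pairs occur in the random matching yields an expression structurally identical to the \erdosrenyi bound but with $d$ replaced by $2d$, because each edge of the multigraph corresponds to an unordered pair of half-edges; combined with the simplicity correction this upgrades the base $(4d)$ to $(8d)$.

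The main technical obstacle is the careful handling of the parameter regime in which $\epsilon$ and $\delta$ may decay with~$n$ subject only to $\epsilon\delta = \omega(1/\log n)$, so that neither tail of the sum over $k$ admits a trivial bound: the split point $k_0$ must be chosen so that the head exploits the polynomial $(k/n)^\epsilon$ saving and the tail the geometric factor $\gamma^k$, and the cross-over is itself $\Theta(\log n)$. Verifying that the explicit constants $(4d)$ and $(8d)$ are exactly what the calculation naturally delivers, in both random graph models and uniformly in the permitted decay of $\epsilon,\delta$, is the step I would expect to consume the bulk of the writing. Since essentially this computation was already carried out in~\cite{Razborov17WidthSemialgebraic}, my plan is to follow that argument and adapt it to deliver precisely the stated constants.
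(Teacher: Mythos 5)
Your proposal follows essentially the same route as the paper's proof in the appendix: a first-moment union bound with the standard estimates $\binom{n}{s}\le(en/s)^s$ and $\binom{N}{m}\le(eN/m)^m$ for the \erdosrenyi case, and the configuration model with an $e^{\Theta(d^2)}$ simplicity correction (absorbed using $d^2\le\epsilon\delta\log n$) for the regular case, where the paper stochastically dominates the edge count inside a $k$-set by a $\mathrm{Bin}(dk,\,k/(n-k))$ tail via the sequential matching process rather than counting tuples of half-edge pairs, arriving at the same $4d\mapsto 8d$ loss. One caveat: the paper needs neither the split at $k_0=\log n$ nor your added assumption that $\delta$ is bounded---after calibrating $\ell$ it bounds every summand by $\exp\bigl(-\Omega(\epsilon\delta\, k\ln(n/k))\bigr)$ and uses $k\ln(n/k)\ge\ln n+k-1$, so the full $\epsilon\delta$ saving (not just $\epsilon$) controls the small-$k$ terms; you should do likewise, since $\epsilon\delta=\omega(1/\log n)$ alone does not give $\epsilon\log n\to\infty$.
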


Finally, we need some bounds on the chromatic number of graphs sampled
from $\gndn$ or $\gnd$, where, in particular, the lower bounds ensure
that~$\Col{G}{k}$ is unsatisfiable for large enough~$d$.

\begin{lemma}[\cite{KPGW10Chromatic,CFRR02RandomRegular,AN05Chromatic,Luczak91Chromatic}]
  \label{lem:chromatic-number-random}
  For $n \in \N$, $d \leq n^{0.1}$ and a graph $G$ sampled from either
  $\gndn$ or $\gnd$ it holds \aas that the chromatic number $\chi(G)$
  is at most $2d/\log{d}$ and, if $d \geq 6$, then $\chi(G) \ge 4$.
\end{lemma}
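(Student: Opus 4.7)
The plan is to decompose the statement into the upper and lower bounds on~$\chi(G)$ and handle each separately by citing the appropriate results from the random graph literature listed alongside the lemma; no new combinatorial content is required.

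For the upper bound~$\chi(G) \leq 2d/\log d$ in the Erdős--Rényi model~$\gndn$, I would invoke the classical work of Łuczak~\cite{Luczak91Chromatic}, refined by Achlioptas and Naor~\cite{AN05Chromatic}, which shows that a.a.s.\ $\chi(G) = (1+o(1))\,d/(2\ln d)$ in the regime $d \to \infty$ with $d = o(n)$. After converting to base-$2$ logarithms and absorbing the~$o(1)$ term, this lies comfortably below~$2d/\log d$ for all large~$d$; in the remaining regime of bounded~$d$, the two-point concentration result of~\cite{AN05Chromatic} gives an explicit upper bound on~$\chi(G)$ which one can check by hand to be at most~$2d/\log d$ on a case-by-case basis (the bound is loose enough, e.g., $\approx 4.64$ already at $d = 6$). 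For the regular model~$\gnd$, the analogous upper bound follows from Cooper, Frieze, Reed and Riordan~\cite{CFRR02RandomRegular}, which yields the same asymptotic form valid throughout the regime~$d \leq n^{0.1}$.

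For the lower bound~$\chi(G) \geq 4$ under the assumption $d \geq 6$, I would separately handle the two models. In the regular model~$\gnd$, I would cite Kemkes, Pérez-Giménez and Wormald~\cite{KPGW10Chromatic}, who proved that a random $d$-regular graph with $d \geq 6$ is a.a.s.\ not $3$-colourable, so $\chi(G) \geq 4$. In the Erdős--Rényi model, the $3$-colourability threshold of $\gnparg{n}{d/n}$ is known to lie strictly below~$5$ (the upper bound on the threshold provided by~\cite{AN05Chromatic} is approximately~$4.69$), so for any $d \geq 6$ the graph $\gndn$ is a.a.s.\ not $3$-colourable.

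The main obstacle, such as it is, is bookkeeping: tracking the precise asymptotic statements of each cited paper, matching constants and logarithm bases, and verifying that the full range $d \leq n^{0.1}$ is covered by each theorem (in particular that the small-$d$ and growing-$d$ regimes are both addressed). Once the right references are collected and their ranges of applicability are confirmed, the lemma assembles directly.
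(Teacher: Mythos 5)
Your proposal matches what the paper does: Lemma~\ref{lem:chromatic-number-random} is stated purely as a citation of \cite{KPGW10Chromatic,CFRR02RandomRegular,AN05Chromatic,Luczak91Chromatic}, with no proof given in the paper, and your decomposition (upper bound from Łuczak/Achlioptas--Naor and Cooper--Frieze--Reed--Riordan, lower bound from the first-moment/threshold results for $\gndn$ and from Kemkes--Pérez-Giménez--Wormald for $\gnd$) is exactly the intended assembly of those references. The bookkeeping you flag for small constant $d$ is indeed the only delicate point, and it is not addressed in the paper either.
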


\section{A Simple Resolution Lower Bound for 4-Colourability on Sparse
  Graphs}
\label{sec:resolution-lb}

In this section we reprove the main result of Beame et
al.~\cite{BCCM05RandomGraph}, namely that resolution requires large
width to refute the claim that random graphs have small chromatic
number. We hope that the exposition below may serve as a gentle
introduction to the polynomial calculus lower bounds that will follow
later, which build on similar albeit slightly more complicated
concepts. Readers only interested in the polynomial calculus lower
bounds may safely skip this section.

For expert readers let us remark that while Beame et
al.~\cite{BCCM05RandomGraph} build on the width lower bound
methodology introduced by Ben-Sasson and
Wigderson~\cite{BW01ShortProofs}, we rely on the game characterization
due to Pudlák~\cite{Pudlak00ProofsAsGames}. We recover the bounds of
Beame et al.~precisely for $k \geq 4$, but for $k=3$ we incur a slight
loss in parameters.

In the following we recollect some standard notions to then prove the
resolution width lower bound in \refsec{sec:res-lb}.

\subsection{Graph Colouring in CNF and the Resolution Proof System}
\label{sec:resolution}

A \emph{clause} is a disjunction over a set of \emph{literals}
$\bigvee_{i \in S}\ell_i$, where every literal $\ell_i$ is either a
Boolean variable $x$ or the negation $\olnot x$ thereof.  The
\emph{width} of a clause is the number of literals in it. A
\introduceterm{CNF formula}
$F = \clc_1 \land \formuladots \land \clc_m$ is a conjunction of
clauses, and the \emph{width} of $F$ is the maximum width of any
clause $\clc_i$ in $F$.

A \introduceterm{resolution refutation} of a CNF formula~$F$ is as an
ordered sequence of clauses
$\proofstd = (\cld_1, \dotsc, \cld_{\tau})$ such that
$\cld_{\tau} = \emptycl$ is the empty clause and each~$\cld_i$ either
occurs in $\formf$ or is derived from clauses $\cld_{j_1}$ and
$\cld_{j_2}$, with $j_1 < i$ and $j_2 < i$, by the
\introduceterm{resolution rule}
\begin{equation}
\AxiomC{$\clb \lor x$}
\AxiomC{$\clc \lor \olnot{x}$}
\BinaryInfC{$\clb \lor \clc$}
\DisplayProof
\eqperiod
\end{equation}
The \introduceterm{width} of a resolution refutation
$\proofstd = (\cld_1, \dotsc, \cld_{\tau})$ is the maximum width of any
clause $\cld_i$ in $\proofstd$ and the \introduceterm{length} of
$\proofstd$ is~$\tau$. The \emph{size-width}
relation~\cite{BW01ShortProofs} relates the minimum width~$W$ required
to refute a CNF formula~$F$ to the minimum length of any resolution
refutation of $F$: if $F$ is of initial width~$w$ and defined over $n$
variables, then any resolution refutation of $F$ requires size
$\exp\bigl(\Omega\big((W - w)^2/n\big)\bigr)$.

As resolution operates over clauses, in contrast to polynomial
calculus that operates over polynomials, we need to define the
colourability formula as a CNF formula. In this section, for a graph
$G$ and integer $k$, we let $\Col{G}{k}$ denote the CNF formula
consisting of the clauses
\begin{subequations}\label{eq:cnf-encoding}
  \begin{align}\label{eq:cnf-encoding1}\bigvee_{i=1}^\kcolourcons x_{v, i}
    &, \qquad v \in V(G)
    &&[\text{every vertex is assigned a colour}]\\
      \label{eq:cnf-encoding2}
    \olnot{x}_{v, i} \lor \olnot{x}_{v, i'}
    &, \qquad v \in V(G),\ i \neq i'
    &&[\text{no vertex gets more than one colour}] \\
      \label{eq:cnf-encoding3}
    \olnot{x}_{u, i} \lor \olnot{x}_{v, i}
    &, \qquad (u, v) \in E(G), \ i \in [k].
    &&[\text{no two adjacent vertices get the same colour}]
  \end{align}   
\end{subequations}
Clearly, the CNF formula $\Col{G}{k}$ is satisfiable if and only if
$G$ is $k$-colourable.

\subsection{Resolution Lower Bounds}
\label{sec:res-lb}

For the sake of simplicity we prove the theorem below for the
$4$-colourability formula only. The theorem can be extended to
$3$-colourability using concepts from \refsec{sec:closure}.

\begin{theorem}
  \label{thm:res-lb-sparse}
  Let $G$ be a graph, and suppose that $\ell \in \N^+$ and $\eps > 0$
  are such that $G$ is $(\ell, 1/2-\eps)$-sparse. Then every
  resolution refutation of $\Col{G}{4}$ requires width at least
  $\ell/4$.
\end{theorem}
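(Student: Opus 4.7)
The plan is to prove this via Pudl\'ak's game-theoretic characterisation of resolution width~\cite{Pudlak00ProofsAsGames}: to show every refutation of a CNF formula $F$ has width at least $w$, it suffices to exhibit a non-empty family $\mathcal{F}$ of partial assignments to the variables of $F$ that (i)~is closed under restriction, (ii)~contains no assignment falsifying any axiom of $F$, and (iii)~has the \emph{extension property} that every $\alpha\in\mathcal{F}$ with $\setsize{\domainof{\alpha}}<w$ and every variable $y\notin\domainof{\alpha}$ admits a value $b\in\set{0,1}$ with $\alpha\cup\set{y\mapsto b}\in\mathcal{F}$. I would build such an $\mathcal{F}$ for $\Col{G}{4}$ with $w=\ell/4$.

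I would define $\mathcal{F}$ to consist of those partial assignments $\alpha$ whose set of touched vertices $W(\alpha)=\setdescr{v\in V}{x_{v,i}\in\domainof{\alpha}\text{ for some }i\in[4]}$ has size at most $\ell$ and for which there exists a proper $4$-colouring $\chi\colon W(\alpha)\to[4]$ of $G[W(\alpha)]$ consistent with $\alpha$ in the sense that $\alpha(x_{u,j})=1$ implies $\chi(u)=j$ and $\alpha(x_{u,j})=0$ implies $\chi(u)\neq j$. Property~(i) is immediate by restricting the witness colouring to the smaller touched set; property~(ii) holds because any axiom of $\Col{G}{4}$ fully instantiated by $\alpha$ involves only vertices in $W(\alpha)$ and is therefore satisfied by $\chi$, hence by $\alpha$.

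The heart of the argument is checking (iii), and this is where the sparsity hypothesis is used. Consider $\alpha\in\mathcal{F}$ with $\setsize{\domainof{\alpha}}<\ell/4$ and a query $x_{v,i}$: then $\setsize{W(\alpha)\cup\set{v}}\leq\ell/4\leq\ell$, so \reflem{lem:sparse-colourable} guarantees that $G[W(\alpha)\cup\set{v}]$ is $3$-colourable and in particular $4$-colourable. If $v\in W(\alpha)$, the delayer answers $b=1$ when $\chi(v)=i$ and $b=0$ otherwise, keeping $\chi$ as witness. When $v\notin W(\alpha)$, the delayer produces a proper $4$-colouring $\chi'$ of $G[W(\alpha)\cup\set{v}]$ consistent with $\alpha$ and answers $b=1$ iff $\chi'(v)=i$. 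If the existing witness $\chi$ leaves some colour unused on $N(v)\cap W(\alpha)$ we simply extend $\chi$ to $v$ with that colour; if all four colours already appear on $N(v)\cap W(\alpha)$ we exploit the spare colour coming from $3$-colourability of $G[W(\alpha)\cup\set{v}]$ to perform a Kempe-chain-style local swap on uncommitted vertices of $W(\alpha)$ (those $u$ with $\alpha(x_{u,j})\neq 1$ for every $j$), thereby freeing a colour for $v$ without violating any literal that $\alpha$ has pinned.

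The main obstacle I foresee is carrying out this local recolouring cleanly: one has to verify that the vertices whose colour is fixed by $\alpha$ are few enough and loosely enough interconnected, relative to the sparsity of $G$, that the swap always completes without conflict. Since $\setsize{\domainof{\alpha}}<\ell/4$ yields at most $\ell/4$ committed vertices while the relevant induced subgraph is $3$-colourable by sparsity, a counting argument combined with the flexibility afforded by the extra fourth colour should suffice. Once the extension property is secured, Pudl\'ak's characterisation immediately yields that every resolution refutation of $\Col{G}{4}$ requires width at least $\ell/4$.
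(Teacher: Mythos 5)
Your overall framing (Pudl\'ak's game, a family of partial assignments with the extension property) is the same as the paper's, but the family $\mathcal{F}$ you define does not satisfy the extension property, and the gap you flag at the end is not a technicality that ``should suffice'' --- it is exactly where the argument breaks. Concretely: let $v$ be a vertex with four pairwise non-adjacent neighbours $u_1,\dots,u_4$ (such a configuration easily fits inside an $(\ell,1/2-\eps)$-sparse graph: $5$ vertices and $4$ edges). The assignment $\alpha=\set{x_{u_1,1}{=}1,\,x_{u_2,2}{=}1,\,x_{u_3,3}{=}1,\,x_{u_4,4}{=}1}$ lies in your $\mathcal{F}$, since $\chi(u_j)=j$ is a proper $4$-colouring of $G[W(\alpha)]$ consistent with $\alpha$. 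Now the prover queries $x_{v,1}$. Every neighbour of $v$ in $W(\alpha)$ is \emph{committed} (pinned by a positive literal), all four colours appear on $N(v)\cap W(\alpha)$, and no proper $4$-colouring of $G[W(\alpha)\cup\set{v}]$ consistent with $\alpha$ exists --- so neither answer keeps you in $\mathcal{F}$. Your proposed Kempe-chain repair acts only on uncommitted vertices and therefore cannot help here, and the $3$-colourability of $G[W(\alpha)\cup\set{v}]$ from \reflem{lem:sparse-colourable} is irrelevant: the existence of \emph{some} proper colouring says nothing about the existence of one consistent with $\alpha$.

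The missing idea is to have the adversary maintain a colouring not on $W(\alpha)$ but on a slightly larger \emph{closed} superset of it --- the closure $\closure{W(\alpha)}$, defined so that every vertex outside it has at most one neighbour inside. This look-ahead is what prevents the trap above: a newly queried vertex outside the closure always has at least three free colours, so the witness colouring extends greedily (this is \reflem{lem:res-extend-coloring}, which also uses sparsity to peel off a low-degree vertex of the set being coloured). Sparsity is then needed a second time to show the closure is at most $4$ times larger than $W(\alpha)$ (\reflem{lem:res-closure-small}), which is where the factor $4$ in the width bound $\ell/4$ comes from; in your write-up that factor has no structural role, which is another symptom that the essential mechanism is absent. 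To salvage your proof, redefine $\mathcal{F}$ as the set of $\alpha$ admitting a proper $4$-colouring of $G[\closure{W(\alpha)}]$ consistent with $\alpha$, and verify the extension property using the two lemmas just cited.
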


Combining \reflem{lem:razborov-sparse} with the above theorem we
obtain the following corollaries for random graphs.

\begin{corollary}[\cite{BCCM05RandomGraph}]
  \label{cor:res-lb-gndn}
  For any $\eps > 0$, if $G$ is a graph sampled from $\gndn$, then
  \aas every resolution refutation of $\Col{G}{4}$ requires width
  $n/4(4d)^{3+\epsilon}$.
\end{corollary}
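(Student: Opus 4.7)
The plan is to deduce the corollary by feeding the random-graph sparsity estimate from \reflem{lem:razborov-sparse}(1) into \refth{thm:res-lb-sparse} after tuning parameters appropriately. Given $\eps>0$, I first pick auxiliary constants $\epsilon' \in (0,1/2)$ and $\delta > 0$ that satisfy both (i) the hypothesis $\epsilon'\delta = \omega(1/\log n)$ of the sparsity lemma, and (ii) the exponent bound $(1+\delta)(1+\epsilon')/\epsilon' \le 3 + \eps$. Since the function $(1+\epsilon')/\epsilon'$ tends to $3$ as $\epsilon' \nearrow 1/2$, any choice of $\epsilon'$ sufficiently close to (but strictly less than) $1/2$, combined with a sufficiently small $\delta$ that still satisfies $\epsilon'\delta = \omega(1/\log n)$ (for instance $\delta = 1/\log\log n$ with $\epsilon'$ a suitable fixed constant), will do.

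With these constants fixed, \reflem{lem:razborov-sparse}(1) gives that a graph $G$ drawn from $\gndn$ is \aas $(\ell, \epsilon')$-sparse with
\begin{equation*}
\ell \;=\; (4d)^{-(1+\delta)(1+\epsilon')/\epsilon'}\, n \;\ge\; (4d)^{-(3+\eps)}\, n \;=\; \frac{n}{(4d)^{3+\eps}} \eqperiod
\end{equation*}
Since $\epsilon' < 1/2$, this is in particular $(\ell, 1/2 - \eps'')$-sparsity for $\eps'' := 1/2 - \epsilon' > 0$, which is exactly the hypothesis required by \refth{thm:res-lb-sparse}. Applying that theorem, every resolution refutation of $\Col{G}{4}$ has width at least
\begin{equation*}
\frac{\ell}{4} \;\ge\; \frac{n}{4(4d)^{3+\eps}} \eqcomma
\end{equation*}
which is the claimed bound.

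The main thing to be careful about is that both \reflem{lem:razborov-sparse} and \refth{thm:res-lb-sparse} apply: the first requires $\epsilon'\delta = \omega(1/\log n)$, the second requires the sparsity parameter to be strictly below $1/2$, and we need the exponent $(1+\delta)(1+\epsilon')/\epsilon'$ not to exceed $3+\eps$. All three constraints are simultaneously satisfiable because $(1+\epsilon')/\epsilon'$ can be pushed arbitrarily close to $3$ from above by taking $\epsilon'$ close enough to $1/2$ while leaving enough slack to absorb the factor $(1+\delta)$ into the remaining budget $\eps$. There is no substantive obstacle here; once the parameters are selected, the corollary is immediate from chaining the two preceding results, and the \aas qualifier is inherited directly from the sparsity lemma.
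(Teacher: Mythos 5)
Your proposal is correct and is exactly the argument the paper intends: the paper gives no explicit proof of this corollary, merely stating that it follows by combining \reflem{lem:razborov-sparse} with \refth{thm:res-lb-sparse}, and your parameter choices ($\epsilon'$ slightly below $1/2$ so that $(1+\epsilon')/\epsilon'$ is just above $3$, with $\delta$ small enough to keep the exponent within $3+\eps$ while preserving $\epsilon'\delta=\omega(1/\log n)$) are the right way to fill in that combination.
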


While Beame et al.~\cite{BCCM05RandomGraph} do not consider random
regular graphs, it is not hard to see that their techniques can be
used to prove the following statement.

\begin{corollary}
  \label{cor:res-lb-gnd}
  For any $\eps > 0$, if $G$ is a graph sampled from $\gnd$ and
  $d^2 = \littleoh{\log n}$, then \aas every resolution refutation of
  $\Col{G}{4}$ requires width $n/4(8d)^{3+\epsilon}$.
\end{corollary}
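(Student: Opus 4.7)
The plan is to deduce the corollary by combining the sparsity lemma (\reflem{lem:razborov-sparse}, part~2) with the main width lower bound (\refth{thm:res-lb-sparse}), choosing the free parameters of the sparsity lemma so that its exponent comes out arbitrarily close to the desired value~$3$.

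First, I would fix the target parameter $\epsilon > 0$ from the corollary statement and choose auxiliary constants $\mu, \delta \in (0, 1/2)$ small enough that
\[
\frac{(1+\delta)(1+(1/2-\mu))}{1/2-\mu} \;\le\; 3 + \epsilon \eqperiod
\]
Since the left-hand side tends to~$3$ as $\mu, \delta \to 0$, such a choice is always possible. Setting the ``$\epsilon$'' of the sparsity lemma to $1/2 - \mu$ and its ``$\delta$'' to the chosen $\delta$, both quantities are positive constants, so $\epsilon \delta = \omega(1/\log n)$ trivially holds, and the hypothesis $d^2 \le \epsilon\delta \log n$ in part~2 of \reflem{lem:razborov-sparse} is satisfied for all sufficiently large~$n$ because the corollary assumes $d^2 = \littleoh{\log n}$.

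Under these conditions, part~2 of \reflem{lem:razborov-sparse} yields that a graph $G \sim \gnd$ is \aas $(\ell, 1/2 - \mu)$-sparse with
\[
\ell \;=\; (8d)^{-(1+\delta)(1+(1/2-\mu))/(1/2-\mu)}\, n \;\ge\; (8d)^{-(3+\epsilon)}\, n \eqperiod
\]
In particular, $G$ is $(\ell, 1/2 - \mu)$-sparse for some positive~$\mu$, so \refth{thm:res-lb-sparse} applies (note that its hypothesis only requires the sparsity slack to be strictly less than $1/2$). Applying the theorem gives a width lower bound of $\ell/4 \ge n/4(8d)^{3+\epsilon}$ for any resolution refutation of $\Col{G}{4}$, which is exactly the bound in the corollary.

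The step-by-step logic is routine; the only real content is picking $\mu, \delta$ to drive the exponent in the sparsity lemma down to $3 + \epsilon$, and observing that the constant-parameter choice is compatible with the ``$\epsilon\delta = \omega(1/\log n)$'' and ``$d^2 \le \epsilon\delta \log n$'' side conditions thanks to the assumption $d^2 = \littleoh{\log n}$. There is no substantive obstacle beyond this bookkeeping, since all the heavy lifting is already done in \reflem{lem:razborov-sparse} and \refth{thm:res-lb-sparse}.
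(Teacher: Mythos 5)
Your proposal is correct and is exactly the argument the paper intends (the paper simply states that the corollary follows by "combining \reflem{lem:razborov-sparse} with the above theorem" without spelling out the parameter choices). Your bookkeeping — taking the lemma's sparsity parameter to be $1/2-\mu$ with $\mu,\delta>0$ small constants so that the exponent $(1+\delta)(3/2-\mu)/(1/2-\mu)$ drops below $3+\epsilon$, and noting that $d^2=\littleoh{\log n}$ makes the side conditions hold for large $n$ — is precisely the right way to fill in the details.
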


We prove \refthm{thm:res-lb-sparse} using the prover-adversary game as
introduced by Pudl\'{a}k~\cite{Pudlak00ProofsAsGames}, which we
describe here adapted to the colouring formula. The width-$w$
prover-adversary game for colouring proceeds in rounds. In each round
the prover either queries or forgets the colouring of a vertex. In
response to the former, the adversary has to respond with a colouring
of the queried vertex. The prover has limited memory and may only
remember the partial colouring of up to $w$ vertices.  The prover wins
whenever the remembered partial colouring is improper. This game
characterises resolution refutation width while in our setting a more
precise statement is that the prover has a winning strategy in the
width-$w$ prover-adversary game for colouring if and only if there is
a resolution refutation of the colourability formula where every
clause in the refutation mentions at most $w$ vertices.

Ultimately we want to design a strategy for the adversary so that the
prover cannot win with limited memory. In order not to reach a partial 
colouring that is impossible to extend to the newly queried vertex, 
whenever the prover remembers a
partial colouring defined on a subset $U$ of the vertices, the
adversary maintains a consistent partial colouring defined
on a \emph{closure} of~$U$, as defined below.
Intuitively, a closure of~$U$ 
should be a slightly larger set that contains~$U$ and other vertices 
that, if not taken into account when colouring~$U$, 
might not be possible to properly colour later on. 

\begin{definition}[Closure]
  Let $G=(V,E)$ be a graph and let $U \subseteq V$. We say that $U$ is
  \emph{closed} if all vertices $v \in V\setminus U$ satisfy
  $\setsize{N_U(v)} \leq 1$. A \emph{closure} of $U$ is any minimal
  closed set that contains $U$.
\end{definition}

\begin{proposition}
  \label{clm:res-closure-single}
  Every set of vertices has a unique closure.
\end{proposition}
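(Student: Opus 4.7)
The plan is to show the stronger statement that the family of closed sets containing $U$ is closed under intersection, from which a unique minimum (and hence unique minimal) closed superset of $U$ falls out immediately.

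First I would observe that closed supersets of $U$ exist, since the full vertex set $V$ is vacuously closed (the condition quantifies over $v \in V \setminus V = \emptyset$). So the collection $\mathcal{F} = \{W \subseteq V : U \subseteq W \text{ and } W \text{ is closed}\}$ is non-empty, and it makes sense to form $W^* = \bigcap_{W \in \mathcal{F}} W$, which clearly contains $U$.

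Next I would prove the key lemma: the intersection of two closed sets $W_1, W_2 \supseteq U$ is again closed. For this, let $W = W_1 \cap W_2$ and pick any $v \in V \setminus W$. Then $v \notin W_1$ or $v \notin W_2$; by symmetry assume $v \notin W_1$. Since $W_1$ is closed, $|N_{W_1}(v)| \leq 1$, and because $W \subseteq W_1$ we have $N_W(v) \subseteq N_{W_1}(v)$, so $|N_W(v)| \leq 1$. By induction (or by noting that the same argument works for any finite intersection since $V$ is finite), $W^*$ itself lies in $\mathcal{F}$.

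Finally, $W^*$ is by construction contained in every element of $\mathcal{F}$, so it is the unique minimum element of $\mathcal{F}$; in particular no proper subset of $W^*$ is closed and contains $U$, so $W^*$ is a minimal closed superset of $U$, and any other minimal closed superset would have to contain $W^*$ and therefore equal it. This yields uniqueness. The only subtlety worth flagging is the case analysis when verifying closure of the intersection, but it is routine because the closedness condition only restricts vertices \emph{outside} the set, so passing to a smaller set can only shrink neighbourhoods.
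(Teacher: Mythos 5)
Your proof is correct and rests on the same observation as the paper's: the closedness condition only constrains vertices \emph{outside} the set, so neighbourhoods can only shrink when passing to an intersection, and closedness is preserved. The paper packages this as a contradiction argument starting from two distinct minimal closed sets $W_1, W_2$ and locating a witness vertex for the non-closedness of $W_1 \cap W_2$, whereas you prove intersection-closedness of the whole family directly and extract the unique minimum (also establishing existence via the vacuously closed set $V$, which the paper leaves implicit); the substance is identical.
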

\begin{proof}
  Suppose there are two distinct closures $W_1$ and $W_2$ of a set $U$. As
  both $W_1$ and $W_2$ contain $U$ it holds that
  $U \subseteq W_\cap = W_1 \cap W_2$. By minimality of a closure, the
  set $W_\cap$ is not a closure of $U$ and hence there is a vertex
  $v$ not in $W_\cap$ such that
  $\setsize{N_{W_\cap}(v)} \geq 2$. But this implies that the set
  $W_i$ satisfying $v \not \in W_i$ is not closed, which contradicts
  the initial assumption.
\end{proof}

In light of Proposition~\ref{clm:res-closure-single}, for a set $U$ we
write $\closure{U}$ to denote the unique closure of $U$. We now show
that if the underlying graph $G$ is sparse, then the closure of a set
$U$ is not much larger than $U$ itself.

\begin{lemma}
  \label{lem:res-closure-small}
  Let $G$ be an $(\ell, 1/2-\eps)$-sparse graph for $\ell \in \N^+$
  and $\eps > 0$. Then any set $U \subseteq V$ of size at most
  $\ell/4$ satisfies
  $\setsize{\closure{U}} \leq 4\setsize{U} \leq \ell$.
\end{lemma}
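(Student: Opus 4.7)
The plan is to build the closure of $U$ by an explicit greedy process and then use the sparsity assumption to bound how many vertices this process can add. Specifically, I would set $U_0 = U$ and, as long as $U_i$ is not closed, pick any vertex $v \notin U_i$ with $\setsize{\vneighbour[U_i]{v}} \geq 2$ and let $U_{i+1} = U_i \cup \set{v}$. The process terminates in a closed superset $U^*$ of $U$, and it is routine to verify using Proposition~\ref{clm:res-closure-single} that $U^* = \closure{U}$ (any closed set containing $U$ must contain every vertex that the process is ever forced to add).

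The key quantitative observation is that each step of this process adds one vertex and at least two new edges to the induced subgraph: the added vertex $v$ contributes at least two edges to $U_i$, and these edges are new because $v$ was not previously in the set. Hence after $t$ steps we have $\setsize{U_t} = \setsize{U} + t$ and $\setsize{E(U_t)} \geq 2t$.

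Now I argue by contradiction that the process cannot produce a set of size larger than $4\setsize{U}$. Since $\setsize{U} \leq \ell/4$, the value $4\setsize{U}$ is an integer bounded by $\ell$, so if the process ever reached that size we could consider the first index $j$ with $\setsize{U_j} = 4\setsize{U}$. At that point sparsity of $G$ applies to $U_j$, giving $\setsize{E(U_j)} \leq (3/2 - \eps) \cdot 4\setsize{U} = (6-4\eps)\setsize{U}$. On the other hand the edge-counting above yields $\setsize{E(U_j)} \geq 2(4\setsize{U} - \setsize{U}) = 6\setsize{U}$, which contradicts the previous inequality whenever $\eps > 0$ and $\setsize{U} \geq 1$. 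Consequently the process must terminate before reaching size $4\setsize{U}$, so $\setsize{\closure{U}} \leq 4\setsize{U} \leq \ell$.

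The only subtle point, and the one I would be most careful about, is to ensure that sparsity is applied only to sets of size at most $\ell$; the trick of stopping the process exactly when it hits the integer value $4\setsize{U} \leq \ell$ avoids any issue with rounding or with sparsity failing for larger sets. The two edge-case checks (the trivial case $\setsize{U}=0$, and the verification that $U^* = \closure{U}$) are immediate from the definitions.
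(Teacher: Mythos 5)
Your proof is correct and follows essentially the same argument as the paper: the same greedy augmentation process (add any outside vertex with at least two neighbours in the current set), the same edge-versus-vertex count ($\setsize{U_t} = \setsize{U} + t$ and $\setsize{E(U_t)} \geq 2t$), and the same contradiction with $(\ell, 1/2-\eps)$-sparsity at the moment the set would reach size $4\setsize{U} \leq \ell$. The extra care you take with the edge cases and with verifying that the process output is the closure is harmless and matches what the paper leaves implicit.
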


\begin{proof}
  Let us consider the following process: set $U_0 = U$ and
  $i=0$. While there is a vertex $v \in V\setminus U_i$ satisfying
  $\setsize{N_{U_i}(v)} \geq 2$, set $U_{i+1} = U_i \cup \set{v}$ and
  increment $i$.

  Suppose this process terminates after $t$ iterations. Clearly the
  final set $U_t$ contains $U$ and is closed. Furthermore it is a
  minimal set with these properties and is hence the closure of $U$.
  As we add at least $2$ edges to $E(U_i)$ in every iteration it holds
  that $\setsize{E(U_i)} \geq 2i$ and, as we add a single vertex in
  each iteration, we have $\setsize{U_i} = \setsize{U} + i$. Suppose $t \geq 3\setsize{U}$. In iteration $i = 3\setsize{U}$ it
  holds that $\setsize{E(U_i)} \geq 6\setsize{U}$, while
  $\setsize{U_{i}} = 4\setsize{U} \leq \ell$. Hence
  $\setsize{E(U_i)} \geq 3\setsize{U_i}/2$, which contradicts the
  assumption on sparsity. We may thus conclude that the closure of a
  set $U$ is of size at most $4\setsize{U}$, as claimed.
\end{proof}

In order to design a strategy for the adversary, we prove that it
is always possible to extend a colouring on a small closed
set of vertices to a slightly larger set.

\begin{lemma}
  \label{lem:res-extend-coloring}
  Let $G=(V,E)$ be an $(\ell, 1/2-\eps)$-\nobreakdash sparse graph,
  let $U \subseteq V$ be a closed set of size at most $\ell$ and
  suppose $\chi$ is a proper $4$-colouring of $G[U]$. Then $\chi$ can
  be extended to $G[W]$ for any set $W \supseteq U$ of size at most
  $\setsize{W} \leq \ell$.
\end{lemma}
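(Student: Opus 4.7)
The plan is to proceed by induction on $\setsize{W \setminus U}$, in each inductive step removing one carefully chosen vertex of $W \setminus U$ and then re-inserting it using a colour that is guaranteed to be free. The base case $\setsize{W \setminus U} = 0$ is immediate since then $W = U$ and $\chi$ itself is the required extension. For the inductive step, I would first locate a vertex $v \in W \setminus U$ with $\vdegree[G[W]]{v} \leq 3$; then the induction hypothesis applied to $W' = W \setminus \set{v}$ (which still contains $U$, has size at most $\ell$, and for which $U$ remains closed in $G$ since closedness is a property of the ambient graph and independent of $W$) would yield a proper $4$\nobreakdash-colouring $\chi'$ of $G[W']$ extending $\chi$. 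Since at most three of the four colours are then used by $\chi'$ on the neighbours of $v$ in $W$, at least one colour remains available to assign to $v$, giving the desired extension to $G[W]$.

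The main obstacle, and thus the heart of the argument, is showing the existence of such a low-degree vertex $v \in W \setminus U$. The plan is to combine closedness of $U$ with sparsity of $G$ by summing degrees in $G[W]$ over $v \in W \setminus U$:
\[
\sum_{v \in W \setminus U} \vdegree[G[W]]{v}
= 2 \setsize{E(W \setminus U)} + \setsize{E(U, W \setminus U)} \eqcomma
\]
where $E(U, W \setminus U)$ denotes the edges with one endpoint in each part. Closedness of $U$ forces $\setsize{E(U, W \setminus U)} \leq \setsize{W \setminus U}$, since every vertex outside $U$ has at most one neighbour in $U$, and sparsity applied to the set $W \setminus U$ (which has size at most $\ell$) yields $\setsize{E(W \setminus U)} \leq (3/2 - \eps) \setsize{W \setminus U}$. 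Summing, the total is bounded by $(4 - 2\eps) \setsize{W \setminus U} < 4 \setsize{W \setminus U}$, so some vertex in $W \setminus U$ has degree strictly less than four, that is, at most three, in $G[W]$.

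Beyond this counting step I do not expect any real difficulties: the notion of a closed set was tailored precisely to control the edges crossing between $U$ and its complement, while sparsity controls the density within $W \setminus U$, and the two hypotheses combine cleanly to provide exactly the budget needed for a greedy $4$\nobreakdash-colouring extension. The one subtle point to keep in mind when writing up the induction is that the statement is quantified over all $W$ containing $U$ with $\setsize{W} \leq \ell$, but the closed set $U$ and its $4$-colouring $\chi$ stay fixed throughout, which is what makes the passage from $W$ to $W' = W \setminus \set{v}$ legitimate.
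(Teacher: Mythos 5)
Your proposal is correct and follows essentially the same route as the paper: induction on $\setsize{W\setminus U}$, using sparsity to find a low-degree vertex of $W\setminus U$ and closedness to bound its neighbours in $U$, then greedily colouring it. The only cosmetic difference is that you locate the degree-at-most-$3$ vertex by summing degrees in $G[W]$ over $W\setminus U$ directly, whereas the paper first finds a vertex of degree at most $2$ in $G[W\setminus U]$ by averaging and then adds the single $U$-neighbour guaranteed by closedness.
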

\begin{proof}
  By induction on $s = \setsize{W\setminus U}$. The statement clearly
  holds for $s=0$. For the inductive step, let $v \in W\setminus U$ be
  such that $\setsize{N_{W\setminus U}(v)} \leq 2$, as guaranteed to
  exist by the assumption on sparsity since it holds that
  $\setsize{E(W\setminus U)}\leq (3/2 - \eps)|W\setminus U|$. By the
  inductive hypothesis there is an extension of $\chi$ to the set
  $W \setminus \set{v}$.
  
  As $U$ is closed, the vertex $v$ has at most one neighbour in
  $U$. Hence it holds that $\setsize{N_{W}(v)}\leq 3$ and we may thus
  extend the colouring to $v$.
\end{proof}

With Lemmas~\ref{lem:res-closure-small}
and~\ref{lem:res-extend-coloring} at hand we are ready to prove
\refthm{thm:res-lb-sparse}.

\begin{proof}[Proof of \refthm{thm:res-lb-sparse}]
  Let us describe a strategy for the adversary. At any point in the
  game the adversary maintains a partial $4$-colouring $\chi$ to the
  closure of the vertices $U$ the prover remembers. Whenever the
  prover queries a vertex in the closure the adversary responds
  accordingly. If a vertex $v$ outside the closure of $U$ is queried,
  then the adversary extends $\chi$ to $\closure{U \cup \set{v}}$ by
  virtue of \reflem{lem:res-extend-coloring} and responds
  accordingly. Finally, if the prover forgets the colouring of a
  vertex $u \in U$, then we shrink our closure to the closure of
  $U \setminus \set{u}$. Here we use the fact that by minimality of
  the closure it holds that
  $\closure{U} \supseteq \closure{U \setminus \set{u}}$.

  As, by \reflem{lem:res-closure-small}, the closure of a set $U$ is
  at most a factor $4$ larger than $U$ and, by
  \reflem{lem:res-extend-coloring}, we may maintain a valid
  $4$-colouring to the closure as long as the size of the closure is
  bounded by $\ell$, the prover cannot win the game if
  $w \leq \ell/4$. Therefore, every resolution refutation of the
  $4$-colourability formula defined over an
  $(\ell, 1/2 - \eps)$-sparse graph contains a clause that mentions at
  least $\ell/4$ vertices and hence has width at least $\ell/4$.
\end{proof}

Combining Theorems~\ref{cor:res-lb-gndn} and~\ref{cor:res-lb-gnd} with
the mentioned size-width relation we obtain optimal $\exp(\Omega(n))$
resolution size lower bounds for constant $d$.

\section{Polynomial Calculus
  Lower Bounds: The General Framework}
\label{sec:techniques}

In this section we provide a proof overview of \refth{th:main-theorem}
and then revisit the general framework, as introduced by Alekhnovich
and Razborov~\cite{AR03LowerBounds}, for obtaining polynomial calculus
lower bounds.

\subsection{Proof Overview}
\label{sec:proof-overview}

As outlined in the introduction, the construction of our
pseudo-reduction $\razborovoperator$ for the colouring formula follows
the general paradigm introduced by~\cite{AR03LowerBounds} which has
been used in several subsequent
papers~\cite{GL10Automatizability,GL10Optimality,MN15GeneralizedMethodDegree}.
The idea is that given an initial set of polynomials $\pcsp$, for
every monomial~$\pcmonm$ we identify a subset $\Smap{\pcmonm}$ of
polynomials in~$\pcsp$ that are in some sense relevant to~$\pcmonm$.
Then we define $\razborovoperator$ on the monomial $m$ as the
reduction modulo the ideal~$\idealS{\pcmonm}$ generated by the
polynomials, and extend $\razborovoperator$ linearly to arbitrary
polynomials. The goal is to show that $\razborovoperator$ satisfies
properties \ref{item:1}-\ref{item:3} in \reflem{lem:r-operator}, which
typically requires showing that $S$ satisfies two main technical
properties.  The first property is captured by what we call a
\emph{\satlemma}, which states that if the degree of $m$ is at most
some parameter $D$, then the associated set~$\Smap{\pcmonm}$ is
satisfiable and is in some sense well-structured.
The second property is described by the \emph{reducibility lemma},
which states that for every ideal $I$ that is generated by a
well-structured set of polynomials that contains $\Smap{\pcmonm}$ and
is satisfiable, it holds
that~$R_{I}(m) = R_{\langle \Smap{\pcmonm} \rangle}(m)$.  Once these
lemmas are in place, and as long as $S$ satisfies some other simple
conditions, a degree lower bound of $D$ follows by an argument
presented in~\cite{AR03LowerBounds}.

We formalise these arguments in \reflem{lem:ar-method} by extracting
the essential parts of the Alekhnovich--Razborov~\cite{AR03LowerBounds}
framework and making explicit the properties the map $S$ must satisfy
in order for the proof of the lower bound to go through.  Apart from
the properties in \refdef{def:acceptable} capturing some type of
multilinearity, monotonicity and closedness of $S$ and ensuring that
axioms with leading monomial $m$ are in $S(m)$, we introduce in
\reflem{lem:ar-method} two sufficient conditions for the lower bound
to follow: the first corresponds to the \satlemma and the second to
the reducibility lemma. With this in hand, the ensuing sections can
focus on defining the map $S$ and proving that it satisfies the
conditions, without having to refer to reduction operators.

\subsection{Revisiting the Alekhnovich--Razborov Framework} 
\label{sec:ar-method}

We now revisit the general framework introduced by Alekhnovich and
Razborov~\cite{AR03LowerBounds} for proving polynomial calculus lower
bounds.  We extract from their proofs a formal statement that if the
function $S$, mapping monomials to relevant subsets of axioms,
satisfies two conditions, namely the \emph{\satcondition} and the
\emph{reducibility conditions}, along with some natural conditions
as discussed in the following, then this implies polynomial calculus
degree lower bounds.

Before discussing the additional conditions that the mapping~$S$ needs
to satisfy, we need to fix an \emph{admissible} order $\prec$ on the
monomials in~$\F[x_1,\ldots,x_n]$ as defined in
\refsec{sec:algebra}. Recall that an admissible order is a total order
that respects multiplication, that is, if~${m_1 \prec m_2}$ then it
holds that~$mm_1 \prec mm_2$. The \emph{leading monomial} of a
polynomial~$p$ is the largest monomial according to~$\prec$ that
appears in~$p$. For the remainder of this section we implicitly assume
that monomials in~$\F[x_1,\ldots,x_n]$ are ordered according to an
admissible order $\prec$. Let us further adopt the convention that
whenever we write a polynomial $p$ as a sum of monomials
$p = \sum_i a_i m_i$, then $a_i \in \F$ are field elements and the
$m_i \in \F[x_1,\ldots,x_n]$ are monomials ordered such that
$m_{i} \prec m_j$ for all $j < i$. In particular, $m_1$ is the leading
monomial of $p$.

Our first additional requirement on $S$ is that it maps monomials
according to the variables in the monomial, so that if two monomials
$m$ and $m'$ both contain the same set of variables, then
$S(m) = S(m')$.  We further require that $S$ is in some sense
monotone, namely, for any variable~$x$, we require that if
$S(m')\subseteq \Smap{\pcmonm}$ for monomials $m'\prec m$, then it
also holds that $S(xm')\subseteq S(xm)$. Moreover, the image of $S$
should consist of sets that are ``closed'' in the sense that for any
$m$, the set of axioms $S(m)$ contains the union of all $S(m')$ where
$m'\prec m$ and $\vars{m'}\subseteq\vars{S(m)}$.  Finally, we require
that if~$m$ is the leading monomial of an axiom~$p$, then
$p \in S(m)$.  These properties are formalised in the following
definition. 

\begin{definition}[\expandafter\MakeUppercase\acceptablename]
  \label{def:acceptable}
  Let $\pcsp$ be a set of polynomials over
  $\F[x_1, \ldots, x_\nvars]$, let $\prec$ be an admissible order on
  the monomials in~$\F[x_1,\ldots,x_n]$, and let
  $S\colon 2^{\set{x_1, \ldots, x_\nvars}} \rightarrow 2^{\pcsp}$ be a
  function that maps subsets~$Y \subseteq \set{x_1, \ldots, x_\nvars}$
  of variables to subsets~$S(Y) \subseteq \pcsp$ of polynomials.  For
  a monomial~$m$ we write~$S(m)$ for $S\bigl(\vars{m}\bigr)$ and we
  say that $S$ is a \emph{\acceptable{\pcsp}} if the following holds.
  \begin{enumerate}
  \item For every variable $x$ and for all monomials $m$ and $m'$ such
    that $m'\prec m$, it holds that if
    $S(m')\subseteq \Smap{\pcmonm}$, then $S(xm')\subseteq S(xm)$.
    \label{it:property-Sxm}
    
  \item For all monomials $m$ and $m'$ such that $m'\prec m$, it holds
    that if $\vars{m'}\subseteq\Vars{\Smap{\pcmonm}}$, then
    $S(m')\subseteq \Smap{\pcmonm}$.
    \label{it:property-Sm}
    
  \item For all $p\in \pcsp$, it holds that $p\in S(m)$, where $m$ is
    the leading monomial in $p$.
    \label{it:property-axiom}
  \end{enumerate}  
\end{definition}

If the set of polynomials $\pcsp$ is not essential we call a
\acceptable{\pcsp} simply a \emph{\acceptablename}. Let us record four
technical claims that follow from the above properties of a
\acceptablename\xspace and that we will later use to prove that an
appropriate map~$S$ implies a polynomial calculus degree lower
bound. 

The first claim is that if $m_1$ is the leading monomial of an axiom
$p$, not only does $S(m_1)$ contain~$p$ but it also contains $S(m_i)$
for all monomials $m_i$ that appear in $p$.

\begin{claim}\label{cl:property-p}
  If $S$ is a \acceptable{\pcsp}, then for all $p\in \pcsp$, if
  ${\pcpolyp = \sum_i a_i \pcmonm_i}$, then
  $\set{p} \union \Union_i \bigl(S(m_i)\bigr) \subseteq S(m_1)$.
\end{claim}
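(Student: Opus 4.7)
The claim naturally decomposes into two containments that I would handle sequentially: first, $p \in S(m_1)$, and second, $S(m_i) \subseteq S(m_1)$ for every monomial $m_i$ appearing in the expansion $p = \sum_i a_i m_i$. The overall strategy is a direct chase through the three properties of a \acceptablename in Definition~\ref{def:acceptable}, with no invocation of admissibility of $\prec$ beyond the convention that $m_1$ is the $\prec$-largest monomial of $p$.

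The first containment is exactly property~\ref{it:property-axiom} of the definition, since $p \in \pcsp$ and $m_1$ is its leading monomial by hypothesis. For the second containment, the case $i = 1$ is trivial. For $i > 1$, the ordering convention on the expansion gives $m_i \prec m_1$, so my plan is to invoke property~\ref{it:property-Sm} with the roles $m := m_1$ and $m' := m_i$; this will yield $S(m_i) \subseteq S(m_1)$ the moment we verify the hypothesis $\vars{m_i} \subseteq \Vars{S(m_1)}$. That hypothesis is supplied by the first containment: since $p \in S(m_1)$, every variable of $p$, and in particular every variable of $m_i$, lies in $\Vars{S(m_1)}$.

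There is essentially no technical obstacle here, only bookkeeping: one must establish the first containment before the second, because $p \in S(m_1)$ is the bridge that makes the variable-containment hypothesis of property~\ref{it:property-Sm} available. Property~\ref{it:property-Sxm} plays no role in this claim; it will enter the framework only when $S$ must interact with multiplication by a variable in later arguments.
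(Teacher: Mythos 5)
Your proposal is correct and follows exactly the paper's own argument: property~\ref{it:property-axiom} gives $p\in S(m_1)$, which yields $\vars{m_i}\subseteq\vars{p}\subseteq\Vars{S(m_1)}$, and then property~\ref{it:property-Sm} applied with $m_i\prec m_1$ gives $S(m_i)\subseteq S(m_1)$. No differences worth noting.
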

\begin{proof}
  By property~\ref{it:property-axiom} of \refdef{def:acceptable} it
  holds that $p\in S(m_1)$. This implies that
  $\vars{m_i} \subseteq \vars{p}$ is contained in $\Vars{S(m_1)}$ and
  thus for $i\neq 1$ since $m_i\prec m_1$,
  property~\ref{it:property-Sm} of \refdef{def:acceptable} implies
  that $S(m_i) \subseteq S(m_1)$.
\end{proof}

The next claim states that if the variables of $m'$ is a subset of the
variables of $m$, then $S(m')\subseteq S(m)$.

\begin{claim}\label{cl:property-multilinear}
  If $S$ is a \acceptablename\xspace, then for all monomials
  $m$ and $m'$ such that $\vars{m'}\subseteq\vars{m}$ it holds that
  $S(m') \subseteq \Smap{\pcmonm}$.
\end{claim}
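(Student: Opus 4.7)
The plan is to prove the claim by induction on $|\vars{m'}|$, combining properties~\ref{it:property-Sxm} and~\ref{it:property-Sm} of Definition~\ref{def:acceptable}. Since we tacitly work over ${\multring}$, every monomial is multilinear, and the assumption $\vars{m'} \subseteq \vars{m}$ means that $m'$ divides $m$. Admissibility of $\prec$ then gives $m' \preceq m$, with equality precisely when $\vars{m'} = \vars{m}$.

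For the base case $\vars{m'} = \emptyset$, i.e., $m' = 1$, the claim is immediate if $m = 1$; otherwise $1 \prec m$ and the side condition $\vars{1} = \emptyset \subseteq \Vars{S(m)}$ holds vacuously, so property~\ref{it:property-Sm} yields $S(1) \subseteq S(m)$. For the inductive step, suppose $|\vars{m'}| = k \geq 1$ and the claim holds for all monomials with strictly fewer variables. If $\vars{m'} = \vars{m}$ then $m' = m$ and we are done, so assume $\vars{m'} \subsetneq \vars{m}$ and pick any $y \in \vars{m'} \subseteq \vars{m}$. The monomials $m'/y$ and $m/y$ are well-defined by multilinearity, satisfy $\vars{m'/y} \subseteq \vars{m/y}$, and have $|\vars{m'/y}| = k-1$. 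The inductive hypothesis then gives $S(m'/y) \subseteq S(m/y)$, while the strict inclusion of variable sets together with admissibility of $\prec$ gives $m'/y \prec m/y$. Property~\ref{it:property-Sxm} applied with the variable $y$ therefore yields $S(m') = S(y \cdot (m'/y)) \subseteq S(y \cdot (m/y)) = S(m)$, which completes the induction.

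I do not anticipate any real obstacle: the proof is a short induction on $|\vars{m'}|$. The only points worth watching are that the base case exploits the vacuous satisfaction of $\vars{m'} \subseteq \Vars{S(m)}$ in property~\ref{it:property-Sm}, and that the inductive step is arranged by factoring out a single variable shared by $m'$ and $m$, which is precisely the shape required to invoke property~\ref{it:property-Sxm}.
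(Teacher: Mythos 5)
Your proof is correct and follows essentially the same route as the paper's: a base case supplied by property~\ref{it:property-Sm} (with the variable-set condition holding vacuously for $m'=1$) followed by one application of property~\ref{it:property-Sxm} per variable of $m'$. The only cosmetic difference is that the paper multiplies the inclusion $S(1)\subseteq S(m)$ up to $S(m')\subseteq S(m'\cdot m)$ and then collapses $S(m'\cdot m)=S(m)$ using that $S$ depends only on the variable set, whereas you divide out a shared variable at each inductive step and arrive at $S(m)$ directly.
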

\begin{proof}
  Note that $S(1) \subseteq S(m)$ by property~\ref{it:property-Sm} of
  \refdef{def:acceptable}, where we use that $1 \prec m$.  Thus, by
  inductively applying property~\ref{it:property-Sxm} of
  \refdef{def:acceptable} and using the fact that $\prec$ respects
  multiplication, we obtain that $S(m') \subseteq S(m'\cdot
  m)$. Finally, since $\vars{m'}\subseteq\vars{m}$, it holds
  that~$\vars{m'} = \vars{m' \cdot m}$ and
  hence~$S(m) = S(m'\cdot m) \supseteq S(m') $.
\end{proof}

We also prove that monomials $m'$ that appear when $m$ is reduced
modulo $S(m)$ must be such that $S(m') \subseteq S(m)$ and that
$S(xm') \subseteq S(xm)$.

\begin{claim}
  \label{cl:reduction-S}
  If $S$ is a \acceptablename\xspace, then for all monomials $m$ and
  $m'$ such that $m'$ appears in
  $R_{\langle \Smap{\pcmonm} \rangle}(\pcmonm)$ it holds
  that~$\Smap{\pcmonm'} \subseteq \Smap{\pcmonm}$ and
  $\Smap{x\pcmonm'} \subseteq \Smap{x\pcmonm}$ for any variable $x$.
\end{claim}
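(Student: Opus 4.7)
The plan is to prove $S(m')\subseteq S(m)$ first and then derive $S(xm')\subseteq S(xm)$ by one more application of property~\ref{it:property-Sxm} of~\refdef{def:acceptable}. If $m'=m$ both containments are trivial, so assume $m'\neq m$. Under an admissible order, the reduction algorithm replaces any reducible monomial with strictly smaller ones, so every monomial in $R_{\langle S(m)\rangle}(m)$ distinct from~$m$ satisfies $m'\prec m$.

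The first ingredient is the standard observation that any monomial appearing in $R_{\langle S(m)\rangle}(m)$ has variables contained in $\vars{m}\cup\Vars{S(m)}$. This follows because a Gr\"obner basis of $\langle S(m)\rangle$ can be chosen to involve only variables of $\Vars{S(m)}$---Buchberger's algorithm introduces no new variables---so every reduction step replaces the current monomial by monomials whose variables lie in the union of the current monomial's variables and $\Vars{S(m)}$.

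Property~\ref{it:property-Sm} cannot be applied directly to~$m'$ since $\vars{m'}$ may contain variables of~$m$ outside $\Vars{S(m)}$. The decomposition step is the crux of the argument: split $\vars{m'}=A\cup B$ with $A=\vars{m'}\cap\vars{m}$ and $B=\vars{m'}\setminus\vars{m}\subseteq\Vars{S(m)}$, and let $m_B$ denote the multilinear monomial with variable set exactly~$B$. Since $m_B$ divides $m'$ (as multilinear monomials), admissibility gives $m_B\preceq m'\prec m$, hence $m_B\prec m$; together with $\vars{m_B}=B\subseteq\Vars{S(m)}$, property~\ref{it:property-Sm} yields $S(m_B)\subseteq S(m)$. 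Enumerating $A=\{y_1,\dots,y_k\}$, I would then prove by induction on $i=0,\dots,k$ that $S(y_i\cdots y_1 m_B)\subseteq S(m)$. The base case $i=0$ was just established. For the inductive step, admissibility gives $y_{i-1}\cdots y_1 m_B\prec y_{i-1}\cdots y_1 m$, and $S(y_{i-1}\cdots y_1 m)=S(\vars{m})=S(m)$ because $S$ depends only on the variable set and every $y_j$ lies in $\vars{m}$; applying property~\ref{it:property-Sxm} with variable~$y_i$ then transports $S(y_{i-1}\cdots y_1 m_B)\subseteq S(y_{i-1}\cdots y_1 m)$ to $S(y_i\cdots y_1 m_B)\subseteq S(y_i\cdots y_1 m)=S(m)$. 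At $i=k$ the left-hand side equals $S(A\cup B)=S(\vars{m'})=S(m')$, giving $S(m')\subseteq S(m)$. One more application of property~\ref{it:property-Sxm} to the pair $m'\prec m$ with variable~$x$ finally yields $S(xm')\subseteq S(xm)$.

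The main obstacle I expect is the variable-tracking step, since \emph{a priori} elements of $\langle S(m)\rangle$ can involve arbitrary variables through their coefficient polynomials; the point is that the \emph{canonical} remainder (produced by a Gr\"obner-basis reduction) only uses variables already present in $\vars{m}\cup\Vars{S(m)}$. Once that is in place the rest is a clean bootstrap: property~\ref{it:property-Sm} handles the ``new'' variables~$B$ sitting inside $\Vars{S(m)}$, while property~\ref{it:property-Sxm} propagates the inclusion across the ``shared'' variables~$A$ inside $\vars{m}$, using crucially that $S$ is a function of the variable set only.
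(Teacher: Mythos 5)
Your proposal is correct and follows the same overall route as the paper's proof: reduce to showing $S(m')\subseteq S(m)$, establish $\vars{m'}\subseteq\vars{m}\cup\Vars{\Smap{\pcmonm}}$, split $m'$ into a part supported on $\vars{m}$ and a part supported on $\Vars{\Smap{\pcmonm}}$, apply property~\ref{it:property-Sm} to the latter, and propagate through the former by iterating property~\ref{it:property-Sxm}, using that $S$ depends only on variable sets. The one place where you genuinely diverge is the variable-containment step: the paper proves $\vars{m'}\subseteq\vars{m}\cup\Vars{\Smap{\pcmonm}}$ by restricting the putative extra variables to~$0$ and contradicting the uniqueness of the decomposition $\pcmonm = q + R_{\idealS{\pcmonm}}(\pcmonm)$, whereas you invoke the fact that a Gr\"obner basis of $\idealS{\pcmonm}$ can be taken inside $\F[\Vars{\Smap{\pcmonm}}]$ so that normal-form computation introduces no new variables. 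Both are valid; the paper's restriction argument is more self-contained (it needs only the existence and uniqueness of the reduction, which is all that \refsec{sec:algebra} sets up), while yours imports a bit more Gr\"obner-basis machinery but makes the mechanism of the reduction more transparent.
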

\begin{proof}
  We assume $m'\neq m$, and thus $m'\prec m$, otherwise the claim
  follows trivially.  Moreover, we observe that it suffices to show
  that $S(m')\subseteq S(m)$, since we can then use
  property~\ref{it:property-Sxm} of \refdef{def:acceptable} to
  conclude that $S(xm') \subseteq S(xm)$.

  We now argue
  that~$\vars{m'} \subseteq \vars{m} \union \Vars{\Smap{\pcmonm}}$.
  Towards contradiction, suppose this is not the case. Let $\rho$
  denote the assignment that assigns all variables in~$m'$ that are
  not in~$\Vars{\Smap{\pcmonm}} \cup \vars{{\pcmonm}}$ to 0.
  Recall that $R_{\langle \Smap{\pcmonm}\rangle}(\pcmonm)$ denotes the
  reduction of $m$ modulo $\langle \Smap{\pcmonm}\rangle$ and that
  there is a unique representation
  $\pcmonm = q + R_{\langle \Smap{\pcmonm}\rangle}(\pcmonm)$ with
  $q\in \langle S(m)\rangle$.
  No variable in~$\pcmonm$ nor in any of the generators
  of~$\idealS{\pcmonm}$ is assigned by~$\rho$,
  so~$\restrict{\pcmonm}{\rho} = \pcmonm$
  and~${\restrict{q}{\rho} \in \idealS{\pcmonm}}$. Moreover, we
  have~$\restrict{\pcmonm'}{\rho} = 0$
  so~${\restrict{R_{\idealS{\pcmonm}}(m)}{\rho} \neq R_{\langle
      \Smap{\pcmonm}\rangle}(m)}$. But this contradicts that the
  representation
  ~$\pcmonm = q + R_{\langle \Smap{\pcmonm}\rangle}(\pcmonm)$ is
  unique, and
  thus~$\vars{m'} \subseteq \vars{{\pcmonm}} \cup
  \Vars{\Smap{\pcmonm}}$.

  Write~$m' = m_1'\cdot m_2'$ such
  that~${\vars{m_1'} \subseteq \vars{m}}$
  and~${\vars{m_2'} \subseteq \Vars{\Smap{\pcmonm}}}$. Observe that
  $m'_2 \prec m$ since $1 \prec m' \prec m$ and the order $\prec$
  respects multiplication as it is admissible. Hence by
  property~\ref{it:property-Sm} of \refdef{def:acceptable} applied to
  $m_2'$ and $m$ it follows that~$S(m_2') \subseteq
  \Smap{\pcmonm}$. By iteratively applying property~\ref{it:property-Sxm} of
  \refdef{def:acceptable} for each $x \in \vars{m_1'}$, and again
  using that the order $\prec$ respects multiplication, we can
  conclude
  that~${\Smap{\pcmonm'} = S(m_1'\cdot m_2') \subseteq S(m_1'\cdot
    m)}$. Finally, since~${\vars{m_1'} \subseteq \vars{m}}$, we have
  that~${S(m_1' \cdot m) = S(m)}$, and
  therefore~$\Smap{\pcmonm'} \subseteq \Smap{\pcmonm}$.
\end{proof}

From this claim we can deduce that if irreducibility is preserved
modulo some larger ideal, then the reduced polynomials are the same as
done in the following.

\begin{claim}
  \label{cl:reduction-same}
  Let $S$ be a \acceptable{\pcsp} and let
  $\subsetcalP\subseteq\pcsp$. Suppose that every monomial $m$
  irreducible modulo $\langle S(m)\rangle$ that satisfies
  $S(m) \subseteq \subsetcalP$ is also irreducible modulo
  $\langle \subsetcalP \rangle$. Then for all monomials $m'$ such that
  $S(m') \subseteq \subsetcalP$ it holds that
  $R_{\langle \subsetcalP \rangle}(\pcmonm') = R_{\langle S(\pcmonm')
\rangle}(\pcmonm') $.
\end{claim}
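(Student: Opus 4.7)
The plan is to use the fact that reducing modulo a larger ideal can be done in two stages (first modulo a smaller sub-ideal, then modulo the bigger one) and then argue that the intermediate reduction is already irreducible modulo the larger ideal, so the second stage does nothing.

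Concretely, I would first observe that since $S(\pcmonm') \subseteq \subsetcalP$, we have the containment $\langle \Smap{\pcmonm'}\rangle \subseteq \langle \subsetcalP \rangle$, and therefore by essentially the same argument as in \refobs{obs:ideal-contain-reduction} (applied with $\pcmonm'' = 1$), the identity
\[
  R_{\langle \subsetcalP \rangle}(\pcmonm')
  \;=\;
  R_{\langle \subsetcalP \rangle}\bigl(R_{\langle \Smap{\pcmonm'}\rangle}(\pcmonm')\bigr)
\]
holds. It therefore suffices to show that $R_{\langle \Smap{\pcmonm'}\rangle}(\pcmonm')$ is already irreducible modulo~$\langle \subsetcalP \rangle$, as an irreducible polynomial is its own reduction.

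To do this, write $R_{\langle \Smap{\pcmonm'}\rangle}(\pcmonm') = \sum_i a_i \pcmonm_i'$, where each $\pcmonm_i'$ is a monomial irreducible modulo $\langle \Smap{\pcmonm'}\rangle$. By \refclaim{cl:reduction-S}, every such $\pcmonm_i'$ satisfies $\Smap{\pcmonm_i'} \subseteq \Smap{\pcmonm'} \subseteq \subsetcalP$. Moreover, because $\langle \Smap{\pcmonm_i'}\rangle \subseteq \langle \Smap{\pcmonm'}\rangle$, any polynomial witnessing that~$\pcmonm_i'$ is reducible modulo~$\langle \Smap{\pcmonm_i'}\rangle$ would also lie in~$\langle \Smap{\pcmonm'}\rangle$ and witness reducibility there; so the irreducibility of~$\pcmonm_i'$ modulo~$\langle \Smap{\pcmonm'}\rangle$ already gives irreducibility modulo $\langle \Smap{\pcmonm_i'}\rangle$. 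The hypothesis of the claim then applies to each~$\pcmonm_i'$ and yields that it is irreducible modulo $\langle \subsetcalP \rangle$.

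Since a linear combination of monomials that are each irreducible modulo $\langle \subsetcalP\rangle$ is itself irreducible modulo $\langle \subsetcalP\rangle$, the reduction operator fixes it, giving
\[
  R_{\langle \subsetcalP \rangle}\bigl(R_{\langle \Smap{\pcmonm'}\rangle}(\pcmonm')\bigr)
  \;=\;
  R_{\langle \Smap{\pcmonm'}\rangle}(\pcmonm'),
\]
and combining with the first displayed identity completes the proof. The only subtle step is the direction of the monotonicity of reducibility with respect to ideal inclusion (smaller ideal, fewer reducible monomials), but this is a short check; I do not foresee genuine obstacles beyond bookkeeping.
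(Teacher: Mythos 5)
Your proposal is correct and follows essentially the same route as the paper's proof: both write $R_{\langle S(m')\rangle}(m')$ as a sum of monomials $m_i$, use \refclaim{cl:reduction-S} to get $S(m_i)\subseteq S(m')\subseteq \subsetcalP$, pass from irreducibility modulo $\langle S(m')\rangle$ to irreducibility modulo the smaller ideal $\langle S(m_i)\rangle$, invoke the hypothesis to get irreducibility modulo $\langle \subsetcalP\rangle$, and conclude via the two-stage reduction identity from \refobs{obs:ideal-contain-reduction}. The monotonicity direction you flag as the "subtle step" is handled correctly.
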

\begin{proof}
  Let $m'$ be an arbitrary monomial satisfying
  $S(m') \subseteq \subsetcalP$ and suppose that any irreducible
  monomial $m$ modulo $\langle S(m)\rangle$ such that
  $S(m) \subseteq \subsetcalP$ is also irreducible modulo
  $\langle \subsetcalP \rangle$. We want to prove that
  $R_{\langle \subsetcalP \rangle}(\pcmonm') = R_{\langle S(\pcmonm')
    \rangle}(\pcmonm') $.

  Let us write
  $R_{\langle \Smap{\pcmonm'} \rangle}(\pcmonm') = \sum_{i} a_i m_i$.
  Note that by definition each such $m_i$ is irreducible
  modulo~$\langle \Smap{\pcmonm'} \rangle$. As by
  \refclaim{cl:reduction-S} it holds that $S(m_i) \subseteq S(m')$ we
  may conclude that each $m_i$ is irreducible modulo
  $\langle S(m_i)\rangle$. Thus, by assumption along with the
  inclusions $S(m_i) \subseteq S(m') \subseteq \subsetcalP$, each
  $m_i$ is irreducible modulo~$\langle \subsetcalP\rangle$. This
  implies that
  $R_{\langle \Smap{\pcmonm'} \rangle}(\pcmonm') = R_{\langle
    \subsetcalP \rangle}\bigl(R_{\langle \Smap{\pcmonm'}
    \rangle}(\pcmonm')\bigr) = R_{\langle \subsetcalP\rangle}(m')$,
  using once more that $S(m') \subseteq \subsetcalP$.
\end{proof}

We can now make the formal claim that polynomial calculus degree lower
bounds follow from the existence of a \acceptablename\xspace$S$
satisfying the two conditions corresponding to the \satlemma and the
reducibility lemma. This fact is implicit in~\cite{AR03LowerBounds}
and we make it explicit below. It is worth remarking that
Filmus~\cite{Filmus19AlekhnovichRazborov} establishes a lemma of
similar flavour. However, in contrast to Filmus, we do not introduce
another layer of abstraction but rather state the lemma directly in
terms of $S$.

\begin{lemma}
  \label{lem:ar-method}
  Let $D \in \N^+$, let $\pcsp$ be a set of 
  polynomials over $\F[x_1, \ldots, x_\nvars]$ of degree at most $D$,
  and denote by $S$ a \acceptable{\pcsp}. If all monomials
  $m$ of degree at most $D$ satisfy that
  \begin{enumerate}
  \item \emph{Satisfiability condition:} the set of axioms $S(m)$ is
    satisfiable; and
  \item \emph{Reducibility condition:} for all monomials $m'$, if $\Smap{m'} \subseteq S(m)$,
    then $m'$ is reducible modulo $\langle S(m') \rangle$ if and only
    if $m'$ is reducible modulo $\langle \Smap{m} \rangle$;
  \end{enumerate}
  then any polynomial calculus refutation of $\pcsp$ over $\F$
  requires degree strictly greater than $D$.
\end{lemma}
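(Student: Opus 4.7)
The plan is to apply \reflem{lem:r-operator} by constructing a degree-$D$ pseudo-reduction $\razborovoperator$ directly from the map $S$: define it on monomials by $\razborovoperator(\pcmonm) = R_{\langle \Smap{\pcmonm}\rangle}(\pcmonm)$ and extend $\F$-linearly to all polynomials, then verify each of the three conditions in \refdef{def:r-op} in turn.

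Condition \refitem{item:1} will follow immediately from the satisfiability condition applied at $\pcmonm = 1$: since $\Smap{1}$ has a common root, $1 \notin \langle \Smap{1}\rangle$, so $1$ is irreducible modulo this ideal and $\razborovoperator(1) = 1$. For condition \refitem{item:2}, I would take $\pcpolyp = \sum_i a_i \pcmonm_i \in \pcsp$ with leading monomial $\pcmonm_1$. By \refclaim{cl:property-p}, $\Smap{\pcmonm_i} \subseteq \Smap{\pcmonm_1}$ for every $i$, so the reducibility condition at $\pcmonm_1$ combined with \refclaim{cl:reduction-same} (taking $\subsetcalP = \Smap{\pcmonm_1}$) yields $R_{\langle \Smap{\pcmonm_i}\rangle}(\pcmonm_i) = R_{\langle \Smap{\pcmonm_1}\rangle}(\pcmonm_i)$ for each $i$. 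Summing over $i$ and using linearity of the reduction operator gives $\razborovoperator(\pcpolyp) = R_{\langle \Smap{\pcmonm_1}\rangle}(\pcpolyp)$, which vanishes because $\pcpolyp \in \Smap{\pcmonm_1}$.

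The main obstacle will be condition \refitem{item:3}, which asks that $\razborovoperator(\indeterminate \pcmonm) = \razborovoperator\bigl(\indeterminate \razborovoperator(\pcmonm)\bigr)$ whenever $\pcmonm$ has degree at most $D - 1$ and $\indeterminate$ is a variable. I would write $\razborovoperator(\pcmonm) = \sum_j b_j \pcmonm_j'$ and, for each summand, invoke \refclaim{cl:reduction-S} to get $\Smap{\indeterminate \pcmonm_j'} \subseteq \Smap{\indeterminate \pcmonm}$. The reducibility condition applied at $\indeterminate \pcmonm$ (which has degree at most $D$), together with \refclaim{cl:reduction-same}, then gives $R_{\langle \Smap{\indeterminate \pcmonm_j'}\rangle}(\indeterminate \pcmonm_j') = R_{\langle \Smap{\indeterminate \pcmonm}\rangle}(\indeterminate \pcmonm_j')$ for every $j$, and summing shows $\razborovoperator\bigl(\indeterminate \razborovoperator(\pcmonm)\bigr) = R_{\langle \Smap{\indeterminate \pcmonm}\rangle}\bigl(\indeterminate \razborovoperator(\pcmonm)\bigr)$. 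To close the loop, note that $\Smap{\pcmonm} \subseteq \Smap{\indeterminate \pcmonm}$ holds by \refclaim{cl:property-multilinear} since $\vars{\pcmonm} \subseteq \vars{\indeterminate \pcmonm}$; \refobs{obs:ideal-contain-reduction} then gives $R_{\langle \Smap{\indeterminate \pcmonm}\rangle}\bigl(\indeterminate R_{\langle \Smap{\pcmonm}\rangle}(\pcmonm)\bigr) = R_{\langle \Smap{\indeterminate \pcmonm}\rangle}(\indeterminate \pcmonm) = \razborovoperator(\indeterminate \pcmonm)$, as required. The delicate aspect is that each reduction along this chain is taken over a different ideal; what makes the bookkeeping go through is precisely that the \acceptablename infrastructure of \refdef{def:acceptable} lets us align these ideals via the reducibility condition, and once the setup is in place the verification is a mechanical chase through the four claims of \refsec{sec:ar-method}.
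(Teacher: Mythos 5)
Your proposal is correct and follows essentially the same route as the paper's proof: the same operator $\razborovoperator(\pcmonm) = R_{\langle \Smap{\pcmonm}\rangle}(\pcmonm)$, the same use of the satisfiability condition for $\razborovoperator(1)=1$, the same combination of \refclaim{cl:property-p} and \refclaim{cl:reduction-same} for annihilating axioms, and the same chain of \refclaim{cl:reduction-S}, \refclaim{cl:reduction-same}, \refclaim{cl:property-multilinear}, and \refobs{obs:ideal-contain-reduction} for the multiplication property. No gaps worth noting.
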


\begin{proof}
  Let $\razborovoperator$ be the operator defined on monomials~$m$ by
  $\razborovoperator(m) = R_{\langle \Smap{\pcmonm} \rangle}(m)$ and
  extended by linearity to polynomials. Our goal is to show
  that~$\razborovoperator$ is a degree-$D$ pseudo-reduction for
  $\pcsp$. That is, according to \refdef{def:r-op}, we need to show
  that~$\razborovoperator(1) = 1$, that~$\razborovoperator$ maps all
  polynomials in $\pcsp$ to $0$, and that for every monomial~$m$ of
  degree at most $D-1$ and every variable $x$, it holds
  that~$\razborovoperator(x m) = \razborovoperator(x
  \razborovoperator(m))$. If we manage to show these properties, then
  we can appeal to \reflem{lem:r-operator} to reach the desired
  conclusion.

  By the \satcondition condition we have that $S(1)$ is satisfiable, 
  that is, the set of polynomials $S(1)$ has
  a common root. This implies that the entire ideal
  $\langle S(1) \rangle$ has a common root and, therefore, $1$ is not 
  in $\langle S(1) \rangle$ and hence is not
  reducible to $0$ modulo $\langle S(1) \rangle$. As $1$ is the
  smallest monomial in the order we conclude that
  $\razborovoperator(1) = R_{\langle S(1) \rangle}(1) = 1$.
  
  To see that~$\razborovoperator$ maps each polynomial
  $\pcpolyp \in \pcsp$ to $0$, let
  ${\pcpolyp = \sum_j a_j \pcmonm_j}$.  By \refclaim{cl:property-p} it
  follows that $p\in S(m_1)$ and that $S(m_j)\subseteq S(m_1)$ for all
  $j$.  It therefore holds that
  \begin{subequations}
    \begin{align}
      \razborovoperator(\pcpolyp)
      &= \sum_j a_j R_{\idealS{\pcmonm_j}}(\pcmonm_j)
      &&[\text{by definition of $\razborovoperator$ and $p$}]\\
      &= \sum_j a_j R_{\langle S(m_1) \rangle}(\pcmonm_j)
      &&[\text{by \refclaim{cl:reduction-same} and the reducibility condition}]\\
      &= R_{\langle S(m_1) \rangle}\Big(\sum_j a_j \pcmonm_j\Big)
      &&[\text{by linearity of $R_{\langle S(m_1) \rangle}$}]\\
      &= R_{\langle S(m_1) \rangle}(\pcpolyp)
      &&[\text{by definition of $p$}] \\
      &= 0\eqcomma &&[\text{since $\pcpolyp\in S(m_1)$}]
    \end{align} 
  \end{subequations}
  where we note that in order to apply the reducibility condition we
  use the assumption that $p$ has degree at most $D$.
  
  Finally, we need to show that for every monomial $\pcmonm$ of degree
  at most $\mdegreestd-1$ and every variable~$\indeterminate$, it
  holds
  that~${\razborovoperator(\indeterminate\pcmonm) =
    \razborovoperator(\indeterminate \razborovoperator(\pcmonm))}$.
  Let
  ${\razborovoperator(\pcmonm) = R_{\langle S(m)\rangle}(m) = \sum_j
    a_j \pcmonm_j}$. By definition of $\razborovoperator$ we have that
  \begin{equation}
    \razborovoperator(\indeterminate \razborovoperator(\pcmonm))
    =
    \sum_{j} a_j \razborovoperator(\indeterminate \pcmonm_j)
    =
    \sum_{j} a_j
    R_{\idealS{\indeterminate \pcmonm_j}}(\indeterminate\pcmonm_j)
    \eqperiod
    \label{eq:not-hard-step} 
  \end{equation}
  We now argue that since $\pcmonm$ is a monomial of degree at most
  $D-1$, reducing modulo $\idealS{\indeterminate\pcmonm_j}$ or
  $\idealS{\indeterminate\pcmonm}$ results in the same polynomial.
  More formally, we claim that
  \begin{equation}
    \label{eq:hard-step}
    R_{\idealS{\indeterminate \pcmonm_j}}(\indeterminate\pcmonm_j)
    =  R_{\idealS{\indeterminate \pcmonm}}(\indeterminate\pcmonm_j) \end{equation}
  follows from \refclaim{cl:reduction-same} with
  $\subsetcalP = \Smap{\indeterminate \pcmonm}$ and $m' =
  x\pcmonm_j$. Let us check that the conditions of
  \refclaim{cl:reduction-same} are satisfied. We need to check that
  \begin{enumerate}
  \item any monomial $\tilde m$ irreducible modulo
    $\langle S(\tilde m) \rangle$ satisfying that
    $S(\tilde m) \subseteq \Smap{\indeterminate \pcmonm}$ is also
    irreducible modulo $\langle \Smap{\indeterminate \pcmonm}
    \rangle$, and \label{it:S-red}
  \item
    $\Smap{\indeterminate\pcmonm_j} \subseteq
    \Smap{\indeterminate\pcmonm}$.\label{it:subset}
  \end{enumerate}
  The latter follows immediately from \refclaim{cl:reduction-S} and
  \refitem{it:S-red} follows from the reducibility condition: since
  the monomial $xm$ has degree at most~$D$, it holds for all monomials
  $\tilde m$, if $S(\tilde m) \subseteq S(xm)$, then $\tilde m$ is
  irreducible modulo $\langle S(\tilde m)\rangle$ if and only if
  $\tilde m$ is irreducible modulo $\langle S(xm)\rangle$. We may thus
  conclude \refeq{eq:hard-step} from \refclaim{cl:reduction-same}
  applied with $\subsetcalP = \Smap{\indeterminate \pcmonm}$ and
  $m' = x\pcmonm_j$.
  
  We finish the
  proof of \reflem{lem:ar-method} by noting that
  \begin{subequations}
    \begin{align}   
      \razborovoperator\bigl(\indeterminate \razborovoperator(\pcmonm)\bigr) 
      &= 
        \sum_{j} a_j R_{\idealS{\indeterminate \pcmonm}}(\indeterminate\pcmonm_j)
      &&[\text{by \refeq{eq:not-hard-step} and \refeq{eq:hard-step}}]\\
      &= R_{\idealS{\indeterminate\pcmonm}} \Big(\sum_{j} a_j \indeterminate \pcmonm_j\Big)
      &&[\text{by linearity of $R_{\langle S(xm) \rangle}$}]\\
      &= R_{\idealS{\indeterminate\pcmonm}} \bigl(\indeterminate R_{\idealS{\pcmonm}}(\pcmonm)\bigr)
      &&[\text{as $R_{\langle S(m)\rangle}(\pcmonm) = \textstyle\sum_j a_j \pcmonm_j$} ]\\
      &= R_{\idealS{\indeterminate\pcmonm}} (\indeterminate \pcmonm)
      &&[\text{by \refobs{obs:ideal-contain-reduction}, using \refclaim{cl:property-multilinear}}] \\
      &= \razborovoperator(\indeterminate \pcmonm) \eqperiod
    \end{align} 
  \end{subequations}
  This establishes that~$\razborovoperator$ is a degree-$D$
  pseudo-reduction for $\pcsp$ as defined in \refdef{def:r-op}. We may
  thus appeal to \reflem{lem:r-operator} to obtain the desired
  polynomial calculus refutation degree lower bound for $\pcsp$. This
  completes the proof.
\end{proof}
 
\section{Closure for Graph Colouring}
\label{sec:closure}
\label{sec:rt22techniques}

While the previous section was fairly generic and we made few
assumptions on the polynomial system $\pcsp$ to be refuted, we now
turn our attention to the colouring formula and make the first steps
towards defining a support $S$ for it. The goal of this section is to
introduce the \emph{closure} of a vertex set, a crucial notion in the
definition of a support $S$ for the colouring formula.
  
In order to state the definition of our closure we need to introduce
some terminology. Let $G=(V, E)$ be a graph. A \emph{walk of length
  $\tau$} in $G$ is a tuple of vertices~$(v_1, \ldots, v_{\tau+1})$
satisfying $(v_i, v_{i+1} ) \in E$ for all $i\in [\tau]$.  A
\emph{simple path} is a walk where all vertices are distinct and a
\emph{simple cycle} is a walk $(v_1, \ldots, v_{\tau+1})$ of length
$\tau\geq 3$ where $v_1 = v_{\tau+1}$ and all other vertices are
distinct.

Suppose the set of vertices $V$ has a linear order $\prec$ on $V$. An
\emph{increasing (decreasing) path} in~$G$ is a simple path
$(v_1, \ldots, v_\tau)$ where $v_{i} \prec v_{i+1}$
($v_{i} \succ v_{i+1}$) for all $i\in [\tau-1]$. For vertices
$u, v\in V$ we say that~$v$ is a \emph{descendant of $u$} if there
exists a decreasing path from $u$ to $v$, and for a set of vertices
$U \subseteq V$ we say that $v$ is a \emph{descendant of $U$} if it is
a descendant of some vertex in $U$. We write $\Desc{U}$ to denote the
set of all descendants of~$U$. We define every vertex to be a
descendant of itself so that $U \subseteq \Desc{U}$.

The definition of our closure, besides the notion of descendants, also
requires the notions of a \emph{hop} and a \emph{lasso}: a
\emph{$\hoplength$\nobreakdash-hop \wrt a set $U \subseteq V$} is a
simple path or a simple cycle of length $\hoplength$ with the property
that the two endpoints are both contained in~$U$ (in the case of
cycles, the two endpoints coincide), while all other vertices are not
in~$U$. Similarly a \emph{lasso \wrt $U$} is a paw graph, that is, a
walk $(v_1, v_2, v_3, v_4, v_5)$ with $v_2 = v_5$ and all other
vertices being distinct, such that only $v_1$ is in $U$. See
\reffig{fig:hop} for an example of a set with some hops and a lasso.

\begin{figure}
  \centering
  \includegraphics{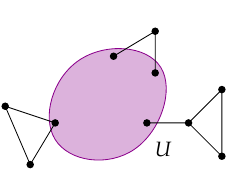}
  \caption{A set $U$ with a 3-hop to the left, a 2-hop on top and a
    lasso to the right.}
  \label{fig:hop}
\end{figure}

With these notions in place we can now define 
a closure of a set $U\subseteq V$.
After showing the closure is unique, 
we define a process that given $U$ constructs
its closure and then
prove that in sparse graphs the closure of~$U$ 
satisfies two important properties:
\begin{enumerate}
\item it is the set of descendants of a set that is not much larger
  than~$U$, as long as $U$ itself is not too large, and,
\item if we remove the closure from the graph, then any small enough
  vertex set has a specially structured proper $3$-colouring.
\end{enumerate}
These two properties will then be used to prove that the \satcondition
condition and the reducibility condition which will allow us to apply
\reflem{lem:ar-method}.

\begin{definition}[Closure]
  \label{def:closure}
  Let $G=(V,E)$ be a graph and let $U \subseteq V$. We say that $U$ is
  \emph{closed} if $U = \Desc{U}$ and there are no 2\nobreakdash-,
  3\nobreakdash-, 4\nobreakdash-hops or lassos \wrt $U$. A
  \emph{closure} of $U$ is any minimal closed set that contains $U$.
\end{definition}

\begin{proposition}
  \label{clm:closure-single}
  Every set of vertices has a unique closure.
\end{proposition}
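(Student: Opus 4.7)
Following the strategy of \refpr{clm:res-closure-single} in the resolution warm-up, the goal is to show that the family of closed supersets of $U$ is closed under binary intersection. Uniqueness then follows from a standard minimality argument: if $W_1$ and $W_2$ are two closures of $U$ and $W_\cap = W_1 \cap W_2$ is itself closed and contains $U$, then by the minimality clause of \refdef{def:closure} applied to both $W_1$ and $W_2$ we must have $W_\cap = W_1 = W_2$.

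To verify that $W_\cap$ is closed I would check the three requirements of \refdef{def:closure} separately. The descendant condition is immediate: if $v \in \Desc{W_\cap}$, then there is a decreasing path from some $u \in W_\cap \subseteq W_i$ to $v$, forcing $v \in \Desc{W_i} = W_i$ for $i = 1, 2$, whence $v \in W_\cap$.

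The main obstacle is the hop and lasso clauses. The difficulty is that a hop or lasso \wrt $W_\cap$ only requires each interior vertex to avoid the intersection $W_1 \cap W_2$; different interior vertices could sit in $W_1 \setminus W_2$ or $W_2 \setminus W_1$, so the structure need not transfer verbatim to either closure. I would dispatch this with a finite case analysis over which interior vertices lie in $W_1$. In the two pure cases---all interior vertices in $W_1$, or none in $W_1$---the same structure is itself a hop or lasso \wrt $W_2$ or \wrt $W_1$, contradicting the closedness of that set. In each mixed case I locate a maximal run of consecutive interior vertices avoiding $W_1$, bracketed on each side by a vertex of $W_1$; since a hop or lasso \wrt $W_\cap$ has at most three interior vertices, such a bracketed segment is a $2$-, $3$-, or $4$-hop \wrt $W_1$. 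For the lasso one also has to allow the bracketing pair to coincide at the cycle's attachment vertex, yielding a $3$-hop read off directly from the cycle. Once every case is dispatched, $W_\cap$ is closed and the proposition follows.
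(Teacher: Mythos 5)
Your proposal is correct and follows essentially the same route as the paper: both show that the intersection $W_\cap$ of two closures must itself be closed (equivalently, that any hop, lasso, or descendant violation \wrt $W_\cap$ lifts to one \wrt $W_1$ or $W_2$), and then invoke minimality. The only difference is that the paper compresses your case analysis into the single observation that a subgraph of a hop or lasso is again a hop or a lasso, which your bracketed-segment argument makes explicit.
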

\begin{proof}
  Suppose there are two distinct closures $W_1$ and $W_2$ of $U$. As
  $W_1$ as well as $W_2$ contains $U$ it holds that
  $U \subseteq W_\cap = W_1 \cap W_2$. However, by minimality,
  $W_\cap$ is not a closure of $U$ and hence either there is a
  2\nobreakdash-, 3\nobreakdash-, 4\nobreakdash-hop or lasso $Q$
  \wrt~$W_\cap$ or there is a vertex $v$ in $\Desc{W_\cap}$ that is
  not in~$W_\cap$.
  
  In the first case this implies that some subgraph of $Q$ is either a
  2\nobreakdash-, 3\nobreakdash-, 4\nobreakdash-hop or a lasso \wrt
  $W_1$ or~$W_2$ since a subgraph of a hop or lasso is again hop or a
  lasso. In the second case this implies that
  $W_1 \not\supseteq \Desc{W_1}$ or $W_2 \not\supseteq \Desc{W_2}$,
  since $v \in \Desc{W_1} \cap \Desc{W_2}$.  Either way, this
  contradicts the assumption that both $W_1$ and $W_2$ are closed.
\end{proof}

In light of Proposition~\ref{clm:closure-single}, for a set $U$ we
write $\closure{U}$ to denote the unique closure of $U$. In order to
establish a bound on the size of the closure the algorithmic, but
otherwise equivalent, description in Algorithm~\ref{alg:closure} will
be useful.

\begin{algorithm}[t]
  \caption{A procedure to obtain the closure of a given set $U$.}
  \label{alg:closure}
  \begin{algorithmic}[1]
    \Procedure{Closure}{$U$}
    \State $W_0 \gets \Desc{U}$
    \State $i \gets 0$
    \While{exists 2\nobreakdash-, 3\nobreakdash-, 4\nobreakdash-hop or lasso $\hopexample_{i+1}$ \wrt $\Wset_{i}$}
    \State $W_{i+1} \gets \Desc{\Wset_{i} \union
      V(\hopexample_{i+1})}$
    \State $i \gets i+1$
    \EndWhile
    \State $W_{\mathrm{end}} \gets W_i$
    \State \textbf{return} $W_{\mathrm{end}}$
    \EndProcedure
  \end{algorithmic}
\end{algorithm}

Clearly the set of vertices $W_{\mathrm{end}}$ returned by
Algorithm~\ref{alg:closure} is closed and minimal by construction:
only adding part of a hop or a lasso results in a smaller hop. Hence
$\closure{U} = W_{\mathrm{end}}$ by
Proposition~\ref{clm:closure-single}.

The main property we use of our notion
of closure is that the neighbourhood of a closed set~$W$ is very
structured: since there are no $2$\nobreakdash-hops with respect to
$W$, every vertex in $\nbhstd{}{W}\setminus W$ has a
single neighbour in~$W$, and since there are no $3$\nobreakdash-hops
with respect to $W$ the neighbourhood of $W$ is an independent
set. The absence of longer $4$\nobreakdash-hops and lassos imply
similar, more technical properties for sets of vertices that are
connected to~$W$ via short paths.
We leverage this structure to argue that after removing a closed
set~$W$ from a sparse graph, all small vertex sets have a proper
$3$-colouring with the additional property that the neighbours of each
vertex at distance~1 from~$W$ are coloured with a single colour.

\begin{lemma}\label{lem:colouring-contraction}
  Suppose that $G=(V, E)$ is $(\ell,
  1/3\Delta)$\nobreakdash-sparse. Let $U, W\subseteq V$ be vertex sets
  such that $\setsize{U}\leq \ell$, $W$ is closed, and every vertex in
  $\nbhstd{U\setminus W}{W}$ has degree at most $\Delta$ in
  $G[U\setminus W]$. Then there exists a proper
  $3$\nobreakdash-colouring of the subgraph
  ${G[U\setminus (W \union \nbhstd{U}{W})]}$ such that for every
  $u \in \nbhstd{U\setminus W}{W}$, the set $\nbhstd{U\setminus W}{u}$
  is monochromatic.
\end{lemma}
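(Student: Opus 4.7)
The plan is to reduce the claimed colouring to a proper $3$-colouring of a contracted auxiliary graph, and to establish $3$-colourability of that graph by a degeneracy argument driven by sparsity.

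First, I would extract three structural facts from the closure hypothesis. Set $T := \nbhstd{U\setminus W}{W}$ and $V' := U\setminus (W\cup N_U(W))$. Absence of $3$-hops \wrt $W$ forces $T$ to be an independent set in $G$; absence of lassos \wrt $W$ forces $N(u)\cap V'$ to be independent for every $u\in T$; and absence of $4$-hops \wrt $W$ forces each $v\in V'$ to have at most one neighbour in $T$. The first and third facts together imply that the relation \emph{$v \sim v'$ iff $v = v'$ or there exists $u \in T$ with $v, v' \in N(u)$} is an equivalence relation on $V'$, whose non-trivial classes are precisely the sets $N(u)\cap V'$ for $u\in T$. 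Since $T$ is independent, $\nbhstd{U\setminus W}{u} = N(u)\cap V'$, so monochromaticity on $\nbhstd{U\setminus W}{u}$ is equivalent to monochromaticity on the corresponding equivalence class.

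I would then define a contracted graph $\tilde G$ whose vertices are the $\sim$-equivalence classes, with an edge between two classes whenever some pair of their members is adjacent in $G[V']$. Since no edge of $G[V']$ lies within a single class---otherwise we would produce a lasso \wrt $W$---any proper $3$-colouring of $\tilde G$ pulls back to the required proper $3$-colouring of $G[V']$ in which every $\nbhstd{U\setminus W}{u}$ is monochromatic.

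The remaining task is to $3$-colour $\tilde G$. Mimicking the proof of \reflem{lem:sparse-colourable}, I would prove that $\tilde G$ is $2$-degenerate and then greedily $3$-colour it in reverse degeneracy order. Given an arbitrary non-empty $\tilde S \subseteq V(\tilde G)$, let $V'_{\tilde S}$ and $T_{\tilde S}$ denote the underlying $V'$- and $T$-vertices; then $\tilde G[\tilde S]$ is the contraction of $G[V'_{\tilde S}\cup T_{\tilde S}]$ that collapses each star $\{u\}\cup(N(u)\cap V')$ for $u\in T_{\tilde S}$ into a single vertex. Applying sparsity of $G$ to $V'_{\tilde S}\cup T_{\tilde S}$, observing that collapsing the stars removes $\sum_{u\in T_{\tilde S}}|N(u)\cap V'|$ vertices and at least the same number of edges, and invoking the degree bound $|N(u)\cap V'|\leq\Delta$, a direct calculation should give that $\tilde G[\tilde S]$ has average degree strictly less than $3$, and hence some class has degree at most $2$.

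The main technical obstacle is showing that the density bound is sharp enough. A naive sparsity bound applied to $V'_{\tilde S}$ alone is too weak, because large equivalence classes can inflate the ratio $|V'_{\tilde S}|/|V(\tilde G[\tilde S])|$ by a factor up to $\Delta$. The refined argument must pass through $G[V'_{\tilde S}\cup T_{\tilde S}]$ so that the star contractions can be charged against the $T_{\tilde S}$-to-$V'$ star edges; the cancellation that one hopes for is made possible precisely by the calibration $\epsilon = 1/(3\Delta)$ scaling inversely with the maximum class size $\Delta$, which is exactly what forces the required strict inequality.
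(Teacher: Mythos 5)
Your proposal is correct and follows essentially the same route as the paper's proof: extract from closedness that the star graphs $\{u\}\cup \nbhstd{U\setminus W}{u}$ are genuine stars and pairwise disjoint, contract them, show the contracted graph remains sparse enough by charging the lost star edges against the lost vertices (exactly where the calibration $\epsilon=1/(3\Delta)$ is spent), and conclude $3$-colourability by the degeneracy argument underlying \reflem{lem:sparse-colourable}. The only cosmetic difference is that you contract the leaf sets as equivalence classes on $V'$ rather than the full stars, which changes nothing in the density computation since you reinstate the centers $T_{\tilde S}$ in the preimage.
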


\begin{proof}
  We start the proof by establishing the following two
  properties.
  \begin{enumerate}
  \item First, since there are no lassos in~$U$ \wrt $W$, for every
    vertex~$u\in \nbhstd{U\setminus W}{W}$, it holds that the
    neighbours~$\nbhstd{U\setminus W}{u}$ of $u$ are an independent
    set. Hence the
    graph~${\starg_u = {G[\{u\} \union \nbhstd{U\setminus W}{u}]}}$ is
    a star with center $u$ and leaves $\nbhstd{U\setminus W}{u}$.

  \item Second, as there are no 3\nobreakdash-, 4\nobreakdash-hops
    in~$U$ \wrt $W$, the graphs
    $\setdescr{\starg_u}{u\in \nbhstd{U\setminus W}{W}}$ as just
    defined are pairwise vertex disjoint.
  \end{enumerate}
  See \reffig{fig:stars} for an illustration.
  
  \begin{figure}
    \centering
    \includegraphics{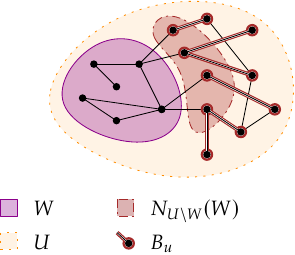}
    \caption{A depiction of the graphs $\starg_u$ as defined in the
      proof of \reflem{lem:colouring-contraction}.}
    \label{fig:stars}
  \end{figure}
  
  Let $G'$ be the graph obtained from
  $G[U\setminus W]$ by contracting each star $B_u$ to
  a vertex $\widetilde u$.
  Note that from a $3$\nobreakdash-colouring $\widetilde\chi$ of~$G'$
  we can define a $3$-colouring $\chi$ of
  ${G[U\setminus (W \union \nbhstd{U}{W})]}$ with the desired
  property: we claim that the 3-colouring
  $\chi \colon U\setminus (W \union \nbhstd{U}{W}) \rightarrow
  \set{1,2,3}$ defined by
  \begin{equation}
    \chi(w) =
    \begin{cases}
      \widetilde\chi(\widetilde u)
      &\text{if $w$ is a leaf of a star $B_u$,}\\
      \widetilde\chi(w)
      &\text{otherwise}
    \end{cases}
  \end{equation}
  is a proper 3-colouring of
  ${G[U\setminus (W \union \nbhstd{U}{W})]}$.  Note that since the
  graphs $\setdescr{B_u}{u\in \nbhstd{U\setminus W}{W}}$ are pairwise
  disjoint, the colouring $\chi$ assigns precisely one colour per
  vertex and, as each of the induced graphs~$B_u$ is a star, the
  colouring is proper. Finally, for all vertices
  $u \in \nbhstd{U\setminus W}{W}$ the set $\nbhstd{U\setminus W}{u}$
  is monochromatic by construction.
  
  It remains to argue that $G'$ is $3$\nobreakdash-colourable. Here we
  want to use the extreme sparsity of $G$: we enforce such extreme
  sparsity on $G$ that even though $G'$ is obtained from $G$ by
  contractions and hence has higher edge density than $G$, the graph
  $G'$ is still $(\setsize{V(G')}, 1/3)$\nobreakdash-sparse. Thus by
  \reflem{lem:sparse-colourable} it follows that~$G'$ is
  $3$\nobreakdash-colourable.

  Let us verify that~$G'$ is indeed
  $(\setsize{V(G')}, 1/3)$\nobreakdash-sparse, that is, we need to
  argue that every subset $T'\subseteq V(G')$ satisfies
  $|E(T')| \leq (1+1/3)|T'|$.
  Fix such a subset~$T'\subseteq V(G')$ and let $T$ be the preimage of
  $T'$ in the contraction. We estimate $\setsize{E(T')}$ in terms of
  $\setsize{E(T)}$.
  Let $\set{u_1, \ldots, u_t} = T\cap \nbhstd{U\setminus W}{W}$.
  Observe that in $T'$ each star $B_{u_i}$ is contracted to a vertex
  $\widetilde{u}_i$.  Let
  $s = \sum_{i=1}^t (\setsize{V(B_{u_i})} - 1)$. It holds that
  \begin{equation}
    \label{eq:starsizes}
    s\leq (\Delta-1)t \leq (\Delta-1)\setsize{T'} \eqcomma
  \end{equation}
  where we use the assumption that the degree of every vertex $u_i$ in
  $G[U\setminus W]$ is bounded by $\Delta$.  Furthermore, it holds
  that $\setsize{T} = \setsize{T'} + s$ and, because all the edges in
  the stars $\starg_{u_i}$ are contracted, that
  $\setsize{E(T)} \geq \setsize{E(T')} + s$.
  
  By assumption, $G$ is $(\ell, 1/3\Delta)$\nobreakdash-sparse and
  $\setsize{T} \leq \setsize{U} \leq \ell$. This implies that
  $\setsize{E(T)} \leq (1+1/3\Delta)|T|$ which, if combined with the above,
  gives $\setsize{E(T')} +s \leq (1+1/3\Delta)(\setsize{T'}+s)$. Using
  \refeq{eq:starsizes} we may conclude that
  \begin{equation}
    \label{eq:bottleneck}
    \setsize{E(T')}
    \leq
    \left(1+\frac{1}{3\Delta}\right)\setsize{T'} +
    \frac{s}{3\Delta}
    \leq
    \left(1+\frac{1}{3\Delta}\right)\setsize{T'} +
    \frac{(\Delta-1)\setsize{T'}}{3\Delta}
    =
    \left(1+\frac{1}{3}\right) \setsize{T'}
    \eqperiod
  \end{equation}
  Therefore, $G'$ is $(\setsize{G'}, 1/3)$\nobreakdash-sparse and
  thus, by \reflem{lem:sparse-colourable}, it follows that $G'$ is
  $3$\nobreakdash-colourable.
\end{proof}

The second property of the closure that we need is that we have some
control on how large it is, which is necessary to show that the
\satcondition condition of \reflem{lem:ar-method} holds.  We establish
an upper bound on the size of the closure by proving that for sparse
graphs the closure of a not too large set $U$ is the set of
descendants of a set $\Dset$ that is not much larger than $U$, as
stated in the lemma below.
Combining this with \reflem{lem:sparse-colourable}, it follows that in
order to establish that the set $\closure{U}$ is $3$-colourable---and
thus $\Col{G[\closure{U}]}{3}$ is satisfiable---it is sufficient to
prove an upper bound on the size of the set of descendants of $\Dset$.
Jumping ahead a bit, we will be able to establish such an upper bound
in the next section by choosing an appropriate ordering of the
vertices.

\begin{lemma}\label{lem:closure-lemma}
  Suppose that $G=(V, E)$ has a linear order on $V$ and is
  $(\ell, 1/3\sparseparam)$\nobreakdash-sparse for
  $\sparseparam\geq 2$.
  Let~$U\subseteq V$ be a set of size
  $\setsize{U} \leq \ell/25\sparseparam$ such that any decreasing path
  in $G[V\setminus U]$ has at most $\sparseparam$ vertices.
  Then there exists a set $\Dset\subseteq V$ such that
  $\Dset\supseteq U$, $\setsize{\Dset} \leq 25 \setsize{U}$ and
  ${\closure{U}} = \Desc{\Dset}$.
\end{lemma}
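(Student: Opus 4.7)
The plan is to follow the algorithmic construction of the closure given in Algorithm~\ref{alg:closure} and build the seed set $Z$ in parallel. I will set $Z_0 = U$, so that $W_0 = \Desc{U} = \Desc{Z_0}$, and at iteration $i+1$, when a $2$-, $3$-, $4$-hop or lasso $Q_{i+1}$ with respect to $W_i$ is identified, I will append its fresh interior vertices by putting $Z_{i+1} = Z_i \cup (V(Q_{i+1}) \setminus W_i)$. A quick induction, using that $W_i = \Desc{Z_i}$ is already closed under taking descendants, gives $W_{i+1} = \Desc{W_i \cup V(Q_{i+1})} = \Desc{Z_{i+1}}$, so in particular $\closure{U} = \Desc{Z}$ for the final set $Z = Z_{\mathrm{end}}$. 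Since every $2$-, $3$-, or $4$-hop has at most $3$ interior vertices and a lasso has exactly $3$, each iteration adds at most $3$ vertices to $Z$, so if $t$ denotes the total number of iterations then $|Z| \le |U| + 3t$.

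It therefore suffices to prove $t \le 8|U|$, which I plan to do by a sparsity-based edge-counting argument. Each hop or lasso $Q_{i+1}$ contributes at least $r_{i+1}+1 \ge 2$ edges, where $r_{i+1} \in \set{1,2,3}$ is the number of its interior vertices; indeed, every edge of $Q_{i+1}$ is incident to at least one interior vertex, which by definition lies outside $W_i$. Since a vertex once added to $W$ never returns as an interior vertex at a later iteration, edges produced by distinct iterations are pairwise distinct, and altogether they yield at least $(|Z|-|U|)+t$ edges inside the set $V^* := \bigcup_i V(Q_i) \subseteq W_{\mathrm{end}}$.

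The step I expect to be the main obstacle is turning this edge count into a contradiction via sparsity, for which I need an upper bound on $|V^*|$ linear in $|Z|$. Here the hypothesis that every decreasing path in $G[V \setminus U]$ has at most $\sparseparam$ vertices enters crucially: every endpoint of a hop lies in $\Desc{z}$ for some $z \in Z$, and the witnessing decreasing path has its portions in $V \setminus U$ of length at most $\sparseparam - 1$. This lets me include those endpoints (together with the vertices of the short witnessing paths) in a slightly enlarged vertex set $V^{\sharp} \supseteq V^*$ whose size is at most $C \cdot |Z|$ for a constant $C$ depending only on $\sparseparam$, and the hypothesis $|U| \le \ell/(25\sparseparam)$ is precisely what keeps $|V^{\sharp}| \le \ell$ so that sparsity applies. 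The $(\ell, 1/(3\sparseparam))$\nobreakdash-sparsity of $G$ then gives $\Setsize{\Edges{V^{\sharp}}} \le (1 + 1/(3\sparseparam))|V^{\sharp}|$, and combining this with the lower bound $(|Z|-|U|)+t$ on the number of hop edges and the vertex bound $|Z| \le |U| + 3t$, the parameters can be tuned to force $t \le 8|U|$, and hence $|Z| \le 25|U|$, as desired.
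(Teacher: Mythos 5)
Your construction of the seed set and the reduction to bounding the number of iterations $t$ match the paper's proof: the paper likewise takes $\Dset = U \cup \bigcup_j V(\hopexample_j)$, verifies $\Desc{\Dset} = \closure{U}$ exactly as you do, and bounds $t \leq 5\setsize{U}$ by a sparsity/edge-density argument. That part of your proposal is correct.

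The gap is in the final edge-counting step, and it is not a matter of ``tuning parameters''---the inequality you aim for is false as stated. You lower-bound the number of edges by the hop edges alone, $(\setsize{\Dset}-\setsize{U})+t$, and upper-bound the vertex set $V^{\sharp}$ by including the hop endpoints and their witnessing decreasing paths. But those endpoints and path vertices contribute vertices to $V^{\sharp}$ without any corresponding contribution to your edge count, so the density of the charged structure need never exceed $1$, let alone $1+1/3\sparseparam$. Concretely, suppose every $\hopexample_i$ is a $2$\nobreakdash-hop whose two endpoints are fresh descendants, distinct across iterations. Then your edge count is $2t$, while $V^{\sharp}$ already contains at least $3t$ vertices ($t$ interior vertices plus $2t$ endpoints) before even adding the witnessing paths, and sparsity only forbids more than $(1+1/3\sparseparam)\cdot 3t$ edges. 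No contradiction arises for any $t$, so you cannot conclude $t \leq 8\setsize{U}$. The endpoints genuinely can be $2t$ distinct vertices far from $\Dset$, since they only need to lie in $\Wset_{i-1}=\Desc{Z_{i-1}}$, which can be much larger than $\bigcup_j V(\hopexample_j)$.

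The missing idea---which is how the paper closes exactly this hole---is to count the edges of the witnessing decreasing paths as well, and to route those paths back to the \emph{previously accumulated charged set} rather than merely to $\Desc{\Dset}$. The paper defines $U_0=U$ and $U_i = U_{i-1}\cup V(P_u\cup P_v\cup \hopexample_i)$, where $P_u,P_v$ are shortest decreasing paths from $U_{i-1}$ to the endpoints of $\hopexample_i$ (these exist by \refclaim{claim:si-well-defined}). The union $F=P_u\cup P_v\cup\hopexample_i$ is connected and meets $U_{i-1}$ in at most two vertices, so $\setsize{E(F)}\geq\setsize{V(F)}-1$ yields at least one more new edge than new vertices at every iteration (\refclaim{claim:si-bounded}). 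This per-iteration surplus of $+1$, accumulated over $\Theta(\setsize{U})$ iterations while $\setsize{U_i}$ grows only by $O(\sparseparam)$ per step, is what eventually pushes the density of $G[U_i]$ above $1+1/3\sparseparam$ and contradicts sparsity. Without the path edges in the ledger, the surplus is negative and the argument cannot be completed.
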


\begin{proof}
  Recall from Algorithm~\ref{alg:closure} that the closure of a set
  $U\subseteq V$ can be defined to be the final set in a sequence
  $(\Wset_0, \Wset_1, \ldots, \Wset_{\mathrm{end}})$, where
  $\Wset_0 =\Desc{U}$ and $\Wset_{i+1}$ is obtained from~$\Wset_{i}$
  by appending a 2\nobreakdash-, 3\nobreakdash-, 4\nobreakdash-hop or
  a lasso \wrt $\Wset_{i}$ and then taking the descendants of the
  resulting set of vertices. A key observation is that adding such
  hops or lassos to $\Wset_i$ adds more edges to the induced graph
  $G[\Wset_i]$ than vertices, thus increasing the edge density. As the
  graph~$G$ is locally sparse we can conclude that the sequence
  $(\Wset_0, \Wset_1, \ldots, \Wset_{\mathrm{end}})$ needs to be
  rather short, which allows us to argue the size upper bound on
  $\Dset$.

  In the following,
  for each $\Wset_i$ we identify a vertex set~$U_i$ such that the edge
  density of the graph~$G[U_i]$ increases with $i$. The idea is as
  follows. Since the sets $U_i$ grow very slowly and thus the local
  sparsity always applies, we will be able to conclude that the number
  of iterations in the construction of $\closure{U}$ is bounded.  As
  the vertices in $\Wset_{i}$ are the descendants of the set $U_i$, it
  holds that the vertices in $\Wset_{\mathrm{end}}$ are the
  descendants of a set which is not much larger than the initial set
  $U$, whereby the lemma follows.
  
  Let us now implement this plan. We define $U_i$ inductively as
  follows. Let $U_0 = U$ and let~$\hopexample_i$ be the hop or lasso
  added to $\Wset_{i-1}$ at iteration~$i\geq 1$. If we denote by $u$
  and $v$ the endpoints of $\hopexample_i$ (where we could have
  $u = v$) and let $\graphpath_u$ and $\graphpath_v$ be two shortest
  decreasing paths from $U_{i-1}$ to $u$ and~$v$, respectively, then
  it holds that
  $U_i = U_{i-1} \union V(P_u \union P_v \union \hopexample_i)$.  See
  \reffig{fig:s_i} for an illustration.

  For our definition of $U_i$ to be meaningful, we need to establish
  that the paths $\graphpath_u$ and $\graphpath_v$ always exist.

  \begin{figure}
    \centering
    \includegraphics{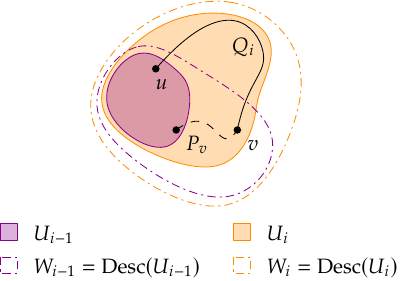}
    \caption{A depiction of the construction of $U_i$ as defined in the
      proof of \reflem{lem:closure-lemma}.}
    \label{fig:s_i}
  \end{figure}
  \begin{claim}
    \label{claim:si-well-defined}
    For every vertex~$v$ in~$\Wset_i$, there exists
    a decreasing path in~$\Wset_i$ from some vertex in~$U_i$ to $v$. 
  \end{claim}
  \begin{proof}
    The proof is by induction on $i$. The base case $i = 0$ holds
    because $\Wset_0 = \Desc{U_0}$.  For the induction step, suppose
    that the claim holds for $i-1$. By definition, the vertices
    in~$\Wset_i\setminus \Wset_{i-1}$ are descendants of a vertex
    in~$\hopexample_i$, and all vertices in~$\hopexample_i$ are
    contained in~$U_i$. The claim follows.
  \end{proof}
  Next, we show that~$\setsize{U_i}$ grows slowly with $i$ and that
  the edge density $\setsize{E(U_i)}/\setsize{U_i}$ exceeds the
  sparsity threshold $(1+ 1/3\sparseparam)$ after a small number of
  iterations.

  \begin{claim}
    \label{claim:si-bounded} 
    For $U_i$ as commented above it holds
    that~$\setsize{U_i \setminus U_{i-1}}\leq 2\sparseparam +
    \setsize{V(\hopexample_i)} - 4$ and
    $\setsize{E(U_i)} \geq \setsize{E(U_{i-1})} + \setsize{U_i
      \setminus U_{i-1}} + 1$.
  \end{claim}
  \begin{proof}
    Let~$F$ be the graph defined by the union of the edges in~$P_u,
    P_v$ and~$Q_i$.
    Since the paths $\graphpath_u$ and $\graphpath_v$ are decreasing
    paths, according to the statement of \reflem{lem:closure-lemma},
    they contain at most $\sparseparam$ vertices each, so
    ${3 \leq \setsize{V(F)} \leq 2\sparseparam +
      \setsize{V(\hopexample_i)} - 2}$. Moreover, the endpoints of~$F$
    are contained in~$U_{i-1}$ and all other vertices in~$F$ are
    outside of $U_{i-1}$. By our choice of $P_u$ and $P_v$ there are
    two cases, depending on whether $F$ contains a cycle or not.
    
    \textbf{Case 1:} If there is no cycle in~$F$, then
    $\setsize{V(F)\intersection U_{i-1}} = 2$ so
    $\setsize{U_i\setminus U_{i-1}}= \setsize{V(F)} - 2$. Moreover
    $\lvert E(F) \rvert \geq \setsize{V(F)} - 1$ since $F$ is
    connected.

    \textbf{Case 2:} If $F$ contains a cycle, then
    $\setsize{V(F)\intersection U_{i-1}} = 1$, hence
    $\setsize{U_i\setminus U_{i-1}}= \setsize{V(F)}-1$. In addition,
    $\lvert E(F) \rvert \geq \setsize{V(F)}$ since $F$ is connected
    and contains a cycle. Moreover, it holds that
    ${\setsize{V(F)} \leq 2\sparseparam +
      \setsize{V(\hopexample_i)} - 3}$.

    Since in both cases the number of added vertices is
    bounded~${\setsize{U_i\setminus U_{i-1}} \leq 2\sparseparam +
      \setsize{V(\hopexample_i)} - 4}$ and the number of edges in the
    subgraph induced by~$U_i$ can be lower
    bounded~$\setsize{E(U_i)} \geq \setsize{E(U_{i-1})} + \setsize{U_i
      \setminus U_{i-1}} + 1$ the statement follows.
  \end{proof}

  Recall that we want to show that the edge density of the induced
  subgraph~$G[U_i]$ increases with $i$. Since $G$ is sparse it thus
  follows that the number of iterations in the construction of
  $\closure{U}$ is bounded.

  Let $s = \setsize{U}$. Towards contradiction, suppose
  that~$i \geq 5s + 1$. Note that~$|V(Q_i)|\leq 5$, so by
  \refclaim{claim:si-bounded} we have
  $\setsize{U_i\setminus U_{i-1}} \leq 2\sparseparam + 1$. This
  implies that
   \begin{align}
     {\setsize{U_{5s+1}}  = s +
     \sum_{i=1}^{5s + 1}
     \setsize{U_{i}\setminus U_{i-1}}
     \leq s + (5s + 1) (2\sparseparam + 1) 
         < 16\sparseparam\setsize{U} <
           \ell} \eqcomma
   \end{align}
   and therefore, since $G$ is
   $(\ell, 1/3\sparseparam)$\nobreakdash-sparse, it holds that
   ${ \setsize{E(U_{5s + 1})} }/{ \setsize{U_{5s+1}} } \leq 1 +
   1/3\sparseparam$.  However, by \refclaim{claim:si-bounded}, we have
   that
   \begin{subequations}\label{eq:density}
   \begin{align}
     \frac{
     \setsize{E(U_{5s + 1})}
     }{
     \setsize{U_{5s+1}}
     }
     &= 
       \frac{
       \setsize{E(U)}+
       \sum_{i=1}^{5s + 1}
       \setsize{E(U_{i})
       \setminus
       E(U_{i-1})}
       }{
       s+
       \sum_{i=1}^{5s + 1}
       \setsize{U_{i}\setminus U_{i-1}}
       } \\ \label{eq:size-1}
     &\geq
       \frac{
       \setsize{E(U)}+
       \sum_{i=1}^{5s + 1}
       \left(
       \setsize{U_{i}
       \setminus
       U_{i-1}
       }+
       1
       \right)
       }{
       s+
       \sum_{i=1}^{5s + 1}
       \setsize{U_{i}\setminus U_{i-1}}
       }\\ \label{eq:size-2}
     &\geq
       \frac{
         0 + \sum_{i=1}^{5s+1}(2\sparseparam+2)
       }{
         s + \sum_{i=1}^{5s+1}(2\sparseparam+1)
       }
     \\ \label{eq:size-3}
     &>
     1 +
     \frac{1}{3\sparseparam}\eqcomma
   \end{align}
 \end{subequations}
 where for \refeq{eq:size-1} we use that
 ${\setsize{E(U_i)} \geq \setsize{E(U_{i-1})} + \setsize{U_i \setminus
     U_{i-1}} + 1}$, and then for \refeq{eq:size-2}, we observe that
 the fraction decreases as
 $ \sum_{i=1}^{5\setsize{U} + 1} \setsize{U_{i}\setminus U_{i-1}}$
 increases and thus, along with the bound
 $\setsize{U_i\setminus U_{i-1}} \leq 2\sparseparam + 1$, we obtain
 the claimed inequality. For the final inequality \refeq{eq:size-3} we
 use $\sparseparam\geq 2$. This contradicts the assumption that $G$ is
 $(\ell,1/3\sparseparam)$-sparse and it hence follows
 that~${i \leq 5\setsize{U}}$.

 Let $i_{\mathrm{end}}$ be the last iteration and let
 $\Dset = U \union \bigcup_{j \leq i_{\mathrm{end}}} V(\hopexample_j)$.
 We claim that $\Desc{\Dset} = \closure{U}$ and
 $\setsize{\Dset} \leq 25 \setsize{U}$.  Indeed, at iteration $i$ in the
 construction, $\Wset_{i}$ is the set of descendants of
 $U \union \bigcup_{j \leq i} V(\hopexample_j)$, and the hop or
 lasso~$\hopexample_i$ added to $\Wset_{i-1}$ contains at most $4$
 vertices not already in $\Wset_i$.  Therefore, $\Desc{\Dset} = \closure{U}$
 and since $i_{\mathrm{end}} \leq 5\setsize{U}$ it follows that
 $\setsize{\Dset} \leq \setsize{U} + 4\cdot 5\setsize{U} \leq
 25\setsize{U}$.
\end{proof}

\section{A Lower Bound for 3-Colourability on Sparse Random Graphs} 
\label{sec:3-colouring-hard-merged}
We are now ready to prove a linear degree lower bound for polynomial
calculus refutations of the claim that sparse random graphs are
$3$-colourable. In fact, we prove something slightly stronger: we show
that if a graph is locally very sparse and has only few vertices of
high degree, then it is hard for polynomial calculus to refute the
claim that the graph is $3$-\nobreakdash colourable.

\begin{theorem}[Main theorem]
  \label{th:3-colouring-hard-sparse}
  Let $c,\Delta,k$ and $\ell$ be integers such that $c > k\geq 3$ and
  let $G=(V,E)$ be a $c$-colourable,
  $(\ell, 1/3\Delta)$\nobreakdash-sparse graph.  For any
  set~$\badset_\Delta$ such that $G[V\setminus\badset_\Delta]$ has
  maximum degree at most $\Delta$, any polynomial calculus
  refutation of $\Col{G}{k}$ over any field requires degree
  $\ell/50\Delta^{c-1} - \setsize{\badset_\Delta}$.
\end{theorem}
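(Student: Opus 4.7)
The plan is to apply the Alekhnovich--Razborov framework formalised in \reflem{lem:ar-method} to the polynomial system $\Col{G}{k}$ with degree parameter $D := \ell/(50\Delta^{c-1}) - |\badset_\Delta|$. The construction follows the blueprint of~\cite{RT22GraphsLargeGirth}, with the sparsity hypothesis replacing the large-girth hypothesis and with the exceptional set $\badset_\Delta$ absorbed uniformly into every support set.

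\textbf{Setup and support.} Fix a proper $c$-colouring $\chi\colon V\to[c]$ of $G$ and order the vertices by $u\prec v$ iff $\chi(u)<\chi(v)$, with arbitrary tie-breaking, extending $\prec$ to an admissible order on monomials. The key consequence is that every strictly decreasing path in $G$ traverses distinct colour classes, hence has at most $c$ vertices. Let $G'=G[V\setminus\badset_\Delta]$, which inherits the $(\ell, 1/(3\Delta))$-sparsity from $G$ and has maximum degree $\Delta$. For a variable set $Y$, let $V(Y)\subseteq V$ denote the vertices mentioned in $Y$, and set
\begin{equation*}
W(Y) \;=\; \closure{V(Y)\setminus\badset_\Delta}\cup\badset_\Delta,
\end{equation*}
where the closure is computed in $G'$ with the induced ordering. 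Let $S(Y)$ be the set of axioms of $\Col{G}{k}$ mentioning only variables $x_{v,i}$ with $v\in W(Y)$. The three conditions of \refdef{def:acceptable} then follow from monotonicity of the closure operation together with the observation that every axiom mentions only the vertices in its own leading monomial.

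\textbf{Satisfiability condition.} For a monomial $m$ with $\deg m\le D$, we have $|V(m)\setminus\badset_\Delta|\le D$. Assuming $c\le\Delta$ (otherwise the stated bound is already vacuous), \reflem{lem:closure-lemma} applied inside $G'$ with $a=\Delta$ produces a set $Z\subseteq V\setminus\badset_\Delta$ with $|Z|\le 25D$ and $\closure{V(m)\setminus\badset_\Delta}=\Desc{Z}$ computed in $G'$. Each $v\in Z$ has at most $1+\Delta+\cdots+\Delta^{c-1}\le 2\Delta^{c-1}$ descendants in $G'$, since decreasing paths in $G'$ have at most $c$ vertices with branching degree at most $\Delta$. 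Hence
\begin{equation*}
|W(m)|\;\le\; 25D\cdot 2\Delta^{c-1}+|\badset_\Delta| \;\le\; 50\Delta^{c-1}(D+|\badset_\Delta|)\;\le\;\ell.
\end{equation*}
Since $G$ is $(\ell, 1/(3\Delta))$-sparse with $1/(3\Delta)<1/2$, \reflem{lem:sparse-colourable} yields a proper $3$-colouring of $G[W(m)]$, which is in particular a proper $k$-colouring, so $S(m)$ is satisfiable.

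\textbf{Reducibility condition and the main obstacle.} The forward direction of the reducibility condition is immediate from $\langle S(m')\rangle\subseteq\langle S(m)\rangle$. For the reverse direction, a standard application of \reflem{lem:implication-ideal} reduces matters to showing that every proper $k$-colouring of $G[W(m')]$ extends to a proper $k$-colouring of $G[W(m)]$. We apply \reflem{lem:colouring-contraction} inside $G'$ with $W=\closure{V(m')\setminus\badset_\Delta}$, which is closed in $G'$ by construction, and $U=\closure{V(m)\setminus\badset_\Delta}$, obtaining a proper $3$-colouring of the interior $U\setminus(W\cup N_U(W))$ such that the $G'$-neighbours in $U\setminus W$ of each boundary vertex are monochromatic. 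We extend the given $k$-colouring of $G[W(m')]$ to the interior via this $3$-colouring and to each boundary vertex $u$ by choosing a colour in $[k]$ that avoids three forbidden sources: the colour of the unique $G'$-neighbour of $u$ in $W$, the monochromatic colour of its interior neighbours, and the colours of its $\badset_\Delta$-neighbours. The main technical obstacle is controlling this last source, since a boundary vertex may a priori have many $\badset_\Delta$-neighbours carrying many distinct colours. This is resolved by exploiting that $\badset_\Delta$ is contained in $W(Y)$ for every $Y$, so its colouring is fixed a priori and identical on both sides of the extension, and by choosing the interior $3$-colouring from a palette in $[k]$ compatible with the already-fixed $\badset_\Delta$ colours, ensuring that at least one colour in $[k]$ remains permissible at each boundary vertex. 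Once the extension is established, \reflem{lem:ar-method} yields the degree lower bound $D$, and the exponential size lower bound then follows from the size-degree relation recalled in \refsec{sec:proof-complexity}.
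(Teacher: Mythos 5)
Your overall architecture is the right one (support via a closure, satisfiability from local $3$-colourability, reducibility from a colouring-extension argument, then \reflem{lem:ar-method}), but there is a genuine gap in how you define the support, and it is exactly the point where $\badset_\Delta$ has to be handled with care. You set $W(Y)=\closure{V(Y)\setminus\badset_\Delta}\cup\badset_\Delta$ with the closure computed inside $G'=G[V\setminus\badset_\Delta]$. This set is in general \emph{not} closed in $G$: a vertex $u\notin W(Y)$ can have one neighbour in $\closure{V(Y)\setminus\badset_\Delta}$ and, in addition, arbitrarily many neighbours in $\badset_\Delta$ (the degree bound $\Delta$ only constrains $G'$, not edges into $\badset_\Delta$). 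The entire reducibility mechanism hinges on every boundary vertex having a \emph{unique} neighbour in $W$: that is what lets one reserve a free colour for $u$ as a function of a single $W$-vertex, and, in the algebraic version, what lets one substitute $x_{u,c}$ by a polynomial in the variables of that one neighbour which is smaller in the admissible order (so leading monomials are preserved). With two or more differently coloured $W$-neighbours plus a monochromatic interior neighbourhood, a boundary vertex can have all $k=3$ colours forbidden, and no extension exists. Your proposed fix does not repair this: the colouring of $\badset_\Delta$ is \emph{not} ``fixed a priori'' --- the reducibility condition quantifies over the whole ideal $\langle S(m')\rangle$, equivalently over \emph{all} proper $k$-colourings of $G[W(m')]$, including all colourings of $\badset_\Delta$, and the extension (or the substitution) must work uniformly for every one of them. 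The paper's resolution is to define the monomial closure as $\closure{V(m)\cup\badset_\Delta}$, computed in all of $G$ under a vertex order that places $\badset_\Delta$ strictly below $V\setminus\badset_\Delta$ (so that $\Desc{\badset_\Delta}\subseteq\badset_\Delta$ and the size bound survives); the absence of $2$-hops with respect to this genuinely closed set is what rules out the problematic boundary vertices.

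A secondary imprecision: reducing the reducibility condition to ``every proper $k$-colouring of $G[W(m')]$ extends to $G[W(m)]$'' is not by itself a ``standard application of \reflem{lem:implication-ideal}''. Reducibility is a statement about leading monomials, so one needs the extension to be realized by a substitution $\rho$ mapping each variable of $U\setminus W$ to a polynomial in variables of strictly smaller order with $\restrict{\langle U\rangle}{\rho}\subseteq\langle W\rangle$ and $\restrict{m}{\rho}=m$; this is the content of the paper's Reducibility Lemma (\reflem{lem:local-reduction-strong}) and its explicit substitution \refeq{eq:substitution}, and it again requires the single-$W$-neighbour property above.
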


We defer the proof of \refthm{th:3-colouring-hard-sparse} to
\refsec{sec:proof-main}, and first show how our results follow from
this theorem. Intuitively, the above theorem holds because sparse
graphs are locally $3$-colourable (see \reflem{lem:sparse-colourable})
and hence the colouring formula defined over sparse graphs is locally
satisfiable. Before diving into the proof of
\refthm{th:3-colouring-hard-sparse}, let us see how we can use
\reflem{lem:razborov-sparse} to obtain degree lower bounds for random
graphs. Note that the condition on the graph $G$ in
\refthm{th:3-colouring-hard-sparse} is not inherently random. However,
to the best of our knowledge, there are no explicit constructions of
graphs that are this sparse. In fact, it is stated as an open problem
in the survey by Hoory et al.~\cite[Open problem
10.8]{HLW06ExpanderGraphs} to explicitly construct such graphs.

Recall from \reflem{lem:chromatic-number-random} that if $d \geq 6$,
then graphs sampled from the \erdosrenyi random graph distribution
$\gndn$ or the random $d$-regular graph distribution $\gnd$ are \aas
not $3$-colourable. In light of this, the below statements are
interesting in the parameter regime $d \geq 6$.

\begin{corollary}[Colouring lower bound for random regular graphs]
  \label{th:3-colouring-hard}
  There exists an absolute constant $C$ such that for positive
  integers $n$ and $d \geq 2$ satisfying $6 d^3 \leq \log n$ the
  following holds.
  If $G$ is a graph sampled from $\gnd$ and $k\geq 3$ is an integer,
  then \aas every polynomial calculus refutation of $\Col{G}{k}$ over
  any field requires degree $d^{-Cd} \cdot n$.
\end{corollary}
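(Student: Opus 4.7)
The plan is to derive the corollary by instantiating \refth{th:3-colouring-hard-sparse} with parameters tailored to a random $d$\nobreakdash-regular graph, drawing the sparsity and chromatic-number facts from \reflem{lem:razborov-sparse} and \reflem{lem:chromatic-number-random} respectively.

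First, I would invoke part~2 of \reflem{lem:razborov-sparse} with $\epsilon = 1/(3d)$, so that the sparsity parameter matches the $1/(3\Delta)$ required by \refth{th:3-colouring-hard-sparse} when $\Delta = d$. The precondition $d^2 \leq \epsilon\delta\log n$ reduces to $\delta \geq 3d^3/\log n$, which is satisfied by $\delta = 1/2$ precisely because the corollary assumes $6d^3 \leq \log n$. With these choices, the sparsity lemma yields \aas that $G$ is $(\ell, 1/(3d))$\nobreakdash-sparse, where $\ell = (8d)^{-(1+\delta)(1+\epsilon)/\epsilon}\cdot n$. The exponent evaluates to $(1+\delta)(1+\epsilon)/\epsilon = \frac{3}{2}(1 + 1/(3d))\cdot 3d = O(d)$, so $\ell \geq d^{-C_1 d}\cdot n$ for some absolute constant $C_1$.

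Next, since $G$ is $d$\nobreakdash-regular, every vertex has degree exactly $d$, so I may take $\badset_d = \emptyset$, making the maximum-degree hypothesis trivially satisfied with $\Delta = d$. For the chromatic number, \reflem{lem:chromatic-number-random} gives \aas that $\chi(G) \leq 2d/\log d$. I then set $c = \max(k+1, \lceil 2d/\log d \rceil)$, ensuring both $c > k \geq 3$ and that $G$ is $c$\nobreakdash-colourable. In the case $k+1 > \lceil 2d/\log d \rceil$ the graph is \aas $k$\nobreakdash-colourable, the formula $\Col{G}{k}$ is satisfiable, and the desired lower bound holds vacuously; so we may restrict attention to $c \leq 2d/\log d + 1$.

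Plugging everything into \refth{th:3-colouring-hard-sparse} gives a polynomial calculus degree lower bound of $\ell/(50\Delta^{c-1}) - \setsize{\badset_d} = \ell/(50\,d^{c-1})$. Since $d^{c-1} = 2^{(c-1)\log d} \leq 2^{2d}$ and $\ell \geq d^{-C_1 d}\cdot n$, the final bound becomes $d^{-Cd}\cdot n$ for a suitable absolute constant $C$, as claimed. The only delicate step is the parameter balancing: choosing $\epsilon = 1/(3d)$ forces $\delta \geq 3d^3/\log n$, which is precisely why the hypothesis $6d^3 \leq \log n$ appears in the statement, and this in turn keeps the ``sparsity exponent'' $(1+\delta)(1+\epsilon)/\epsilon$ linear in $d$ rather than blowing up, thereby maintaining $\ell = d^{-\Theta(d)}\cdot n$.
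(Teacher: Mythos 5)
Your proposal is correct and follows essentially the same route as the paper: apply \reflem{lem:razborov-sparse} with $\epsilon = 1/(3d)$ and $\delta = 1/2$ (which is exactly why the hypothesis $6d^3 \leq \log n$ appears), take $\Delta = d$ and $\badset_\Delta = \emptyset$, bound $c \leq 2d/\log d$ via \reflem{lem:chromatic-number-random}, and plug into \refth{th:3-colouring-hard-sparse}. Your extra care about the condition $c > k$ (treating the case where $G$ is \aas $k$-colourable as vacuous) is a minor refinement the paper glosses over, not a different argument.
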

\begin{proof}
  Let $\Delta = d$ and $\badset_\Delta = \emptyset$.  By
  \reflem{lem:chromatic-number-random} we have that \aas~$G$ is
  $c$\nobreakdash-colourable for $c \leq 2 d/\log{d}$, and by
  \reflem{lem:razborov-sparse} that \aas~$G$ is
  $(\ell, 1/3d)$\nobreakdash-sparse for~${\ell = (8d)^{-6d}\nvars}$.
  Note that to apply \reflem{lem:razborov-sparse}, we use
  $\epsilon = 1/3d$ and $\delta = 1/2$. We can now apply
  \refth{th:3-colouring-hard-sparse} and conclude that any polynomial
  calculus refutation of $\Col{G}{k}$, over any field, requires degree
  $\ell/50\Delta^{c-1} - \setsize{\badset_\Delta} =
  (8d)^{-6d}\nvars/50d^{2d/\log d-1} \geq d^{-Cd} \cdot n$, using that
  $C$ is a large enough constant.
\end{proof}

To prove the result for \erdosrenyi random graphs we need the
additional property that \aas there exists a small set
$\badset_\Delta$ of vertices such that $G[V\setminus\badset_\Delta]$
has maximum degree at most $\Delta$.

\begin{lemma}\label{lem:TDelta-small-v2}
  Let~$G=(V, E)$ be a graph sampled from~$\gndn$ where
  $d=\bigoh{\log n}$.  If $\Delta\geq d$ is such that
  $(\Delta/ed)^\Delta = \littleoh{n}$, then \aas there exists a set
  $\badset_\Delta$ of size at most $(ed/\Delta)^\Delta \cdot 2en$ such
  that the maximum degree in~$G[V\setminus \badset_\Delta]$ is at
  most~$\Delta-1$.
\end{lemma}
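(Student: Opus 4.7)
The plan is to define $\badset_\Delta = \set{v \in V : \vdegree[G]{v} \geq \Delta}$, the set of vertices whose degree in~$G$ is at least~$\Delta$. Since deleting vertices can only decrease degrees, the induced subgraph $G[V\setminus \badset_\Delta]$ automatically has maximum degree at most $\Delta - 1$ by construction, so the only remaining task is to upper bound $\setsize{\badset_\Delta}$ with probability tending to~$1$.

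The first-moment calculation is routine: each $\vdegree[G]{v}$ is distributed as $\mathrm{Bin}(n-1, d/n)$, giving $\Pr[\vdegree[G]{v} \geq \Delta] \leq \binom{n-1}{\Delta}(d/n)^\Delta \leq (ed/\Delta)^\Delta$ and hence $\Expop[\setsize{\badset_\Delta}] \leq n(ed/\Delta)^\Delta$ by linearity. A direct application of Markov's inequality would yield the claimed size bound only with probability $1 - 1/(2e)$, however, so I need a genuine concentration argument.

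For this I use Chebyshev's inequality, which reduces the problem to bounding the covariances $\mathrm{Cov}(\mathbf{1}_u, \mathbf{1}_v)$, where $\mathbf{1}_v = \mathbf{1}[\vdegree[G]{v} \geq \Delta]$. The crucial observation is that for distinct $u, v$ one can write $\vdegree[G]{u} = Y_u + X_{uv}$ and $\vdegree[G]{v} = Y_v + X_{uv}$, where $X_{uv}$ is the indicator of the edge $(u,v)$ and $Y_u, Y_v \sim \mathrm{Bin}(n-2, d/n)$ are independent of each other and of $X_{uv}$. Conditioning on $X_{uv}$ then gives the clean identity
\begin{equation*}
  \mathrm{Cov}(\mathbf{1}_u, \mathbf{1}_v) = \tfrac{d}{n}\bigl(1 - \tfrac{d}{n}\bigr)(b - a)^2
  \eqcomma
\end{equation*}
where $a = \Pr[Y \geq \Delta]$ and $b = \Pr[Y \geq \Delta - 1]$ for $Y \sim \mathrm{Bin}(n-2, d/n)$, and one more binomial tail estimate yields $b - a = \Pr[Y = \Delta - 1] = \bigoh{(\Delta/d)(ed/\Delta)^\Delta}$.

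Putting these together, $\Varianceop(\setsize{\badset_\Delta})$ decomposes into a diagonal part of size at most $n(ed/\Delta)^\Delta$ and an off-diagonal part of size $\bigoh{n\Delta^2(ed/\Delta)^{2\Delta}/d}$. Compared against the target threshold squared $[n(ed/\Delta)^\Delta]^2$, the diagonal ratio $1/(n(ed/\Delta)^\Delta)$ tends to zero by the hypothesis $(\Delta/ed)^\Delta = \littleoh{n}$, while the off-diagonal ratio $\Delta^2/(nd)$ tends to zero because that hypothesis combined with $d = \bigoh{\log n}$ forces $\Delta = \bigoh{\log n}$. Chebyshev's inequality applied to the gap $2en(ed/\Delta)^\Delta - \Expop[\setsize{\badset_\Delta}] \geq (2e - 1) n(ed/\Delta)^\Delta$ then yields $\setsize{\badset_\Delta} \leq 2en(ed/\Delta)^\Delta$ \aas. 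The main technical step is really just the clean derivation of the covariance identity via the decomposition of the degrees; once this is in hand, the two asymptotic comparisons are routine consequences of the binomial tail bound and the precise form of the hypothesis.
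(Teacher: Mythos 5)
Your proof is correct, but it takes a genuinely different route from the paper's. The paper does not take $\badset_\Delta$ to be the set of all high-degree vertices and control its size by a concentration inequality; instead it argues that if no valid set of size $\ell = (ed/\Delta)^\Delta\cdot 2en$ exists, then a greedy sequential removal of high-degree vertices must delete at least $\ell$ vertices, and it bounds the probability of this by a union bound over all $\binom{n}{\ell}$ candidate sets, using that at each step the conditional probability of encountering another vertex of degree at least $\Delta$ is at most $\binom{n-i}{\Delta}(d/n)^\Delta$; the choice of $\ell$ makes the resulting bound collapse to $2^{-\ell}$. So the paper obtains an exponentially small failure probability from a pure first-moment/union-bound argument, whereas your second-moment argument yields only a polynomially small failure probability---which is all the lemma asks for---and in exchange you work with the canonical set $\set{v : \vdegree[G]{v}\geq\Delta}$ and a clean, fully rigorous covariance identity $\mathrm{Cov}(\indic_u,\indic_v)=\tfrac{d}{n}(1-\tfrac{d}{n})(b-a)^2$, avoiding the somewhat delicate independence claim in the paper's sequential process. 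One small point to tighten: justifying the off-diagonal ratio $\Delta^2/(nd)\to 0$ solely by ``$\Delta=\bigoh{\log n}$'' is not quite enough, since the lemma statement does not lower-bound $d$; you additionally need $nd=\littleomega{\Delta^2}$. This does follow from the stated hypotheses (if $\Delta\leq ed$ then $nd\geq n\Delta/e=\littleomega{\Delta^2}$ because $\Delta = \bigoh{\log n}$, and if $\Delta>ed$ a short case analysis shows $(\Delta/ed)^\Delta=\littleoh{n}$ forces $n=\littleomega{\Delta^2/d}$), and it is immediate in the regime where the lemma is applied ($d>1$), but a sentence of justification is needed.
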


The proof of \reflem{lem:TDelta-small-v2} is mostly a standard
calculation and we present it in Appendix~\ref{sec:TDelta-small} for
completeness. We are now ready to prove our lower bound for
\erdosrenyi random graphs.

\begin{corollary}[Colouring lower bound for \erdosrenyi random graphs]
  \label{th:3-colouring-hard-gnp}
  There exists an absolute constant $C$ such that for $n \in \N^+$ and
  $d \in \R^+$ satisfying that $d > 1$ and
  $ d^{5} = \littleoh{\log n}$ the following holds.
  If $G$ is a graph sampled from $\gndn$ and $k\geq 3$ is an integer,
  then \aas every polynomial calculus refutation of $\Col{G}{k}$ over
  any field requires degree $d^{-Cd^5} \cdot n$.
\end{corollary}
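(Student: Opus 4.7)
The plan is to derive this corollary from \refth{th:3-colouring-hard-sparse} in exactly the same pattern as \refcor{th:3-colouring-hard} was derived in the $d$\nobreakdash-regular case, but now three probabilistic inputs are required instead of two: the chromatic-number bound (\reflem{lem:chromatic-number-random}), the sparsity bound (part~1 of \reflem{lem:razborov-sparse}), and the exceptional-set bound (\reflem{lem:TDelta-small-v2}). The essential new ingredient compared to the regular case is the use of the exceptional set $\badset_\Delta$: unlike in $\gnd$, a graph drawn from $\gndn$ will \aas contain vertices of degree substantially larger than $d$, so we cannot simply take $\Delta = d$ and must instead let $\Delta$ grow with $d$ and excise the offending vertices via \reflem{lem:TDelta-small-v2}.

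The concrete parameter choice I would make is $\Delta = d^5$ (up to lower-order terms), $\epsilon = 1/(3\Delta)$, and $\delta$ a small positive quantity (possibly slowly vanishing with $n$) such that $\epsilon\delta = \omega(1/\log n)$ while the exponent $(1+\delta)(1+\epsilon)/\epsilon$ stays as close to $3\Delta$ as possible. The hypothesis $d^5 = \littleoh{\log n}$ is precisely what makes this choice admissible under the preconditions of both \reflem{lem:razborov-sparse} and \reflem{lem:TDelta-small-v2}. Part~1 of the sparsity lemma then gives \aas that $G$ is $(\ell, 1/(3\Delta))$\nobreakdash-sparse with $\ell = (4d)^{-\bigoh{\Delta}} n = d^{-\bigoh{d^5}} n$; \reflem{lem:TDelta-small-v2} yields \aas an exceptional set with $\setsize{\badset_\Delta} \leq (ed/\Delta)^\Delta \cdot 2en = d^{-\bigomega{d^5}} n$, with the parameters tuned so that its implicit constant in the negative base-$d$ exponent strictly exceeds that of $\ell$; and \reflem{lem:chromatic-number-random} supplies a chromatic-number upper bound at most $\ceiling{2d/\log d}$ (replaced by $k+1$ when $2d/\log d$ is too small), from which the hypothesis $c > k \geq 3$ of \refth{th:3-colouring-hard-sparse} holds in the relevant parameter range.

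Plugging these four facts into \refth{th:3-colouring-hard-sparse} yields a polynomial calculus degree lower bound of $\ell/(50\Delta^{c-1}) - \setsize{\badset_\Delta}$. The factor $\Delta^{c-1} = d^{5(c-1)} = 2^{\bigoh{d}}$ contributes only a $\bigoh{d/\log d}$ correction to the base-$d$ exponent and is dwarfed by the $(4d)^{\bigoh{\Delta}}$ factor appearing in the denominator of $\ell$; meanwhile, the choice of $\Delta$ is engineered so that the subtracted term $\setsize{\badset_\Delta}$ is strictly smaller than $\ell/(50\Delta^{c-1})$ by a super-polynomial-in-$d$ factor, so that subtraction merely changes the remaining quantity by a constant. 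The bound therefore simplifies to $d^{-Cd^5} n$ for a suitable absolute constant~$C$.

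The main obstacle in executing this plan is the joint parameter balancing. The value of $\Delta$ must be large enough that $(ed/\Delta)^\Delta$ decays strictly faster (in its base-$d$ exponent) than $(4d)^{-\bigoh{\Delta}}$---without this, the exceptional-set subtraction in the main theorem would wipe out the bound---and large enough to accommodate the chromatic-number requirement $c > k$, yet small enough to satisfy the sparsity lemma's precondition $\epsilon\delta = \omega(1/\log n)$ and to keep the final exponent of the advertised form $\bigoh{d^5}$. The hypothesis $d^5 = \littleoh{\log n}$ is calibrated to thread exactly this needle, which also explains why the admissible range of $d$ here is more restrictive than the $6d^3 \leq \log n$ regime of the $d$\nobreakdash-regular case of \refcor{th:3-colouring-hard}, where the absence of~$\badset_\Delta$ allowed the simpler choice $\Delta = d$.
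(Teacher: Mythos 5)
Your proposal is correct and follows essentially the same route as the paper: combine \reflem{lem:chromatic-number-random}, part~1 of \reflem{lem:razborov-sparse}, and \reflem{lem:TDelta-small-v2}, and feed them into \refth{th:3-colouring-hard-sparse} with $\Delta = \Theta(d^5)$, $\epsilon = 1/(3\Delta)$, and a suitable constant $\delta$. The one caveat is that the multiplicative constant in $\Delta$ genuinely matters and lives in the constant factor rather than in lower-order terms: the paper takes $\Delta = (5d)^5$, and with the literal choice $\Delta = d^5$ the quantity $(ed/\Delta)^{\Delta}\cdot 2en$ would \emph{exceed} $\ell \approx (4d)^{-4\Delta}n$ (compare the bases $e/d^4$ and $(4d)^{-4} = 1/(256\,d^4)$), so the subtracted term would wipe out the bound --- your ``tuning'' remark correctly identifies this balancing act, and choosing the constant in $\Delta$ large enough (roughly $\Delta \geq 256e\cdot d^5$) resolves it exactly as you describe.
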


\begin{proof}
  Fix $\Delta = (5 d)^5$.  By \reflem{lem:chromatic-number-random} we
  have that \aas~$G$ is $c$-colourable for $c \leq 2 d/\log{d}$, and
  by \reflem{lem:razborov-sparse} it holds \aas that~$G$ is
  $(\ell, \epsilon)$\nobreakdash-sparse
  for~${\ell = (4d)^{-4(5d)^5}\nvars}$ and $\epsilon = 1/3\Delta$.
  Note that to apply \reflem{lem:razborov-sparse}, we can use
  $\delta = 1/4$, so that
  $(1+\epsilon)(1+\delta)/\epsilon \leq 4\Delta$.

  Let~$\badset_\Delta\subseteq V$ be a minimum size set such that
  $G[V\setminus\badset_\Delta]$ has maximum degree at most
  $\Delta$. By \reflem{lem:TDelta-small-v2}, we have that
  \begin{equation}
    \setsize{\badset_\Delta}
    \leq
    (ed/\Delta)^\Delta \cdot 2en
    =
    \frac{ 2en}{(5^5d^4/e)^{(5d)^5}}
    <
    \frac{ n}{2 \cdot (5d)^{4(5d)^5}}
    \eqcomma
  \end{equation}
  where the last inequality follows since
  $(5/e)^{(5d)^5} \geq (5/e)^{5^5} > 4e$.

  We can now apply \refth{th:3-colouring-hard-sparse} and conclude
  that any polynomial calculus refutation of $\Col{G}{k}$, over any
  field, requires degree
  \begin{subequations}
    \begin{align}
      \frac{\ell}{50\Delta^{c-1}} -
      \setsize{\badset_\Delta} 
      &\geq
        \frac{ n }
        {
        50
        \cdot
        (5 d)^{10d/\log d}
        \cdot
        (4d)^{4(5d)^5}
        }
        -
        \frac{n}{2\cdot (5d)^{4(5d)^5}}
      \\
      &\geq
        \frac{n }{ (5d)^{4(5d)^5} }
        -
        \frac{n}{2\cdot (5d)^{4(5d)^5}}
      \\
      &\geq
        d^{-Cd^5}
        \cdot
        n
        \eqcomma
    \end{align}
  \end{subequations}
  where for the second inequality we use that
  $50 \cdot (5d)^{10d/\log d} < 2^{10} \cdot (10)^{10d} \leq
  2^{(5d)^5}$ and for the last inequality we use that $C$ is a large
  enough constant.
\end{proof} 

\subsection{Proof of Main Theorem}
\label{sec:proof-main}

Fix $\badset_\Delta$ such that $G[V \setminus \badset_\Delta]$ has
maximum degree at most $\Delta$ and let
$X = \Set{x_{v, i}\mid v\in V(G),\; i \in [k]}$.

To prove \refth{th:3-colouring-hard-sparse}, our goal is to define a
\acceptable{\Col{G}{k}} $S$ that maps monomials in the polynomial
ring~$\F[X]$ to subsets of $\Col{G}{k}$ such that
\reflem{lem:ar-method} holds.  For brevity, given a set $U\subseteq V$
we denote the ideal $\langle \Col{G[U]}{k}\rangle$ by
$\langle U \rangle$ and refer to the polynomials in~$\Col{G[U]}{k}$ as
the generators of~$\langle U\rangle$.

Given a monomial $m$, we let $V(m)$ denote the set of vertices
mentioned by the variables in~$m$.  Moreover, given a linear order
$\vertexorder$ on the vertices, we say an admissible order
$\monomialorder$ of the monomials over the variables $X$ of
$\Col{G}{k}$ \emph{respects $\vertexorder$} if for any colours
$i, j \in [k]$ it holds that $x_{u,i} \monomialorder x_{v,j}$ whenever
$u \vertexorder v$.

The main technical lemma we need in order to prove
\refth{th:3-colouring-hard-sparse} is the reducibility lemma below,
from which the reducibility condition of \reflem{lem:ar-method} will
follow.  The reducibility lemma implies, in particular, that reducing
a monomial $m$ modulo $\langle W\rangle$ for any closed set
$W\supseteq V(m) \cup \badset_\Delta$ is the same as reducing modulo
$\langle U\rangle$ for any superset $U\supseteq W$ that is not too
large.

\begin{lemma}[Reducibility lemma]
  \label{lem:local-reduction-strong}
  Let $G=(V, E)$ be a $(\ell, 1/3\Delta)$\nobreakdash-sparse graph
  with a linear order $\vertexorder$ on $V$ and consider an admissible
  order that respects~$\vertexorder$. If the vertex sets
  $W\subseteq U$ satisfy hat $W$ is closed, that the size of $U$ is
  $\setsize{U} \leq \ell$, and that every vertex in
  $\nbhstd{U\setminus W}{W}$ has degree at most $\Delta$ in
  $G[U\setminus W]$, then for every monomial $\pcmonm$ such that
  $V(m) \subseteq W$, it holds that $\pcmonm$ is reducible modulo
  $\langle U\rangle$ if and only if $\pcmonm$ is reducible modulo
  $\langle W \rangle$.
\end{lemma}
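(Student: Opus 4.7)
The plan is to prove the two directions of the equivalence separately. The forward direction, that $m$ reducible modulo $\langle W \rangle$ implies $m$ reducible modulo $\langle U \rangle$, is immediate from $\langle W \rangle \subseteq \langle U \rangle$. For the reverse direction, suppose $m + h \in \langle U \rangle$ for some polynomial $h$ whose monomials are all $\prec m$. By \reflem{lem:implication-ideal} (Hilbert's Nullstellensatz on the Boolean cube), $m + h$ vanishes on every common root of $\Col{G[U]}{k}$, i.e., on every extension of a proper $k$-colouring $\chi_U$ of $G[U]$ by an arbitrary Boolean assignment to the remaining variables. I plan to produce $h' := \sigma(h)$ witnessing reducibility modulo $\langle W \rangle$, where $\sigma$ is a polynomial substitution encoding a specific extension of proper $k$-colourings of $G[W]$ to those of $G[U]$.

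I define $\sigma$ to fix every variable $x_{v,i}$ with $v \notin U \setminus W$, and to replace each $x_{u,i}$ with $u \in U \setminus W$ by an explicit polynomial in $x_W$. For $u \in U \setminus (W \cup \nbhstd{U}{W})$, let $\sigma(x_{u,i}) := \mathbf{1}[c_u = i]$, where $c_u \in \{1,2,3\}$ is the colour of $u$ in the $3$-colouring of $G[U \setminus (W \cup \nbhstd{U}{W})]$ provided by \reflem{lem:colouring-contraction}. For $u \in \nbhstd{U \setminus W}{W}$, closedness of $W$ gives a unique neighbour $w_u \in W$ (no $2$-hops), forces $\nbhstd{U \setminus W}{W}$ to be pairwise non-adjacent in $G[U]$ (no $3$-hops or $3$-cycles), and makes $\nbhstd{U \setminus W}{u}$ lie entirely in $U \setminus (W \cup \nbhstd{U}{W})$ with a monochromatic colour $c_u^{*}$ by \reflem{lem:colouring-contraction}. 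Fixing a deterministic rule $f_u : [k] \to [k]$ that on input $c$ picks the least colour distinct from both $c$ and $c_u^{*}$ (which exists since $k \geq 3$), I set $\sigma(x_{u,i}) := \sum_{c \,:\, f_u(c) = i} x_{w_u, c}$. A direct edge-by-edge check confirms that for any proper $k$-colouring $\chi_W$ of $G[W]$, the induced colouring $\chi_U$ defined by $\chi_U(u) = i \iff \sigma(x_{u,i})(\chi_W) = 1$ is a proper $k$-colouring of $G[U]$.

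It remains to verify two properties of $h' := \sigma(h)$. First, $m + h' \in \langle W \rangle$: because $V(m) \subseteq W$, $\sigma$ fixes $m$, so for any common root $(\chi_W, \tau)$ of $\Col{G[W]}{k}$ one has $(m + h')(\chi_W, \tau) = (m + h)(\chi_U, \tau|_{V \setminus U}) = 0$, and \reflem{lem:implication-ideal} places $m + h'$ into $\langle W \rangle$. Second, every monomial of $h'$ is $\prec m$: closedness gives $\Desc{W} = W$, so no $u \in U \setminus W$ is a descendant of $W$, forcing $w_u \vertexorder u$ whenever $u \in \nbhstd{U \setminus W}{W}$ and hence $x_{w_u, c} \prec x_{u,i}$ since the admissible order respects $\vertexorder$; together with $1 \prec x_{u,i}$, every monomial of $\sigma(x_{u,i})$ is strictly less than $x_{u,i}$, and admissibility of $\prec$ propagates this to $\sigma(m_j) \preceq m_j \prec m$ for each term $m_j$ of $h$. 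The main obstacle is precisely to make these two requirements coexist: $\sigma$ must encode a genuine colouring extension while simultaneously expressing each $x_{u,i}$ via strictly $\vertexorder$-earlier variables, and it is the closure definition combined with the colour-class vertex ordering borrowed from~\cite{RT22GraphsLargeGirth} that makes both feasible.
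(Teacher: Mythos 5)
Your proposal is correct and follows essentially the same route as the paper: a variable substitution on the $U\setminus W$ variables built from the monochromatic-star $3$\nobreakdash-colouring of \reflem{lem:colouring-contraction}, with closedness of $W$ supplying the unique $W$-neighbour and the strict order-decrease, and \reflem{lem:implication-ideal} giving membership in $\langle W\rangle$. The only cosmetic difference is that the paper establishes $\restrict{q}{\rho}\in\langle W\rangle$ by writing $q=\sum_i q_i p_i$ and restricting each axiom $p_i$ separately, whereas you apply \reflem{lem:implication-ideal} once to the whole substituted polynomial; both are valid under the paper's tacit convention of working modulo all Boolean axioms.
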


We postpone the proof of this lemma to the end of this section.
In order to ensure that we are always reducing a monomial $m$ by $\langle W \rangle$
for some closed set $W$ that contains $V(m)$ and is such that 
every vertex in $\nbhstd{U\setminus W}{W}$ has degree at most $\Delta$ in $G[U \setminus W]$
for all $U\supseteq W$, we define 
the closure of a monomial to include the set $\badset_\Delta$.

\begin{definition}[Monomial closure]
  \label{def:mon-closure-Gnp}
  The \emph{monomial closure} of a monomial $m$, denoted by $\closure[\Delta]{m}$, is the
  vertex set $\closure{V(m)\cup \badset_\Delta}$.
\end{definition}

Looking ahead, we will prove \refth{th:3-colouring-hard-sparse} by
showing that the map $S$ defined by mapping monomials $m$ to
$\Col{G[\closure[\Delta]{\pcmonm}]}{k} \subseteq \Col{G}{k}$ is a
\acceptable{\Col{G}{k}} and appealing to \reflem{lem:ar-method}. To
establish the \satcondition condition in \reflem{lem:ar-method}, we
must prove that $\Col{G[\closure[\Delta]{m}]}{k}$ is satisfiable
whenever $m$ is of low degree. Since $G$ is sparse, by
\reflem{lem:sparse-colourable}, it suffices to show that the monomial
closure of $m$ is not too large. This, in turn, will follow from the
next lemma. \reflem{lem:size-lemma} is an almost direct consequence of
\reflem{lem:closure-lemma} and states that under suitable technical
assumptions, the size of the monomial closure of~$m$ is closely
related to the degree of~$m$ and the size of $\badset_\Delta$.

\begin{lemma}[\Satlemma]
  \label{lem:size-lemma}
  Let $\sparseparam, \Delta,\ell \in \N^+$ such that $a \geq 2$ and
  let $G=(V,E)$ be a $(\ell, 1/3\sparseparam)$\nobreakdash-sparse
  graph with a linear order on $V$. Let $\badset_\Delta \subseteq V$
  be such that $G[V\setminus \badset_\Delta]$ has maximum degree at
  most~$\Delta$, $\Desc{\badset_\Delta} \subseteq \badset_\Delta$, and
  that any decreasing path in $G[V\setminus \badset_\Delta]$ has at
  most $\sparseparam$ vertices.  Then it holds for any monomial $m$
  that if
  $\mdegreeof{m} + \setsize{\badset_\Delta} \le \ell/25\sparseparam$,
  then
  $\setsize{\closure[\Delta]{m}} \leq 50 \Delta^{\sparseparam - 1}
  \cdot (\mdegreeof{m} + \setsize{\badset_\Delta})$.
\end{lemma}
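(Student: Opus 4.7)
Let $U = V(m) \cup \badset_\Delta$, so that $\closure[\Delta]{m} = \closure{U}$ by \refdef{def:mon-closure-Gnp}. My plan is to first apply the closure lemma (\reflem{lem:closure-lemma}) to $U$ in order to obtain a small set $\Dset\supseteq U$ satisfying $\closure{U} = \Desc{\Dset}$, and then to bound $\setsize{\Desc{\Dset}}$ using the structural assumptions on $\badset_\Delta$.

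To apply \reflem{lem:closure-lemma}, I need to verify its hypotheses. The size bound $\setsize{U}\leq \mdegreeof{m}+\setsize{\badset_\Delta}\leq \ell/25\sparseparam$ is immediate from the hypothesis. Since $\badset_\Delta\subseteq U$, we have $V\setminus U\subseteq V\setminus \badset_\Delta$, so any decreasing path in $G[V\setminus U]$ is also a decreasing path in $G[V\setminus \badset_\Delta]$ and hence contains at most $\sparseparam$ vertices. The lemma therefore yields a set $\Dset\supseteq U$ with $\setsize{\Dset}\leq 25\setsize{U}\leq 25(\mdegreeof{m}+\setsize{\badset_\Delta})$ and $\closure[\Delta]{m} = \Desc{\Dset}$.

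It remains to bound $\setsize{\Desc{\Dset}}$, which I do by splitting $\Dset$ according to membership in $\badset_\Delta$. Descendants of $\Dset\cap \badset_\Delta$ are contained in $\badset_\Delta$ by the hypothesis $\Desc{\badset_\Delta}\subseteq \badset_\Delta$, contributing at most $\setsize{\badset_\Delta}$ vertices. For each $v\in \Dset\setminus \badset_\Delta$, I separately bound descendants landing in $\badset_\Delta$ (collectively at most $\setsize{\badset_\Delta}$) and descendants outside~$\badset_\Delta$. A descendant of $v$ outside $\badset_\Delta$ must be reached by a decreasing path that stays entirely in $V\setminus \badset_\Delta$---any excursion into $\badset_\Delta$ forces the endpoint to remain in $\badset_\Delta$ by the same hypothesis---so the path has at most $\sparseparam$ vertices and each step branches into at most $\Delta$ neighbours. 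This yields at most $1+\Delta+\cdots+\Delta^{\sparseparam-1}$ such descendants per $v$.

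Combining these estimates with $\setsize{\Dset\setminus\badset_\Delta}\leq 25(\mdegreeof{m}+\setsize{\badset_\Delta})$ and the geometric sum bound $\sum_{i=0}^{\sparseparam-1}\Delta^i\leq 2\Delta^{\sparseparam-1}$, the additive $\setsize{\badset_\Delta}$ terms are absorbed into the multiplicative factor $\Delta^{\sparseparam-1}$, yielding the claimed inequality. There is no real obstacle in the argument; the only subtlety worth highlighting is the handling of decreasing paths that enter $\badset_\Delta$ partway, which is cleanly disposed of by the hypothesis $\Desc{\badset_\Delta}\subseteq \badset_\Delta$. Beyond that, the proof reduces to routine counting on top of \reflem{lem:closure-lemma}.
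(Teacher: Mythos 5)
Your proof is correct and takes essentially the same route as the paper's: apply \reflem{lem:closure-lemma} to $U = V(m)\cup\badset_\Delta$ to get $\Dset$ with $\closure[\Delta]{m}=\Desc{\Dset}$, then count descendants using $\Desc{\badset_\Delta}\subseteq\badset_\Delta$ together with the degree bound and the bound on decreasing-path length in $G[V\setminus\badset_\Delta]$. The one cosmetic point is the final absorption: bounding $\setsize{\Dset\setminus\badset_\Delta}$ directly by $25(\mdegreeof{m}+\setsize{\badset_\Delta})$ overshoots the target by an additive $\setsize{\badset_\Delta}$, so you should instead use $\setsize{\Dset\setminus\badset_\Delta}=\setsize{\Dset}-\setsize{\badset_\Delta}$ and $2\Delta^{\sparseparam-1}\geq 1$ to get $2\Delta^{\sparseparam-1}\setsize{\Dset\setminus\badset_\Delta}+\setsize{\badset_\Delta}\leq 2\Delta^{\sparseparam-1}\setsize{\Dset}\leq 50\Delta^{\sparseparam-1}(\mdegreeof{m}+\setsize{\badset_\Delta})$, exactly as the paper does.
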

\begin{proof}
  Let $U = V(m) \union T_\Delta$ and note that
  $\closure[\Delta]{m} = \closure{U}$.
  Note that
  $\setsize{U} =
  \setsize{ V(m) \union T_\Delta}
  \leq
  \mdegreeof{m} + \setsize{\badset_\Delta}
  \leq \ell /25\sparseparam$
  and that any
  decreasing path in $G[V\setminus U]$ has at most $\sparseparam$
  vertices. We can therefore apply \reflem{lem:closure-lemma} to
  deduce that there exists a set $\Dset\subseteq V$ such that
  $U\subseteq \Dset$, $\closure{U} = \Desc{\Dset}$ and
  $\setsize{\Dset} \leq 25 \setsize{U}$.
    
  Note that since $\Dset\supseteq \badset_\Delta$ and all the
  descendants of vertices in $\badset_\Delta$ are in~$\badset_\Delta$,
  we have that
  $\Desc{\Dset} = \left(\Desc{\Dset\setminus \badset_\Delta}\setminus
    \badset_\Delta \right) \cup \badset_\Delta$.  Moreover, since any
  vertex $v \in \Dset\setminus \badset_\Delta$ has degree at most
  $\Delta$ in $G[V\setminus \badset_\Delta]$, and since any decreasing
  path in $G[V\setminus \badset_\Delta]$ has at most $\sparseparam$
  vertices, it follows that $v$ has at most
  $2 \Delta^{\sparseparam - 1}$ descendants in
  $V\setminus \badset_\Delta$.  We thus have the upper bound
  $\setsize{\left(\Desc{\Dset\setminus \badset_\Delta}\setminus
      \badset_\Delta \right)} \leq 2 \Delta^{\sparseparam - 1} \cdot
  \setsize{\Dset\setminus \badset_\Delta} $ from which we conclude
  that
  \begin{equation*}
    \setsize{\Desc{\Dset}}
    \leq
    2 \Delta^{\sparseparam - 1}
    \cdot
    \setsize{\Dset\setminus \badset_\Delta}
    +
    \setsize{\badset_\Delta}
    \leq
    2 \Delta^{\sparseparam - 1}
    \cdot
    \setsize{\Dset}
    \leq
    50
    \Delta^{\sparseparam - 1}
    \cdot
    (\mdegreeof{m} + \setsize{\badset_\Delta})
    \eqcomma
  \end{equation*}
  as claimed in the lemma.
\end{proof}

We are now ready to prove our main theorem, which we restate here for
convenience.

\begin{restatablethm}{\ref{th:3-colouring-hard-sparse}}[Main theorem, restated]
  Let $c,\Delta,k$ and $\ell$ be integers such that $c > k\geq 3$ and
  let $G=(V,E)$ be a $c$-colourable,
  $(\ell, 1/3\Delta)$\nobreakdash-sparse graph.  For any set
  set~$\badset_\Delta$ such that $G[V\setminus\badset_\Delta]$ has
  maximum degree at most $\Delta$, then any polynomial calculus
  refutation of $\Col{G}{k}$ over any field requires degree
  $\ell/50\Delta^{c-1} - \setsize{\badset_\Delta}$.
\end{restatablethm}

\begin{proof}[Proof of \refth{th:3-colouring-hard-sparse}]
  We start by defining a linear order on $V$.  Let
  $\chi_c\colon V\setminus\badset_\Delta \to [c]$ be a proper
  $c$\nobreakdash-colouring of $G[V\setminus\badset_\Delta]$.  We let
  the order~$\vertexorder$ on $V$ be any linear order that satisfies
  $u \vertexorder v$ whenever $u\in \badset_\Delta$ and
  $v\in V\setminus \badset_\Delta$ and whenever
  $u,v\in V\setminus \badset_\Delta$ and~$\chi_c(u) < \chi_c(v)$.
  Observe that any decreasing path in $G[V\setminus \badset_\Delta]$
  has at most $c$ vertices and that
  $\Desc{\badset_\Delta} \subseteq \badset_\Delta$.
  
  We can define an admissible ordering $\monomialorder$ of the
  monomials over the variables $\set{x_{v,i}}_{v\in V, i\in [k]}$ of
  $\Col{G}{k}$ that respects $\vertexorder$ as follows: for distinct
  vertices~$u,v$ let~$x_{u,i} \monomialorder x_{v,j}$
  whenever~$u \vertexorder v$ and for variables associated with the
  same vertex~$u$ let~$x_{u, i} \monomialorder x_{u,j}$
  whenever~$i < j$.
  With this order fixed we then obtain the admissible ordering on
  monomials by first ordering the monomials by degree and then
  lexicographically according to the ordering on the variables.

  To prove \refth{th:3-colouring-hard-sparse} we
  use~\reflem{lem:ar-method}.  For this, we show that the map
  $S\colon m \mapsto \Col{G[\closure[\Delta]{\pcmonm}]}{k}$ is a
  \acceptable{\Col{G}{k}} in the sense of \refdef{def:acceptable} and
  that it satisfies the \satcondition condition and the reducibility
  condition in \reflem{lem:ar-method} for
  $D= \ell / (50 \Delta^{\gcolourability-1}) -
  \setsize{\badset_\Delta}$.

  We start by proving, via the properties of the
  closure, that the map~$S$ is a \acceptable{\Col{G}{k}}.  We actually show
  something slightly stronger, namely that $S$ satisfies the following
  four properties, which are the same as those in
  \refdef{def:acceptable} except that we do not require that
  $m' \prec m$ in \refitem{it:property-Sxm-again} and
  \refitem{it:property-Sm-again}.
   \begin{enumerate}
   \item For all monomials $m$ and $m'$ such that $\vars{m} = \vars{m'}$, 
   it holds that $S(m) =  S(m')$.
     \label{it:property-multilinear-again}
   \item For every variable $x$ and for all monomials $m$ and $m'$, if
     $S(m')\subseteq \Smap{\pcmonm}$, then $S(xm')\subseteq
     S(xm)$. 
     \label{it:property-Sxm-again}
   \item For all monomials $m$ and $m'$, if
     $\vars{m'}\subseteq\Vars{\Smap{\pcmonm}}$, then $S(m')\subseteq \Smap{\pcmonm}$.
     \label{it:property-Sm-again}
   \item For all $p\in \pcsp$, it holds that $p\in S(m)$, where $m$ is
     the leading monomial in $p$.
     \label{it:property-axiom-again}
   \end{enumerate}  
   Item~\ref{it:property-multilinear-again} follows immediately from
   the definition of $S$, since the closure of a monomial~$m$ only
   depends on $V(m)$.
   Note that since $S(m) = \Col{G[\closure[\Delta]{\pcmonm}]}{k}$ it
   holds that $S(m' ) \subseteq S(m)$ if and only if
   $\closure[\Delta]{m'} \subseteq \closure[\Delta]{m}$ and,
   therefore, \refitem{it:property-Sxm-again} is equivalent to showing
   that if $\closure[\Delta]{m'}\subseteq \closure[\Delta]{m}$, then
   $\closure[\Delta]{xm'}\subseteq \closure[\Delta]{xm}$.  Recall that
   $\closure[\Delta]{m} = \closure{V(m) \cup \badset_\Delta}$ and
   hence, by minimality of closure, we obtain that
   \begin{subequations}
     \label{eq:contains}
     \begin{align}
       V(x) \union V(m') \union \badset_\Delta
       &\subseteq
         \closure{V(x)}
         \union
         \closure{V(m')
         \cup
         \badset_\Delta}
       &&[\text{since $U \subseteq \closure{U}$}] \\
       &\subseteq
         \closure{V(x)}
         \union
         \closure{V(m)
         \cup
         \badset_\Delta}
       &&[\text{as $\closure[\Delta]{m'}\subseteq
          \closure[\Delta]{m}$ by assumption}] \\
       &\subseteq
         \closure{
         V(xm) \union
         \badset_\Delta}  \eqcomma
     \end{align}
   \end{subequations}
   where the final equation relies on the fact that
   $\closure{A} \cup \closure{B} \subseteq \closure{A\cup B}$ which
   follows by minimality of the sets $\closure{A}$ and $\closure{B}$.
   This allows us to derive that
  \begin{subequations}
    \begin{align}
      \closure[\Delta]{xm'} 
      &= \closure{V(x) \union V(m') \union \badset_\Delta}
      &&[\text{by definition of monomial closure}] \\ 
      &\subseteq
        \closure{
        \closure{V(xm) \union \badset_\Delta}}
      &&[\text{by \refeq{eq:contains}}] \\
      &=  \closure{V(xm) \union \badset_\Delta}
      &&[\text{since closure is idempotent by minimality}] \\
      &=  \closure[\Delta]{xm} \eqcomma
      &&[\text{by definition of monomial closure}]
    \end{align}
  \end{subequations}
  and thus \refitem{it:property-Sxm-again} holds. Observe furthermore
  that if $\vars{m' } \subseteq \vars{S(m)}$ then
  $V(m') \subseteq \closure[\Delta]{m}$ by the definition of monomial
  closure and since $S(m) =
  \Col{G[\closure[\Delta]{\pcmonm}]}{k}$. Thus, using again the
  observation that $S(m' ) \subseteq S(m)$ if and only if
  $\closure[\Delta]{m'} \subseteq \closure[\Delta]{m}$, in order to
  conclude that \refitem{it:property-Sm-again} holds it suffices to
  show that if $V(m')\subseteq \closure[\Delta]{m}$, then
  $\closure[\Delta]{m'}\subseteq \closure[\Delta]{m}$.  By again using
  the minimality of closure and the fact that
  $U \subseteq \closure{U}$, we can conclude that
  \begin{align}
    {\closure[\Delta]{m'}
    =
    \closure{V(m') \union \badset_\Delta}
    \subseteq
    \closure{\closure[\Delta]{m} \union \badset_\Delta}
    =
    \closure[\Delta]{m}} \eqperiod
  \end{align}
  Finally, item \ref{it:property-axiom-again} follows easily from the definition of $S$.
  Indeed, if $\pcpolyp \in \Col{G}{k}$ is an edge axiom, say $x_{u, i}x_{v, i}$, 
  then it holds that
  $S(x_{u, i}x_{v, i}) = \Col{G[\closure{\set{u,v} \cup \badset_\Delta}]}{k} \ni p$;
  and
  if $\pcpolyp \in \Col{G}{k}$ is a vertex axiom 
  ($\sum_{i=1}^\kcolourcons x_{v, i} - 1$ or $x_{v, i}x_{v, i'}$) or a 
  Boolean axiom ($x_{v, i}^2 - x_{v, i}$)
  mentioning a vertex $v$ and $m$ is the leading monomial in $p$ 
  it holds that
  $ S(m) = \Col{G[\closure{\set{v} \cup \badset_\Delta}]}{k} \ni p$.
  Thus, the map $S$ is a \acceptable{\Col{G}{k}}.

  We now show that the \satcondition condition and the reducibility
  condition in \reflem{lem:ar-method} hold for the map~$S$ and for
  $D = \ell / (50 \Delta^{\gcolourability-1}) -
  \setsize{\badset_\Delta}$.  We can assume $D\geq 2$, since otherwise
  the theorem is trivially true.  Observe that the polynomials in
  $\Col{G}{k}$ have degree at most $D$.

  To see that the \satcondition condition holds, note that by
  \reflem{lem:size-lemma} every monomial $m$ of degree at most~$D$
  satisfies~$\setsize{\closure[\Delta]{m}} \leq 50 \Delta^{c-1} (D +
  \setsize{\badset_\Delta}) = \ell$, where we use that any decreasing
  path in $G[V\setminus \badset_\Delta]$ has at most $c$ vertices,
  that $\Desc{\badset_\Delta} \subseteq \badset_\Delta$ holds, and
  that $G[V\setminus \badset_\Delta]$ has maximum degree at most
  $\Delta$.  Since $G$ is $(\ell, 1/3\Delta)$\nobreakdash-sparse and
  $\setsize{\closure[\Delta]{m}} \leq \ell$, it follows from
  \reflem{lem:sparse-colourable} that the
  graph~$G[\closure[\Delta]{m}]$ is~$3$-colourable and so
  $\Col{G[\closure[\Delta]{m}]}{k}$ is satisfiable.

  To establish the reducibility condition, let $m$ and $m'$ be
  monomials of degree at most $D$ such that $S(m') \subseteq S(m)$,
  \ie such that $\closure[\Delta]{m'} \subseteq \closure[\Delta]{m}$.
  Note that as argued above, it holds that
  $\setsize{\closure[\Delta]{m}}\leq \ell$ since~$m$ has degree at
  most $D$.  Therefore, we can apply
  \reflem{lem:local-reduction-strong} with $W = \closure[\Delta]{m'}$
  and $U = \closure[\Delta]{m}$ to conclude that if $m'$ is reducible
  modulo ${\langle W \rangle} = {\langle S(m')\rangle}$ if and only if
  it is also reducible modulo
  $\langle U \rangle = \langle S(m)\rangle$.  Note that we use the
  fact that all vertices in $G[V\setminus \badset_\Delta]$---and hence
  also all vertices in $G[U\setminus W]$---have degree at most
  $\Delta$.

  With the \satcondition and reducibility conditions of
  \reflem{lem:ar-method} in hand, we conclude that every polynomial
  calculus refutation of $\Col{G}{k}$ requires degree strictly greater
  than $D$, as desired.
\end{proof}

\subsection{The Reducibility Lemma}

It remains to prove the reducibility lemma which we restate here for convenience.

\begin{restatablelem}{\ref{lem:local-reduction-strong}}[Reducibility
  Lemma, restated]
  Let $G=(V, E)$ be a $(\ell, 1/3\Delta)$\nobreakdash-sparse graph
  with a linear order $\vertexorder$ on $V$ and consider an admissible
  order that respects~$\vertexorder$. If the vertex sets
  $W\subseteq U$ satisfy hat $W$ is closed, that the size of $U$ is
  $\setsize{U} \leq \ell$, and that every vertex in
  $\nbhstd{U\setminus W}{W}$ has degree at most $\Delta$ in
  $G[U\setminus W]$, then for every monomial $\pcmonm$ such that
  $V(m) \subseteq W$, it holds that $\pcmonm$ is reducible modulo
  $\langle U\rangle$ if and only if $\pcmonm$ is reducible modulo
  $\langle W \rangle$.
\end{restatablelem}

The proof idea is to construct a function $\rho$ mapping variables
associated with vertices in $U \setminus W$ to either constants or
polynomials of smaller order such that all axioms
in~${\langle U \rangle \setminus \langle W\rangle}$ are either
satisfied or mapped to a polynomial in $\langle W\rangle$. It is not
hard to show that such a mapping turns any polynomial
in~$\langle U \rangle$ with leading monomial $m$ into a smaller
polynomial in~$\langle W \rangle$ whose leading monomial is also
$m$. It then follows that a monomial $m$ is reducible modulo
$\langle U \rangle$ if $m$ is reducible modulo $\langle W\rangle$. The
other direction is immediate, so this suffices to prove the lemma.

Let us first outline the construction of $\rho$.  Using the definition
of closure, we show in \reflem{lem:colouring-contraction} that there
exists a proper $3$\nobreakdash-colouring $\chi$ of the subgraph
${G[U\setminus (W \union \nbhstd{U}{W})]}$ that uses a \emph{single}
colour for each set $\nbhstd{U\setminus W}{u}$, where
$u \in \nbhstd{U\setminus W}{W}$.  Variables far from~$W$, which here
means variables associated with a vertex
in~${U \setminus (W \cup N_U(W))}$, are mapped according to the
$3$\nobreakdash-colouring $\chi$.  It remains to define $\rho$ on
variables associated with each vertex $u \in N_{U\setminus
  W}(W)$. Since~$u$ has precisely one adjacent vertex $v$ in~$W$, and
since furthermore the set $\nbhstd{U\setminus W}{u}$ is coloured with
a single colour, no matter how the vertex~$v$ is coloured there is
always a colour $c_u$ available to properly colour $u$. We may think
of $c_u$ as a function that, given $\chi$ and the colour of $v$,
assigns a colour to~$u$ that is consistent with the colouring of
$N_{U}(u)$.
Variables associated with the vertex~$u$ are mapped according to~$c_u$
by~$\rho$.

\begin{proof}[Proof of \reflem{lem:local-reduction-strong}]
  Since $W$ is a subset of $U$, it follows that if $m$ is reducible
  modulo $\langle W\rangle$, then $m$ is also reducible modulo
  $\langle U\rangle$. For the reverse direction, we define a mapping
  $ \rho$ on variables as outlined above.
      
  To this end, let $\chi$ be a proper $3$\nobreakdash-colouring of the
  subgraph ${G[U\setminus (W \union \nbhstd{U}{W})]}$ that uses a
  {single} colour for each set $\nbhstd{U\setminus W}{u}$, where
  $u \in \nbhstd{U\setminus W}{W}$. Such a colouring exists by
  \reflem{lem:colouring-contraction}.  Variables associated with a
  vertex~$u \in U \setminus \bigl(W \cup \nbhstd{U}{W}\bigr)$ are
  mapped according to $\chi$: if $\chi(u) = i$, then
  $\rho(x_{u,i}) = 1$ and $\rho(x_{u,i'}) = 0$ for all $i' \neq i$.
    
  Next, for each vertex $u\in \nbhstd{U\setminus W}{W}$, we define
  $\rho$ on the variables associated with $u$. Since $\chi$ assigns,
  for each $u \in N_{U \setminus W}(W)$, a single colour to each set
  $\nbhstd{U\setminus W}{u}$ there are at least two distinct colours
  $c_1, c_2 \in [k]$ that are not assigned to any vertex in
  $\nbhstd{U\setminus W}{u}$.  Since there are no $2$\nobreakdash-hops
  in $U$ with respect to $W$ the vertex~$u$ has a
  single
  neighbour $v\in W$. Furthermore, as there are no $3$-hops in $U$
  with respect to $W$, it holds that
  $N_{U \setminus W}(u) \cap N_{U\setminus W}(W) = \emptyset$, which
  implies that the vertex $v$ is the only neighbour of $u$ that is not
  coloured by $\chi$. Hence no matter how $v$ is coloured by $\chi$,
  either $c_1$ or $c_2$ can be used to properly colour $u$. Let us
  make this choice explicit by defining~$\rho$ on $u$ by
  \begin{align}
    \label{eq:substitution}
    \rho(x_{u,c}) =
    \begin{cases}
      x_{v, c_2}
      &\text{if $c = c_1$},\\
      \sum_{
      i \in [k],
      i \neq c_2
      }
      x_{v,i}
      &\text{if $c = c_2$, and}\\
      0
      &\text{otherwise, that is, if $c \not\in \set{c_1, c_2}$.}
    \end{cases}
  \end{align}
  This completes the definition of $\rho$. Note that the
  mapping~$\rho$ extends any proper $k$\nobreakdash-colouring of $W$
  to a proper $k$\nobreakdash-colouring of $U$.
    
  Let $\polyinu$ be a polynomial in~$\langle U \rangle$ with leading
  monomial~$\pcmonm$. We claim that
  \begin{enumerate}
  \item $\restrict{\polyinu}{\rho} \in \langle W\rangle$,
  \item that all monomials $m'$
    satisfy~$\restrict{m'}{\rho}\preceq m'$, and
  \item that~$m = \restrict{m}{\rho}$.
  \end{enumerate}
  If we can show this, then we are done, since $m$ is then the leading
  monomial of the
  polynomial~${\restrict{\polyinu}{\rho}\in \langle W \rangle }$ and
  we may thus conclude that if~$m$ is reducible
  modulo~$\langle U\rangle$, then $m$ is also reducible
  modulo~$\langle W\rangle$.

  We now argue that the three properties hold.  The latter two are
  almost immediate: since~$\rho$ does not map variables associated
  with $W$ (of which $V(m)$ is a subset) we
  have~$m = \restrict{m}{\rho}$.  Furthermore, since $\Desc{W} = W$
  and since $\prec$ is admissible, it holds for every variable $x$
  that~$\restrict{x}{\rho} \preceq x$, and hence every monomial~$m'$
  satisfies $\restrict{m'}{\rho} \preceq m'$.

  It remains to prove that
  $\restrict{\polyinu}{\rho} \in \langle W\rangle$. Since
  $\polyinu\in \langle U \rangle$ we may
  write~$\polyinu = \sum_i \polyinu_i p_i$ for polynomials
  $\polyinu_i \in \F[X]$ and axioms $p_i \in \Col{G[U]}{k}$. Note that
  the mapping~$\rho$ extends any proper $k$\nobreakdash-colouring
  of~$W$ to a proper $k$\nobreakdash-colouring of~$U$, and thus it
  follows by \reflem{lem:implication-ideal} that every axiom
  $p_i \in \Col{G[U]}{k}$ satisfies
  $\restrict{p_i}{\rho} \in \langle W\rangle$.  We can therefore
  conclude that the
  polynomial~${\restrict{\polyinu}{\rho} = \sum_i
    \restrict{\polyinu_i}{\rho}\cdot\restrict{p_i}{\rho}}$ is
  in~$\langle W \rangle $ as claimed.
\end{proof}

With the proof of \reflem{lem:local-reduction-strong} completed we
have shown the last missing piece of the proof of
\refthm{th:3-colouring-hard-sparse}. This thus establishes our
polynomial calculus degree lower bounds for the colouring formula over
sparse graphs.

\section{Concluding Remarks}
\label{sec:conclusion}

In this
work, we show that polynomial calculus over any field requires linear
degree to refute that a sparse random regular graph or \erdosrenyi
random graph is $3$-colourable.  Our lower bound is optimal up to
constant factors, and implies strongly exponential size lower bounds
by the well-known size-degree relation for polynomial
calculus~\cite{IPS99LowerBounds}.

Our overall proof technique is the same as that of earlier papers such
as
\cite{AR03LowerBounds,GL10Automatizability,GL10Optimality,MN15GeneralizedMethodDegree},
but our proofs have a different flavour. A central technical concept
in \cite{AR03LowerBounds,MN15GeneralizedMethodDegree} is (variations
of) the so-called \emph{constraint-variable incidence graph}: this
graph consists of a vertex per constraint and variable, and has an
edge between a constraint $C$ and a variable $x$ if and only if $C$
depends on $x$. This graph is commonly used to argue that by expansion
small sets of constraints are satisfiable, even after the removal of a
closed set of vertices. By contrast, we never need to make any
(explicit) use of this graph. This raises the question of whether it
is possible to rephrase our proofs in language closer to that of
\cite{AR03LowerBounds, MN15GeneralizedMethodDegree}, or are the two
approaches inherently different?

The lower bound techniques in this paper, as well as those in
\cite{MN15GeneralizedMethodDegree}, work over any field. For
$\NP$-hard problems such as $k$-colourability, we expect a polynomial
calculus lower bound to hold regardless of which field is used for the
derivations. However, other formulas such as the Tseitin
contradictions are easy for polynomial calculus over a field of
characteristic 2 and hard in other characteristics. The techniques in,
\eg,~\cite{BI99RandomJOURNALREF,BGIP01LinearGaps,AR03LowerBounds}
capture this fact, while those in~\cite{MN15GeneralizedMethodDegree}
and this paper cannot. Another interesting question is therefore
whether the techniques in these papers can be unified into a general
approach that works both for field-dependent and field-independent
lower bounds.

Our degree lower bounds for $3$-colourability are of the form
$n/f(d)$, where $d$ is either degree or average degree of the graph
depending on the random graph model. In our work, $f$ is at least
exponential in~$d$, but in previous results
\cite{BCCM05RandomGraph,LN17GraphColouring}, $f$ is at most polynomial
in~$d$. While the precise dependence on $d$ is immaterial for sparse
random graphs, it would be interesting to see if the parameters in our result can be
improved. We remark that it is far from clear what the correct
dependence on $d$ should be.
For the sums-of-squares proof system, which simulates polynomial
calculus over the reals~\cite{Berkholz18Relation}, there exist strong
upper bounds for $k$-colourability on random graphs and random regular
graphs in some parameter regimes: the paper~\cite{JKM19Lovasz} showed
that \aas, degree-$2$ sums\nobreakdash-of\nobreakdash-squares refutes
$k$-colourability on $d$-regular random graphs if ${d \geq
  4k^2}$. These results rule out a polynomial dependence on $d$ in any linear
sums\nobreakdash-of\nobreakdash-squares degree lower bound for
$k$-colourability whenever $k$ is fixed. However, similar upper bounds are
not known to hold for polynomial calculus, and it should be pointed
out that the latter proof system is incomparable to sum-of-squares
when considered over fields of finite characteristic.

More broadly it would be interesting to investigate whether
the ideas and concepts underlying this work could be extended
to prove lower bounds for colouring principles in
other proof systems, the most obvious candidates being
Sherali-Adams and sums-of-squares.
Regarding polynomial calculus, it is worth noting that the closure
operation defined in~\cite{RT22GraphsLargeGirth} and generalized in
this work is not, per se, restricted to graph colouring. It is natural
to ask whether similar techniques could be useful for proving degree
lower bounds for other graph problems.  One open problem is to improve
the degree lower bound for matching on random graphs
in~\cite{AR22PerfectMatchingJournal} to linear in the graph size, and
to make it hold for graphs of small constant degree. Another problem
is to establish polynomial calculus size lower bounds for independent
set and vertex cover, analogously to what was done for the resolution
proof system in~\cite{BIS07IndependentSets}. Finally, an intriguing
technical challenge is to prove degree lower bounds for variants of
the dense linear ordering
principle~\cite{AD08CombinatoricalCharacterization} for graphs of
bounded degree.

\section*{Acknowledgements}

The authors would like to thank Albert Atserias, Gaia Carenini and
Amir Yehudayoff for helpful discussions during the course of this
work, and we also thank Albert for making us aware of some relevant
references.  In addition, we benefitted from feedback of the
participants of the Dagstuhl Seminar~23111 ``Computational Complexity
of Discrete Problems'' and Oberwolfach workshop~2413 ``Proof
Complexity and Beyond''.  Finally, we are grateful to Maryia Kapytka
for feedback on preliminary versions of this manuscript and also to
the anonymous FOCS reviewers---all these comments helped us improve
the exposition in the paper considerably.

Part of this work was carried out while the authors were taking part
in the semester programme \emph{Meta-Complexity} and the extended
reunion of the programme \emph{Satisfiability: Theory, Practice, and
  Beyond} at the Simons Institute for the Theory of Computing at UC
Berkeley in the spring of 2023.

Susanna F. de Rezende received funding from ELLIIT, from the Knut and
Alice Wallenberg grant \mbox{KAW 2021.0307}, and from the Swedish Research
Council grant \mbox{2021-05104}.
Jonas Conneryd and Jakob Nordström were funded by
the Swedish Research Council grant \mbox{2016-00782},
and in addition Jonas Conneryd was also 
partially supported by the
Wallenberg AI, Autonomous Systems and Software Program (WASP)
funded by the Knut and Alice Wallenberg Foundation,
whereas Jakob Nordström together with 
Shuo Pang were supported by the
Independent Research Fund Denmark grant \mbox{9040-00389B}.
Kilian Risse was supported by the Swiss National Science Foundation
project \mbox{200021-184656}
``Randomness in Problem Instances and Randomized Algorithms''.

\appendix 

\section{On Boolean Implication and Ideal Membership}
\label{sec:polyclaim}

In this appendix, we provide a proof of the folklore result
that being implied by a set of polynomials and being in the ideal
generated by these polynomials is the same in the Boolean setting.

\begin{restatablelem}{\ref{lem:implication-ideal}}[restated]
  Let $g$ be a polynomial and $Q$ be a set of polynomials
  in~$\F[x_1, \ldots, x_n]$, and suppose that $Q$ contains all the
  Boolean axioms. Then it holds that $g$ vanishes on all common roots
  of~$Q$ if and only if $g \in \langle Q \rangle$.
\end{restatablelem}

\begin{proof}
  If $g \in \langle Q \rangle$, then we can write
  $g = \sum_{i} f_i q_i$ for $f_i \in \F[x_1, \ldots, x_n]$ and
  $q_i \in Q$. Observe that $\restrict{\sum_{i} f_i q_i}{\xi} = 0$ for
  any common root $\xi \in \set{0,1}^n$ of $Q$. Hence $g$ vanishes on
  all common roots of $Q$.

  For the other direction, let $\xi = (\xi_1, \xi_2, \ldots, \xi_n)$
  be an element of $\{0,1\}^n$ and write $\indic_\xi$ for the
  multilinear polynomial that evaluates to $1$ on $\xi$ and to $0$ on
  all other elements of $\{0,1\}^n$, that is,
  \begin{equation}
    \indic_\xi(x)
    =
    \prod_{i: \xi_i = 1}x_i
    \prod_{j: \xi_j = 0}(1-x_j)
    \eqperiod
  \end{equation} 
  Clearly, every function $f\colon \{0, 1\}^n \to \F$ can be expressed
  as a multilinear polynomial through the identity
  \begin{equation}
    f
    =
    \sum_{\xi \in \{0, 1\}^n}
    f(\xi)
    \cdot
    \indic_\xi
    \eqcomma
  \end{equation}
  and as the polynomials $\set{\indic_\xi\mid \xi \in \{0, 1\}^n}$
  form a basis of the vector space of multilinear polynomials
  over~$\F[x_1, \ldots, x_n]$, this representation is unique. Let
  $S\subseteq \{0, 1\}^n$ be the set of common roots of the
  polynomials in~$Q$. Since $g$ vanishes on $S$, we may write
  \begin{equation}
    g
    =
    \sum_{\xi \in \{0, 1\}^n\setminus S}
    g(\xi)
    \cdot
    \indic_\xi
    \eqperiod
  \end{equation}
  We now show that if $\xi\in \{0, 1\}^n \setminus S$, then it holds
  that $\indic_\xi\in \langle Q \rangle$. This suffices to prove the
  desired result, as $g$ is then a linear combination of
  polynomials in~$\langle Q \rangle$ and hence is also in
  $\langle Q \rangle$. 
    
  Let $\xi$ be an element in~$\{0, 1\}^n \setminus S$. Because $S$ is
  the set of common roots of the polynomials in~$Q$, there exists a
  polynomial $q \in Q$ such that $q(\xi) \neq 0$. The polynomial
  $\indic_\xi\cdot (q(\xi))^{-1}q$
  coincides with $\indic_\xi$ on all of $\{0, 1\}^n$, so $\indic_\xi\cdot (q(\xi))^{-1}q = \indic_\xi$ modulo the Boolean axioms. 
  Since~$\langle Q \rangle$ contains both~$\indic_\xi\cdot (q(\xi))^{-1}q$ and the Boolean axioms, it
  follows that $\indic_\xi \in \langle Q \rangle$.
\end{proof}
 
\section{Random Graphs Are Sparse}
\label{sec:sparsity}

In this appendix we prove \reflem{lem:razborov-sparse}, which is a
quantitative version of Lemma 4.15 in
\cite{Razborov17WidthSemialgebraic}. We make no claim of originality,
but present this result here to make the paper self-contained.

We start by proving the sparsity lemma for graphs sampled from the
\erdosrenyi random graph distribution~$\gndn$ in
\reflem{lem:razborov-sparse-Gnp} and then establish the analogous
result for random regular graphs sampled according to~$\gnd$ in
\reflem{lem:razborov-sparse-Gnd}.

\begin{lemma}[Sparsity lemma for \erdosrenyi random graphs]
  \label{lem:razborov-sparse-Gnp}
  Let $n,d\in \N^+$ and $\epsilon, \delta \in \R^+$ be such that
  $\epsilon \delta = \omega(1/\log n)$.  If $G$ is a graph sampled
  from~$\gndn$, then \aas it is
  $((4d)^{- (1+\delta)(1+\epsilon)/\epsilon}n, \epsilon)$-sparse.
\end{lemma}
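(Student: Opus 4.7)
The plan is a standard first-moment argument: for each $k \leq \ell := (4d)^{-(1+\delta)(1+\epsilon)/\epsilon}n$, I would union bound over all vertex sets $U \subseteq V$ with $|U| = k$ the probability that the induced subgraph contains at least $m_k := \lceil (1+\epsilon) k \rceil$ edges, and then sum over $k$. Since the three trivial values $k = 1, 2, 3$ automatically satisfy $|E(U)| \leq (1+\epsilon)|U|$ (as $\binom{k}{2} < (1+\epsilon)k$ there), the sum effectively starts at $k = 4$.

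For a fixed $k$, using $\binom{n}{k} \leq (en/k)^k$ and $\binom{\binom{k}{2}}{m_k} \leq (e k^2 /(2 m_k))^{m_k}$, together with $p = d/n$, the expected number of ``bad'' sets of size $k$ is bounded by
\begin{equation*}
  \binom{n}{k}
  \binom{\binom{k}{2}}{m_k}
  p^{m_k}
  \leq
  \left(\frac{en}{k}\right)^k
  \left(\frac{e k^2}{2 m_k} \cdot \frac{d}{n} \right)^{m_k}
  \leq
  \left[\, A \cdot d \cdot \left(\frac{k}{n}\right)^{\!\epsilon/(1+\epsilon)}\, \right]^{(1+\epsilon)k}
\end{equation*}
for some absolute constant $A$ (coming from the $e$'s and the $1/(2(1+\epsilon))$ factor), after substituting $m_k = (1+\epsilon)k$ and simplifying. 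The key observation is that the base of this exponential is at most $1/2$ whenever $k/n \leq (2Ad)^{-(1+\epsilon)/\epsilon}$. By choosing the hidden constants so that $4d \geq 2A d$ and absorbing slack into the factor $(1+\delta)$ in the exponent (which is where the hypothesis $\epsilon\delta = \omega(1/\log n)$ is used), this bound holds uniformly for all $k \leq \ell$.

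Summing the resulting geometric-type tail $\sum_{k=4}^{\ell} 2^{-(1+\epsilon)k}$ gives $o(1)$, so by Markov's inequality $G$ is $(\ell,\epsilon)$\nobreakdash-sparse a.a.s. The only subtle point is arranging the constants and the $\delta$\nobreakdash-slack so that the bound $Ad (k/n)^{\epsilon/(1+\epsilon)} \le 1/2$ really does hold for every $k$ up to the advertised $\ell$, including the regime where $k$ is subpolynomial in~$n$; this is where $\epsilon\delta = \omega(1/\log n)$ enters, ensuring that the multiplicative loss $(1+\delta)$ in the exponent dominates constant factors like $e^{(2+\epsilon)/\epsilon}$ and $(2(1+\epsilon))^{-(1+\epsilon)/\epsilon}$ once the exponent is raised to the $(1+\epsilon)k$ power. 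Apart from this bookkeeping, the proof is a routine moment computation, so I would not expect the argument itself to be the obstacle---only the accounting of constants to match the exact $(4d)^{-(1+\delta)(1+\epsilon)/\epsilon}$ form in the statement.
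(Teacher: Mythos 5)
Your overall strategy is exactly the paper's: a first-moment union bound over vertex sets of each size $k \le \ell$, with $\binom{n}{k} \le (en/k)^k$, $\binom{\binom{k}{2}}{m_k} \le (ek/(2(1+\epsilon)))^{m_k}$ and $p^{m_k} = (d/n)^{(1+\epsilon)k}$, yielding a per-size term of the form $\bigl[A d (k/n)^{\epsilon/(1+\epsilon)}\bigr]^{(1+\epsilon)k}$; up to this point your computation agrees with the paper's.

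The gap is in the last step. After arranging the constants so that the base is at most $1/2$, you bound the failure probability by $\sum_{k=4}^{\ell} 2^{-(1+\epsilon)k}$ and assert this is $o(1)$. It is not: a geometric series with ratio bounded away from $1$ starting at $k=4$ sums to a positive constant (at most $2^{-3}$, but not tending to $0$), so this argument only shows the graph is $(\ell,\epsilon)$-sparse with probability at least $7/8$, which does not give the a.a.s.\ conclusion. The decay in $n$ that you need is sitting in the factor $(k/n)^{\epsilon k}$ that you discard when you pass to ``base $\le 1/2$.'' The paper keeps it: after absorbing the constants via the $(1+\delta)$ slack (as you describe), the $k$-th term is at most $\exp\bigl(-\Omega(\epsilon\delta)\, k \ln(n/k)\bigr)$, and then the elementary inequality $k\ln(n/k) \ge \ln n + k - 1$ splits this into a factor $n^{-\Omega(\epsilon\delta)}$, which is $o(1)$ precisely because $\epsilon\delta = \omega(1/\log n)$, times a convergent geometric tail in $k$. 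Without retaining this $n$-dependence per term, the hypothesis $\epsilon\delta = \omega(1/\log n)$ is never actually used to drive the probability to zero. A secondary point: the claim that the base is at most $1/2$ uniformly for $k \le \ell$ is itself not guaranteed when $\delta$ is a small constant and $d=1$ (one only gets base $<1$ there), but this is moot once you carry out the correct final step, which does not need the base to be below $1/2$.
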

\begin{proof}
  Let $\alpha = (4d)^{- (1+\delta)(1+\epsilon)/\epsilon}$ and denote
  by $\eventa$ the event ``$G$ is $(\alpha n, \epsilon)$-sparse''.
  For a set $U\subseteq V$ of size $s$, the random variable
  $|E(G[U])|$ is a sum of $s(s-1)/2$ random indicator variables for
  the edges that are $1$ with probability $d/n$ and $0$ otherwise. We
  apply a union bound over sets of size $s\leq \alpha n$ to conclude
  that
  \begin{subequations}
    \begin{align}
      \Pr[\lnot \eventa]  
      &\leq
        \sum_{\substack{U \subseteq V\\  \setsize{U} \leq \alpha n}}
        \Pr[\setsize{E(G[U])} \geq (1+\epsilon) \setsize{U}] \\
      &\le
        \sum_{s=1}^{\alpha n}
        \binom{n}{s}
        \binom{\frac{s(s-1)}{2}}{(1+\epsilon) s} \cdot
        \left(\frac{d}{n}\right)^{(1+\epsilon) s}   \\
      &\le
        \sum_{s=1}^{\alpha n}
        \left(\frac{en}{s}\right)^s
        \left(\frac{e(s-1)}{2(1+\epsilon) }\right)^{(1+\epsilon) s}
        \cdot
        \left(\frac{d}{n}\right)^{(1+\epsilon) s}  \\
      &\le
        \sum_{s=1}^{\alpha n}
        \exp
        \left(
        - \epsilon s \ln\left(\frac{n}{s}\right) +
        (1+\epsilon) s
        \left(
        \ln
        \left(\frac{e^2d}{2(1+\epsilon)}\right)
        \right)
        \right) \\
      &\le
        \sum_{s=1}^{\alpha n}
        \exp \left( -
        {\delta\epsilon s \ln(n/s)}
        \right) \label{eq:bound-alpha} \\
      &\leq
        o(1)
        \label{eq:bound-epsilon}
        \eqcomma
    \end{align}
  \end{subequations} 
  where for \refeq{eq:bound-alpha} we use that
  $n/s \geq 1/\alpha = (4d)^{(1+\delta)(1+\epsilon)/\epsilon}$ to
  estimate that
  \begin{align}
    (1+\epsilon)
    \left(\ln\left(\frac{e^2d}{2(1+\epsilon)}\right)\right)
    \le
    (1+\epsilon)
    \ln\left({4d}\right)
    =
    \frac{\epsilon  \ln(1/\alpha)}{1+\delta}
    \le
    \frac{\epsilon  \ln(n/s)}{1+\delta}
    \eqcomma
  \end{align}
  and for \refeq{eq:bound-epsilon} we use that
  $\epsilon \delta = \omega(1/\log n)$ and that
  $s\ln(n/s) \geq \ln n + s-1 $ (for $1 \leq s\leq n/e^2$).
\end{proof}

We next prove the sparsity lemma for random regular graphs.  In order
to sample $G$ from $\gnd$ we use the \emph{configuration model}, which
is defined as follows.  Given $n$ and $d$ such that $dn$ is even, we
have a vertex set $V$ of size $n$ and for each vertex $v\in V$ there
is a cell $C_v$ with $d$ elements.  We sample a perfect matching $M$
uniformly from the set $\mathcal{M}_{dn}$ of all possible perfect
matchings of the $dn$ elements and consider the corresponding
multi-graph $G_M = (V,E)$, possibly with parallel edges and loops,
where $(u,v)\in E$ if and only if there exists $(x,y)\in {M}$ such
that $x\in C_u$ and $y\in C_v$.  Since all simple graphs (without
parallel edges or loops) are sampled with the same probability, we can
sample $G$ from $\gnd$ by sampling according to the configuration
model repeatedly until we sample a simple graph.
\begin{theorem}[\cite{MW91AsymptoticEnumeration,Wormald99ModelsRandom}]
  For $d = o(n^{1/2})$, the probability that $G_M$ is simple when $M$
  is sampled uniformly from $\mathcal{M}_{dn}$ is equal to
  \begin{equation*}
    \exp\left(
      - \frac{{d^2 - 1}}{4}
      - \frac{d^3}{12 n}
      + O(d^2/n)
    \right) \eqperiod
  \end{equation*}
\end{theorem}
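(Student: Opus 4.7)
The plan is to compute this probability via the method of moments, analyzing the joint distribution of loops and multi-edges in the random multigraph $G_M$. Let $X_1$ denote the number of loops and let $X_2$ denote the number of unordered pairs of parallel edges in $G_M$. Then $G_M$ is simple precisely when $X_1 = X_2 = 0$, so the goal reduces to computing $\Pr[X_1 = 0,\, X_2 = 0]$.

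I would first compute the expectations. For each vertex $v$, each unordered pair of half-edges in $C_v$ is matched in $M$ with probability $1/(dn-1)$, since conditioning on one half-edge leaves $dn-1$ equally likely partners; summing over $v$ and pairs gives $\Expop[X_1] = n\binom{d}{2}/(dn-1) = (d-1)/2 + O(1/n)$. Similarly, for each pair of distinct vertices $\{u,v\}$, the expected number of pairs of parallel edges between $u$ and $v$ equals $2\binom{d}{2}^2/[(dn-1)(dn-3)]$, and summing over $\binom{n}{2}$ such pairs yields $\Expop[X_2] = (d-1)^2/4 + O(d^2/n)$. Adding these gives the leading Poisson rate $\lambda = (d-1)/2 + (d-1)^2/4 = (d^2 - 1)/4$, which accounts for the first term in the exponent. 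The next step is joint Poisson convergence: compute the joint factorial moments $\Expop[(X_1)_{r_1} (X_2)_{r_2}]$ and show they factor asymptotically as $\Expop[X_1]^{r_1}\Expop[X_2]^{r_2}$, so that $(X_1,X_2)$ tends to a pair of independent Poissons. By the method of moments this already gives $\Pr[X_1=0,X_2=0] = e^{-\lambda}(1+o(1))$ at leading order.

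The main obstacle, and the really hard part, is isolating the explicit $-d^3/(12n)$ correction inside the exponent while keeping every other error down to $O(d^2/n)$, \emph{uniformly} for $d$ all the way up to $o(n^{1/2})$. This demands refined control on two fronts. First, the first and second moments must be expanded one order beyond the leading term: the correction $-d^3/(12n)$ comes out naturally from the next-order expansion of $\binom{n}{2}\binom{d}{2}^2/[(dn-1)(dn-3)]$ combined with the second factorial moment of $X_2$, which encodes the small non-Poisson deviations. Second, the moment computations must remain tight as $d$ grows, which is most cleanly handled via McKay's switching method: one directly enumerates perfect matchings of $\mathcal{M}_{dn}$ with a prescribed number of loops and multi-edges and compares them by performing local switches that exchange defects for clean structure, yielding ratio estimates sharp enough that exponentiating the resulting sum reproduces exactly the stated expression. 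Once these refined moment and switching bounds are in place, summing the contributions in the exponent gives precisely $\exp\bigl(-(d^2-1)/4 - d^3/(12n) + O(d^2/n)\bigr)$.
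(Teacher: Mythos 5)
First, a framing point: the paper does not prove this statement at all --- it is quoted as a black box from McKay--Wormald \cite{MW91AsymptoticEnumeration} and Wormald's survey \cite{Wormald99ModelsRandom} --- so there is no internal proof to compare against, and any assessment is of your sketch against the known literature proof.

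Your leading-order computation is correct and standard: with $X_1$ the number of loops and $X_2$ the number of unordered pairs of parallel edges, one indeed gets $\Expop[X_1] = n\binom{d}{2}/(dn-1) = (d-1)/2 + O(1/n)$ and $\Expop[X_2] = (d-1)^2/4 + O(d^2/n)$, and for \emph{fixed} $d$ the method of moments yields $\Pr[X_1 = X_2 = 0] \to e^{-(d^2-1)/4}$. The gap is everything beyond that, and it is not a small one, because the theorem asserts a relative error of $O(d^2/n)$ in the exponent uniformly for $d$ up to $o(n^{1/2})$. Your specific account of where $-d^3/(12n)$ comes from does not hold up: expanding $\binom{n}{2}\cdot 2\binom{d}{2}^2/[(dn-1)(dn-3)]$ to the next order only perturbs $\Expop[X_2]$ by $O(d^2/n)$, which is already swallowed by the error term, so the first moment cannot produce a $d^3/n$ contribution. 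The $-d^3/(12n)$ term arises from the second-order (and, as $d$ grows, all higher-order) terms in the cumulant/inclusion--exclusion expansion of $\log\Pr[X_1 = X_2 = 0]$, i.e.\ from the quantitative failure of $(X_1,X_2)$ to be independent Poisson --- correlations between defects sharing vertices or half-edges, triple edges, multiple loops at a vertex, and so on, whose counts grow like powers of $d$. Controlling that entire expansion uniformly for $d \to \infty$ is exactly the content of the McKay--Wormald switching analysis; naming ``McKay's switching method'' as the tool that would finish the job is pointing in the right direction, but it is an appeal to the cited theorem rather than a proof of it. As written, the proposal establishes only the classical fixed-$d$, leading-order statement $\Pr[\text{simple}] = (1+o(1))e^{-(d^2-1)/4}$, not the uniform refined asymptotics claimed.
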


Let $S_{\ell, q}$ denote the sum of $\ell$ random variables that are
$1$ with probability $q$ and $0$ otherwise.  We argue that we can
bound the probability that a set of vertices $U\subseteq V$ witnesses
that the graph is not sparse by bounding the probability that
$S_{\ell, q}$ is too large.
\begin{claim}\label{claim:Bernoulli}
  For any $s\le n/2$ and $B\in \R^+$, if $U\subseteq V$ is of size $s$
  and $q = s/(n - s)$, it holds that
  \begin{equation*}
    \Pr_{M\sim \mathcal{M}_{dn}}
    [
    \setsize{E(G_M[U])} \geq B
    ]
    \le
    \Pr[S_{ds, q} \geq B] \eqperiod
  \end{equation*}
\end{claim}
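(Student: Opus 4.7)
The plan is to expose the matching $M$ one $U$-half-edge at a time and argue stochastic domination. Enumerate the half-edges lying in cells $C_v$ with $v \in U$ as $u_1, \ldots, u_{ds}$ and process them in this order: at step $i$, if $u_i$ is still free (that is, no partner for $u_i$ has been revealed in an earlier step), reveal the partner of $u_i$ in $M$; otherwise do nothing. Define $J_i = 1$ if step $i$ reveals a new internal match, i.e.\ $u_i$ is free at step $i$ and its revealed partner is some $u_j$ with $j > i$; set $J_i = 0$ otherwise. Each internal edge of $G_M[U]$ is then counted exactly once, at the smaller of its two indices, so $\setsize{E(G_M[U])} = \sum_{i=1}^{ds} J_i$.

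The heart of the argument is the pointwise bound
\begin{equation*}
\Pr[J_i = 1 \mid \mathcal{F}_{i-1}] \leq q
\end{equation*}
for every $i$ and every history $\mathcal{F}_{i-1}$ of the first $i-1$ steps. If $u_i$ is already matched under $\mathcal{F}_{i-1}$, then $J_i = 0$ trivially. Otherwise, letting $a$ and $b$ denote the numbers of internal and external matches revealed in the first $i-1$ steps, the partner of $u_i$ is, conditionally on $\mathcal{F}_{i-1}$, uniformly distributed over the $dn - 2a - 2b - 1$ remaining free half-edges, of which exactly $ds - 2a - b - 1$ lie in $U$. The desired bound $(ds - 2a - b - 1)/(dn - 2a - 2b - 1) \leq s/(n-s)$ reduces, after cross-multiplication, to the algebraic inequality $(2a + b + 1)(2s - n) + sb \leq ds^2$.

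This last inequality is exactly where the hypothesis $s \leq n/2$ enters: under it the first summand is non-positive because $2s - n \leq 0$, while $sb \leq s(ds - 1) < ds^2$ handles the second. Granted the pointwise bound, a routine coupling produces independent Bernoulli$(q)$ variables $B_1, \ldots, B_{ds}$ with $J_i \leq B_i$ almost surely for every $i$; summing gives $\setsize{E(G_M[U])} \leq \sum_i B_i$, a copy of $S_{ds, q}$, almost surely, and the claimed inequality on tail probabilities follows. The main obstacle is really only the pointwise estimate, which as indicated collapses to a short computation hinging on $s \leq n/2$; without this hypothesis the ratio can exceed $q$ for some histories, so the assumption is genuinely used.
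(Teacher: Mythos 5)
Your proof is correct and takes essentially the same approach as the paper: reveal the partners of the $U$-half-edges one at a time, bound the conditional probability of an internal match at each step by $q=s/(n-s)$, and conclude by stochastic domination with $S_{ds,q}$. Your write-up is in fact more careful than the paper's short sketch---you verify the per-step bound $\frac{ds-2a-b-1}{dn-2a-2b-1}\le \frac{s}{n-s}$ exactly for every history, accounting for the non-$U$ half-edges consumed by external matches, which the paper glosses over with the looser counts ``at most $ds$ free elements in $U$, at least $d(n-s)$ outside.''
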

\begin{proof}
  To see why this is true, consider the random process in the
  configuration model that matches one by one the elements in the
  cells $C_v$ for all $v\in U$. At each step, there are at most $ds$
  elements in cells $C_v$ where $v\in U$ that are not yet matched, and
  at least $d(n-s)$ elements in cells $C_v$ where $v\not\in U$ that
  are not yet matched.  This implies that at each step the probability
  that we obtain an edge between cells in $U$ is at most
  $ds/d(n-s)=q$. Since at least one element in a cell of $U$ is
  matched at every step, there are at most $ds$ steps in total. Hence
  the claim follows.
\end{proof}

\begin{lemma}[Sparsity lemma for random regular graphs]
  \label{lem:razborov-sparse-Gnd}
  Let $n,d\in \N^+$ and $\epsilon, \delta \in \R^+$ be such that
  $\epsilon \delta = \omega(1/\log n)$ and
  $d^2\le \epsilon \delta \log n$.  If $G$ is a graph sampled
  from~$\gnd$, then \aas it is
  $((8d)^{- (1+\delta)(1+\epsilon)/\epsilon}n, \epsilon)$-sparse.
\end{lemma}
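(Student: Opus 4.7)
The plan is to mirror the Erd\H{o}s-R\'enyi argument from \reflem{lem:razborov-sparse-Gnp}, but carrying out the computation inside the configuration model and then paying the multiplicative price of conditioning on the graph being simple. Concretely, set $\alpha = (8d)^{-(1+\delta)(1+\epsilon)/\epsilon}$, let $\eventa$ denote the event that $G$ is $(\alpha n, \epsilon)$-sparse, and write $M$ for a uniformly random perfect matching in $\mathcal{M}_{dn}$. I would take a union bound over all vertex subsets $U$ of size $s$ for $1 \le s \le \alpha n$, and for each such $U$ use \refclaim{claim:Bernoulli} to replace the hard-to-analyse quantity $\setsize{E(G_M[U])}$ by the Bernoulli sum $S_{ds,q}$ with $q = s/(n-s)$.

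Since $\alpha n \le n/2$ we have $q \le 2s/n$, and so
\begin{equation}
\Pr_{M}\bigl[\setsize{E(G_M[U])} \geq (1+\epsilon)s\bigr]
\leq
\binom{ds}{(1+\epsilon)s}
\left(\frac{2s}{n}\right)^{(1+\epsilon)s}
\leq
\left(\frac{2eds}{(1+\epsilon)n}\right)^{(1+\epsilon)s} \eqperiod
\end{equation}
Multiplying by $\binom{n}{s} \le (en/s)^s$ and summing over $s$, the exact same manipulation as in the proof of \reflem{lem:razborov-sparse-Gnp} (with $2d$ in place of $d$, which is why we pay a factor $8d$ instead of $4d$ in $\alpha$) yields
\begin{equation}
\Pr_M[\lnot\eventa \text{ in } G_M]
\leq
\sum_{s=1}^{\alpha n}
\exp\bigl(-\delta\epsilon\, s \ln(n/s)\bigr) \eqcomma
\end{equation}
where the estimate uses $n/s \ge 1/\alpha = (8d)^{(1+\delta)(1+\epsilon)/\epsilon}$ to absorb the $\ln(4ed)$ factor into a $1/(1+\delta)$ fraction of $\ln(n/s)$.

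The final step is to transfer from the configuration model to $\gnd$. Since $\gnd$ coincides with the distribution of $G_M$ conditioned on $G_M$ being simple, and since by the cited theorem of McKay-Wormald the probability of simplicity is $\exp\bigl(-(d^2-1)/4 - d^3/(12n) + \bigoh{d^2/n}\bigr)$, we have
\begin{equation}
\Pr_{G\sim \gnd}[\lnot\eventa]
\leq
\exp\bigl((1+\littleoh{1})d^2/4\bigr)
\cdot
\sum_{s=1}^{\alpha n}
\exp\bigl(-\delta\epsilon\, s \ln(n/s)\bigr) \eqperiod
\end{equation}
The $s=1$ term dominates the sum and contributes $\exp(-(1-\littleoh{1})\delta\epsilon \ln n)$, so the right-hand side is $o(1)$ provided that $d^2 = \littleoh{\delta\epsilon \log n}$, which is exactly the hypothesis of the lemma (I would verify that the slightly weaker assumption $d^2 \le \epsilon\delta \log n$ suffices after checking constants). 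The main obstacle is purely bookkeeping: tracking the factor of~$2$ loss that forces $8d$ rather than $4d$, and confirming that the $\exp(d^2/4)$ penalty from passing to simple graphs is swallowed by the $\exp(-\epsilon\delta\ln n)$ savings under the stated hypothesis $d^2 \le \epsilon\delta\log n$.
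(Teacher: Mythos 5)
Your proposal follows essentially the same route as the paper: union bound in the configuration model, Claim~\ref{claim:Bernoulli} to reduce to the Bernoulli sum $S_{ds,q}$, the same absorption of the $\ln(8d)$ factor via $n/s \ge 1/\alpha$, and paying the $\exp\bigl(d^2/4\bigr)$ simplicity penalty, which the hypothesis $d^2 \le \epsilon\delta\log n$ together with $\epsilon\delta = \omega(1/\log n)$ indeed covers (the paper absorbs it by splitting the exponent $\epsilon\delta s\ln(n/s)$ in half). The only differences are cosmetic bookkeeping choices, such as bounding $q \le 2s/n$ rather than using $\ln(n/s-1)\ge \ln(n/s)-1$.
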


\begin{proof}
  Fix $\epsilon >0$, let $\alpha=(8d)^{-2(1+1/\epsilon)}$, and denote
  by $\eventa$ the event ``$G$ is $(\alpha n, \epsilon)$-sparse''.  We
  want to prove that $G\sim \gnd$ is $(\alpha n, \epsilon)$-sparse
  with probability that goes to $1$ as $n$ goes to infinity.  To this
  end, we prove that if we sample $G$ from the configuration model,
  the probability that it is not $(\alpha n, \epsilon)$-sparse is much
  smaller than the probability that $G$ is a random regular
  graph. More formally, our goal is to prove that
  \begin{align}
    \Pr[\lnot \eventa]
    \cdot
    \exp\left(\frac{d^2 - 1}{4}\right)
    \leq o(1) \eqcomma
  \end{align}
  where we recall that the probability here and in what follows is
  taken over sampling $G$ in the configuration model.  By union bound
  and using \refclaim{claim:Bernoulli} we have that
  \begin{subequations}
    \begin{align}
      \Pr[\lnot \eventa]
      \cdot
      \exp\left(\frac{d^2 - 1}{4}\right)
      &\leq
        \sum_{\substack{U \subseteq V\\ \setsize{U} \leq \alpha n}}
        \Pr[\setsize{E(G[U])} \geq (1+\epsilon) \setsize{U}]
        \cdot
        \exp\left(\frac{d^2 }{4}\right) \\
      &\le
        \sum_{s=1}^{\alpha n}
        \binom{n}{s}
        \Pr[S_{ds, q} \geq (1+\epsilon) s]
        \cdot
        \exp\left(\frac{d^2 }{4}\right) \eqperiod
    \end{align}
  \end{subequations} 
  Our goal is to show that $Pr[S_{ds, q} \geq (1+\epsilon) s]$ is so
  small that it compensates for the other factors. More concretely, we
  wish to show that for $s\leq \alpha n$ it holds that
  \begin{align}\label{eq:smallprob}
    \binom{n}{s}
    \Pr[S_{ds, q} \geq (1+\epsilon) s]  
    &\le
      \exp
      \left(
      -
      {\epsilon \delta s \ln(n/s)}
      \right) \eqcomma
  \end{align}
  from which it follows, since
  $d^2 \le \epsilon \delta \log n \le 2\epsilon \delta \ln n $, that
  \begin{subequations}
    \begin{align}
      \Pr[\lnot \eventa]
      \cdot
      \exp\left(\frac{d^2 - 1}{4}\right)
      &\le
        \sum_{s=1}^{\alpha n}
        \exp
        \left(
        -{\epsilon\delta s \ln(n/s)}
        + \frac{d^2}{4}
        \right) \\
      &\le
        \sum_{s=1}^{\alpha n}
        \exp\left(
        -\frac{\epsilon \delta s \ln(n/s)}{2}
        \right) \\
        &= o(1) \eqcomma
    \end{align}
  \end{subequations} 
  where we use the fact that $\epsilon \delta = \omega(1/\log n)$ and
  that $s\ln(n/s) \geq \ln n + s - 1 $ (for $1 \leq s\leq n/e^2$).

  It remains to show that \refeq{eq:smallprob} holds. This follows
  from the sequence of calculations
  \begin{subequations}
    \begin{align}
      \binom{n}{s}
      \Pr[S_{ds, q} \geq (1+\epsilon) s]
      &\le
        \binom{n}{s}
        \binom{ds}{(1+\epsilon) s}
        \cdot
        \left(\frac{s}{n-s}\right)^{(1+\epsilon) s}   \\
      &\le
        \left(\frac{en}{s}\right)^s
        \left(\frac{eds}{(1+\epsilon)s}\right)^{(1+\epsilon) s}
        \cdot
        \left(\frac{s}{n-s}\right)^{(1+\epsilon) s}  \\
      &\le
        \exp
        \left(-
        \epsilon s \ln \left(\frac{n-s}{s}\right) +
        s +
        (1+\epsilon) s
        \ln\left(\frac{ed}{(1+\epsilon)}\right)
        \right)   \\
      &\le
        \exp \left(-
        \epsilon s \ln \left(\frac{n}{s}\right) +
        (1+\epsilon) s
        \ln\left(\frac{e^2d}{(1+\epsilon)}\right)
        \right) \label{eq:bound-ln}  \\
      &\le
        \exp
        \left(
        - {\epsilon \delta s \ln \left({n}/{s}\right)}
        \right) \label{eq:bound-alphaGnd}
        \eqcomma
    \end{align}
  \end{subequations} 
  where for \refeq{eq:bound-ln} we use that
  $\ln(n/s - 1) \geq \ln(n/s) -1$ (for $s \leq n/2$), and for
  \refeq{eq:bound-alphaGnd} we use that
  $n/s \geq 1/\alpha = (8d)^{-(1+\delta)(1+\epsilon)/\epsilon}$ to
  bound
  \begin{align}
    (1+\epsilon)
    \left(
    \ln\left(\frac{e^2d}{(1+\epsilon)}\right)
    \right)
    \le
    (1+\epsilon)
    \ln\left({8d}\right)
    =
    \frac{\epsilon  \ln(1/\alpha)}{1+\delta}
    \le
    \frac{\epsilon  \ln(n/s)}{1+\delta} \eqperiod
  \end{align}
  This concludes the proof of \reflem{lem:razborov-sparse}.
\end{proof}

\section{\erdosrenyi Graphs Almost Have Small Maximum Degree}
\label{sec:TDelta-small}

In this section we provide a proof of \reflem{lem:TDelta-small-v2}
stating that \erdosrenyi random graphs have small degree except for a
small set of vertices.

\begin{restatablelem}{\ref{lem:TDelta-small-v2}}[restated]
  Let~$G=(V, E)$ be a graph sampled from~$\gndn$ where
  $d=\bigoh{\log n}$.  If $\Delta\geq d$ is such that
  $(\Delta/ed)^\Delta = \littleoh{n}$, then \aas there exists a set
  $\badset_\Delta$ of size at most $(ed/\Delta)^\Delta \cdot 2en$ such
  that the maximum degree in~$G[V\setminus \badset_\Delta]$ is at
  most~$\Delta-1$.
\end{restatablelem}
\begin{proof}
  Let~$\eventa$ denote the event that there exists a
  set~$\badset_\Delta \subseteq V$ of
  size~$\ell = (ed/\Delta)^\Delta \cdot 2en$ such that the maximum
  degree in~$G[V\setminus \badset_\Delta]$ is at most~$\Delta-1$. We
  prove that~$\Pr[\lnot \eventa]= \littleoh{1}$.
    
  Note that if $\eventa$ does not hold, then, in particular, if we go
  over the vertices of~$G$ in any given fixed order, and remove from
  $G$ any vertex of degree at least $\Delta$ that we encounter, we
  will end up removing at least $\ell$ vertices. Observe further that
  after the removal of $i$ vertices, the probability that a vertex has
  degree at least $\Delta$ is at most
  $\binom{n -i}{\Delta} (d/n)^\Delta$ and is independent of the fact
  that the removed vertices had degree at least $\Delta$.  Therefore,
  by taking a union bound over all sets of size $\ell$, we can bound
  the probability of the event $\eventa$ not holding by
  \begin{subequations}
    \begin{align}
      \Pr[\lnot \eventa] 
      &\leq
        \binom{n}{\ell}
        \prod_{i = 1}^\ell
        \binom{n-i}{\Delta}
        \left(\frac{d}{n}\right)^\Delta \\
      &\leq
        \binom{n}{\ell}
        \left(
        \binom{n}{\Delta}
        \left(
        \frac{d}{n}
        \right)^\Delta
        \right)^\ell \\
      &\leq
        \left(
        \frac{en}{\ell}
        \right)^\ell
        \left(
        \frac{ed}{\Delta}
        \right)^{\Delta\ell}\\
      &= \littleoh{1} \eqcomma \label{eq:last}
    \end{align}
  \end{subequations}
  where for \refeq{eq:last} we use that
  $\ell = 2en \left({ed}/{\Delta}\right)^{\Delta}$ to derive that
  $\left({en}/{\ell}\right)^\ell
  \left({ed}/{\Delta}\right)^{\Delta\ell} = 2^{-\ell}$, and then use
  that $\ell = \littleomega{1}$, which holds since
  $(\Delta/ed)^\Delta = \littleoh{n}$, to conclude
  $2^{-\ell} = \littleoh{1}$. The lemma follows.
\end{proof}

\bibliography{refpaper}

\newcommand{\etalchar}[1]{$^{#1}$}
\begin{thebibliography}{KPGW10}

\bibitem[ABRW02]{ABRW02SpaceComplexity}
Michael Alekhnovich, Eli {Ben-Sasson}, Alexander~A. Razborov, and Avi
  Wigderson.
\newblock Space complexity in propositional calculus.
\newblock {\em SIAM Journal on Computing}, 31(4):1184\nobreakdash--1211, April
  2002.
\newblock Preliminary version in \emph{STOC~'00}.

\bibitem[AD08]{AD08CombinatoricalCharacterization}
Albert Atserias and Víctor Dalmau.
\newblock A combinatorial characterization of resolution width.
\newblock {\em Journal of Computer and System Sciences},
  74(3):323\nobreakdash--334, May 2008.
\newblock Preliminary version in \emph{CCC~'03}.

\bibitem[AH19]{AH19SizeDegree}
Albert Atserias and Tuomas Hakoniemi.
\newblock Size-degree trade-offs for {S}ums-of-{S}quares and
  {P}ositivstellensatz proofs.
\newblock In {\em Proceedings of the 34th Annual Computational Complexity
  Conference ({CCC}~'19)}, volume 137 of {\em Leibniz International Proceedings
  in Informatics (LIPIcs)}, pages 24:1\nobreakdash--24:20, July 2019.

\bibitem[AN05]{AN05Chromatic}
Dimitris Achlioptas and Assaf Naor.
\newblock The two possible values of the chromatic number of a random graph.
\newblock {\em Annals of Mathematics}, 162(3):1335--1351, November 2005.

\bibitem[AO19]{AO19ProofCplx}
Albert Atserias and Joanna Ochremiak.
\newblock Proof complexity meets algebra.
\newblock {\em ACM Transactions on Computational Logic},
  20:1:1\nobreakdash--1:46, February 2019.
\newblock Preliminary version in \emph{ICALP~'17}.

\bibitem[AR01]{AR01LowerBounds}
Michael Alekhnovich and Alexander~A. Razborov.
\newblock Lower bounds for polynomial calculus: {N}on-binomial case.
\newblock In {\em Proceedings of the 42nd Annual {IEEE} Symposium on
  Foundations of Computer Science ({FOCS}~'01)}, pages 190\nobreakdash--199,
  October 2001.

\bibitem[AR03]{AR03LowerBounds}
Michael Alekhnovich and Alexander~A. Razborov.
\newblock Lower bounds for polynomial calculus: {N}on-binomial case.
\newblock {\em Proceedings of the Steklov Institute of Mathematics},
  242:18\nobreakdash--35, 2003.
\newblock Available at
  \url{http://people.cs.uchicago.edu/~razborov/files/misha.pdf}. Preliminary
  version in \emph{FOCS~'01}.

\bibitem[AR22]{AR22PerfectMatchingJournal}
Per Austrin and Kilian Risse.
\newblock Perfect matching in random graphs is as hard as {T}seitin.
\newblock {\em TheoretiCS}, 1:Art. 2, 47, 2022.
\newblock Preliminary version in \emph{SODA~'22}.

\bibitem[AT92]{AT92Colorings}
Noga Alon and Michael Tarsi.
\newblock Colorings and orientations of graphs.
\newblock {\em Combinatorica}, 12(2):125\nobreakdash--134, June 1992.

\bibitem[Bay82]{Bayer82Division}
David~Allen Bayer.
\newblock {\em The Division Algorithm and the {H}ilbert Scheme}.
\newblock PhD thesis, Harvard University, June 1982.
\newblock Available at
  \url{https://www.math.columbia.edu/~bayer/papers/Bayer-thesis.pdf}.

\bibitem[BBKO21]{BBKO21AlgebraicCSP}
Libor Barto, Jakub Bul\'{\i}n, Andrei Krokhin, and Jakub Opr\v{s}al.
\newblock Algebraic approach to promise constraint satisfaction.
\newblock {\em Journal of the ACM}, 68(4):28:1\nobreakdash--28:66, August 2021.
\newblock Preliminary version in \emph{STOC~'19}.

\bibitem[BCMM05]{BCCM05RandomGraph}
Paul Beame, Joseph~C. Culberson, David~G. Mitchell, and Cristopher Moore.
\newblock The resolution complexity of random graph
  $k$\nobreakdash-colorability.
\newblock {\em Discrete Applied Mathematics},
  153(1\nobreakdash-3):25\nobreakdash--47, December 2005.

\bibitem[BE05]{Beigel05ColoringFixed}
Richard Beigel and David Eppstein.
\newblock $3$\nobreakdash-coloring in time ${O}({1.3289}^n)$.
\newblock {\em Journal of Algorithms}, 54(2):168\nobreakdash--204, February
  2005.

\bibitem[Ber18]{Berkholz18Relation}
Christoph Berkholz.
\newblock The relation between polynomial calculus, {S}herali-{A}dams, and
  sum-of-squares proofs.
\newblock In {\em Proceedings of the 35th Symposium on Theoretical Aspects of
  Computer Science ({STACS}~'18)}, volume~96 of {\em Leibniz International
  Proceedings in Informatics (LIPIcs)}, pages 11:1\nobreakdash--11:14, February
  2018.

\bibitem[BGIP01]{BGIP01LinearGaps}
Samuel~R. Buss, Dima Grigoriev, Russell Impagliazzo, and Toniann Pitassi.
\newblock Linear gaps between degrees for the polynomial calculus modulo
  distinct primes.
\newblock {\em Journal of Computer and System Sciences},
  62(2):267\nobreakdash--289, March 2001.
\newblock Preliminary version in \emph{CCC~'99}.

\bibitem[BI99]{BI99RandomJOURNALREF}
Eli {Ben-Sasson} and Russell Impagliazzo.
\newblock Random {CNF}'s are hard for the polynomial calculus.
\newblock In {\em Proceedings of the 40th Annual {IEEE} Symposium on
  Foundations of Computer Science ({FOCS}~'99)}, pages 415\nobreakdash--421,
  October 1999.
\newblock Journal version in \cite{BI10Random}.

\bibitem[BI10]{BI10Random}
Eli {Ben-Sasson} and Russell Impagliazzo.
\newblock Random {CNF}'s are hard for the polynomial calculus.
\newblock {\em Computational Complexity}, 19(4):501\nobreakdash--519, 2010.
\newblock Preliminary version in \emph{FOCS~'99}.

\bibitem[BIK{\etalchar{+}}94]{BIKPP94LowerBounds}
Paul Beame, Russell Impagliazzo, Jan Kraj{\'\i}{\v{c}}ek, Toniann Pitassi, and
  Pavel Pudl{\'a}k.
\newblock Lower bounds on {Hilbert's} {Nullstellensatz} and propositional
  proofs.
\newblock In {\em Proceedings of the 35th Annual {IEEE} Symposium on
  Foundations of Computer Science ({FOCS}~'94)}, pages 794\nobreakdash--806,
  November 1994.

\bibitem[BIS07]{BIS07IndependentSets}
Paul Beame, Russell Impagliazzo, and Ashish Sabharwal.
\newblock The resolution complexity of independent sets and vertex covers in
  random graphs.
\newblock {\em Computational Complexity}, 16(3):245\nobreakdash--297, October
  2007.
\newblock Preliminary version in \emph{CCC~'01}.

\bibitem[BKM19]{JKM19Lovasz}
Jess Banks, Robert Kleinberg, and Cristopher Moore.
\newblock The {L}ov\'{a}sz theta function for random regular graphs and
  community detection in the hard regime.
\newblock {\em SIAM Journal on Computing}, 48(3):1098\nobreakdash--1119, 2019.

\bibitem[Bla37]{Blake37Thesis}
Archie Blake.
\newblock {\em Canonical Expressions in {B}oolean Algebra}.
\newblock PhD thesis, University of Chicago, 1937.

\bibitem[BN21]{BN21ProofCplxSAT}
Samuel~R. Buss and Jakob Nordström.
\newblock Proof complexity and {SAT} solving.
\newblock In Armin Biere, Marijn J.~H. Heule, Hans van Maaren, and Toby Walsh,
  editors, {\em Handbook of Satisfiability}, volume 336 of {\em Frontiers in
  Artificial Intelligence and Applications}, chapter~7, pages
  233\nobreakdash--350. IOS Press, 2nd edition, February 2021.

\bibitem[Bol78]{Bollobas87Girth}
B{\'e}la Bollob{\'a}s.
\newblock Chromatic number, girth and maximal degree.
\newblock {\em Discrete Mathematics}, 24(3):311--314, 1978.

\bibitem[Bus98]{Buss98LowerBoundsNS}
Samuel~R. Buss.
\newblock Lower bounds on {N}ullstellensatz proofs via designs.
\newblock In {\em Proof Complexity and Feasible Arithmetics}, volume~39 of {\em
  {DIMACS} Series in Discrete Mathematics and Theoretical Computer Science},
  pages 59\nobreakdash--71. American Mathematical Society, 1998.
\newblock Available at
  \url{http://www.math.ucsd.edu/~sbuss/ResearchWeb/designs/}.

\bibitem[BW01]{BW01ShortProofs}
Eli {Ben-Sasson} and Avi Wigderson.
\newblock Short proofs are narrow---resolution made simple.
\newblock {\em Journal of the ACM}, 48(2):149\nobreakdash--169, March 2001.
\newblock Preliminary version in \emph{STOC~'99}.

\bibitem[CEI96]{CEI96Groebner}
Matthew Clegg, Jeffery Edmonds, and Russell Impagliazzo.
\newblock Using the {Groebner} basis algorithm to find proofs of
  unsatisfiability.
\newblock In {\em Proceedings of the 28th Annual {ACM} Symposium on Theory of
  Computing ({STOC}~'96)}, pages 174\nobreakdash--183, May 1996.

\bibitem[CFRR02]{CFRR02RandomRegular}
Colin Cooper, Alan Frieze, Bruce Reed, and Oliver Riordan.
\newblock Random regular graphs of non-constant degree: independence and
  chromatic number.
\newblock {\em Combinatorics, Probability and Computing},
  11(4):323\nobreakdash--341, 2002.

\bibitem[{Coj}05]{CojaOghlan05Lovasz}
Amin {Coja-Oghlan}.
\newblock The {L}ov\'{a}sz number of random graphs.
\newblock {\em Combinatorics, Probability and Computing},
  14(4):439\nobreakdash--465, 2005.

\bibitem[CR79]{CR79Relative}
Stephen~A. Cook and Robert~A. Reckhow.
\newblock The relative efficiency of propositional proof systems.
\newblock {\em Journal of Symbolic Logic}, 44(1):36\nobreakdash--50, March
  1979.
\newblock Preliminary version in \emph{STOC~'74}.

\bibitem[DL95]{DeLoera95Grobner}
Jes{\'u}s~A. De~Loera.
\newblock Gr{\"o}bner bases and graph colorings.
\newblock {\em Beitr{\"a}ge zur Algebra und Geometrie},
  36(1):89\nobreakdash--96, January 1995.

\bibitem[DLL62]{DLL62MachineProgram}
Martin Davis, George Logemann, and Donald Loveland.
\newblock A machine program for theorem proving.
\newblock {\em Communications of the ACM}, 5(7):394\nobreakdash--397, July
  1962.

\bibitem[DLMM08]{DLMM08Hilbert}
Jesús~A. {De Loera}, Jon Lee, Peter~N. Malkin, and Susan Margulies.
\newblock {H}ilbert's {N}ullstellensatz and an algorithm for proving
  combinatorial infeasibility.
\newblock In {\em Proceedings of the 21st International Symposium on Symbolic
  and Algebraic Computation ({ISSAC}~'08)}, pages 197\nobreakdash--206, July
  2008.

\bibitem[DLMM11]{DLMM11ComputingInfeasibility}
Jesús~A. {De Loera}, Jon Lee, Peter~N. Malkin, and Susan Margulies.
\newblock Computing infeasibility certificates for combinatorial problems
  through {H}ilbert's {N}ullstellensatz.
\newblock {\em Journal of Symbolic Computation}, 46(11):1260\nobreakdash--1283,
  November 2011.

\bibitem[DLMO09]{DLMO09ExpressingCombinatorial}
Jesús~A. {De Loera}, Jon Lee, Susan Margulies, and Shmuel Onn.
\newblock Expressing combinatorial problems by systems of polynomial equations
  and {H}ilbert's {N}ullstellensatz.
\newblock {\em Combinatorics, Probability and Computing},
  18(4):551\nobreakdash--582, July 2009.

\bibitem[DMP{\etalchar{+}}15]{DMPRRSSS15GraphColouring}
Jes{\'u}s~A. {De Loera}, Susan Margulies, Michael Pernpeintner, Eric Riedl,
  David Rolnick, Gwen Spencer, Despina Stasi, and Jon Swenson.
\newblock Graph-coloring ideals: {N}ullstellensatz certificates, {G}r{\"o}bner
  bases for chordal graphs, and hardness of {G}r{\"o}bner bases.
\newblock In {\em Proceedings of the 40th International Symposium on Symbolic
  and Algebraic Computation (ISSAC~'15)}, pages 133\nobreakdash--140, July
  2015.

\bibitem[DP60]{DP60ComputingProcedure}
Martin Davis and Hilary Putnam.
\newblock A computing procedure for quantification theory.
\newblock {\em Journal of the ACM}, 7(3):201\nobreakdash--215, 1960.

\bibitem[Fil19]{Filmus19AlekhnovichRazborov}
Yuval Filmus.
\newblock Another look at degree lower bounds for polynomial calculus.
\newblock {\em Theoretical Computer Science}, 796:286\nobreakdash--293,
  December 2019.

\bibitem[GL10a]{GL10Automatizability}
Nicola Galesi and Massimo Lauria.
\newblock On the automatizability of polynomial calculus.
\newblock {\em Theory of Computing Systems}, 47(2):491\nobreakdash--506, August
  2010.

\bibitem[GL10b]{GL10Optimality}
Nicola Galesi and Massimo Lauria.
\newblock Optimality of size-degree trade-offs for polynomial calculus.
\newblock {\em ACM Transactions on Computational Logic},
  12(1):4:1\nobreakdash--4:22, November 2010.

\bibitem[Hal93]{Halldorsson93StillBetter}
Magn{\'{u}}s~M. Halld{\'{o}}rsson.
\newblock A still better performance guarantee for approximate graph coloring.
\newblock {\em Information Processing Letters}, 45(1):19--23, January 1993.

\bibitem[HLW06]{HLW06ExpanderGraphs}
Shlomo Hoory, Nathan Linial, and Avi Wigderson.
\newblock Expander graphs and their applications.
\newblock {\em Bulletin of the American Mathematical Society},
  43(4):439\nobreakdash--561, October 2006.

\bibitem[Hus15]{Husfeldt15Colouring}
Thore Husfeldt.
\newblock Graph colouring algorithms.
\newblock In Lowell~W. Beineke and Robin~J. Wilson, editors, {\em Topics in
  Chromatic Graph Theory}, Encyclopedia of Mathematics and its Applications,
  chapter~13, pages 277\nobreakdash--303. Cambridge University Press, May 2015.

\bibitem[HW08]{HW08Algebraic}
Christopher~J. Hillar and Troels Windfeldt.
\newblock Algebraic characterization of uniquely vertex colorable graphs.
\newblock {\em Journal of Combinatorial Theory, Series B},
  98(2):400\nobreakdash--414, March 2008.

\bibitem[IPS99]{IPS99LowerBounds}
Russell Impagliazzo, Pavel Pudl{\'a}k, and Ji{\v{r}}{\'i} Sgall.
\newblock Lower bounds for the polynomial calculus and the {G}r{\"o}bner basis
  algorithm.
\newblock {\em Computational Complexity}, 8(2):127\nobreakdash--144, 1999.

\bibitem[Kar72]{Karp72Reducibility}
Richard~M. Karp.
\newblock Reducibility among combinatorial problems.
\newblock In {\em Complexity of Computer Computations}, The IBM Research
  Symposia Series, pages 85\nobreakdash--103. Springer, 1972.

\bibitem[KM21]{KM21StressFree}
Pravesh~K. Kothari and Peter Manohar.
\newblock A stress-free sum-of-squares lower bound for coloring.
\newblock In {\em Proceedings of the 36th Annual {IEEE} Conference on
  Computational Complexity ({CCC}~'21)}, volume 200 of {\em Leibniz
  International Proceedings in Informatics (LIPIcs)}, pages
  23:1\nobreakdash--23:21, July 2021.

\bibitem[KO22]{KO22Invitation}
Andrei Krokhin and Jakub Opr{\v{s}}al.
\newblock An invitation to the promise constraint satisfaction problem.
\newblock {\em ACM SIGLOG News}, 9(3):30--59, 2022.

\bibitem[KPGW10]{KPGW10Chromatic}
Graeme Kemkes, Xavier P{\'{e}}rez-Gim{\'{e}}nez, and Nicholas Wormald.
\newblock On the chromatic number of random $d$\nobreakdash-regular graphs.
\newblock {\em Advances in Mathematics}, 223(1):300\nobreakdash--328, January
  2010.

\bibitem[KT17]{KT17Coloring3Colorable}
Ken-Ichi Kawarabayashi and Mikkel Thorup.
\newblock Coloring 3-colorable graphs with less than $n^{1/5}$ colors.
\newblock {\em Journal of the ACM}, 64(1), March 2017.
\newblock Preliminary version in \emph{STACS~'14}.

\bibitem[Las01]{Lasserre01Explicit}
Jean~B. Lasserre.
\newblock An explicit exact {SDP} relaxation for nonlinear $0$-$1$ programs.
\newblock In {\em Proceedings of the 8th International Conference on Integer
  Programming and Combinatorial Optimization ({IPCO}~'01)}, volume 2081 of {\em
  Lecture Notes in Computer Science}, pages 293\nobreakdash--303. Springer,
  June 2001.

\bibitem[Lau18]{Lauria2018}
Massimo Lauria.
\newblock Algorithm analysis through proof complexity.
\newblock In {\em Proceedings of the 14th Conference on Computability in Europe
  ({CiE}~'18), Sailing Routes in the World of Computation}, volume 10936 of
  {\em Lecture Notes in Computer Science}, pages 254\nobreakdash--263. Springer
  International Publishing, July 2018.

\bibitem[LN17]{LN17GraphColouring}
Massimo Lauria and Jakob Nordström.
\newblock Graph colouring is hard for algorithms based on {H}ilbert's
  {N}ullstellensatz and {G}röbner bases.
\newblock In {\em Proceedings of the 32nd Annual Computational Complexity
  Conference ({CCC}~'17)}, volume~79 of {\em Leibniz International Proceedings
  in Informatics (LIPIcs)}, pages 2:1\nobreakdash--2:20, July 2017.

\bibitem[Lov94]{Lovasz94Stable}
László Lovász.
\newblock Stable sets and polynomials.
\newblock {\em Discrete Mathematics},
  124(1\nobreakdash--3):137\nobreakdash--153, January 1994.

\bibitem[{\L}uc91]{Luczak91Chromatic}
Tomasz {\L}uczak.
\newblock The chromatic number of random graphs.
\newblock {\em Combinatorica}, 11(1):45\nobreakdash--54, 1991.

\bibitem[Mat74]{Matiyasevich74Criterion}
Yuri~V. Matiyasevich.
\newblock A criterion for vertex colorability of a graph stated in terms of
  edge orientations.
\newblock {\em Diskretnyi Analiz}, 26:65\nobreakdash--71, 1974.
\newblock {E}nglish translation of the Russian original. Available at
  \url{http://logic.pdmi.ras.ru/~yumat/papers/22_paper/}.

\bibitem[Mat04]{Matiyasevich04Algebraic}
Yuri~V. Matiyasevich.
\newblock Some algebraic methods for calculating the number of colorings of a
  graph.
\newblock {\em Journal of Mathematical Sciences},
  121(3):2401\nobreakdash--2408, May 2004.

\bibitem[McD84]{McDiarmid84Colouring}
Colin McDiarmid.
\newblock Colouring random graphs.
\newblock {\em Annals of Operations Research}, 1(3):183\nobreakdash--200,
  October 1984.

\bibitem[MN15]{MN15GeneralizedMethodDegree}
Mladen Mik\v{s}a and Jakob Nordström.
\newblock A generalized method for proving polynomial calculus degree lower
  bounds.
\newblock In {\em Proceedings of the 30th Annual Computational Complexity
  Conference ({CCC}~'15)}, volume~33 of {\em Leibniz International Proceedings
  in Informatics (LIPIcs)}, pages 467\nobreakdash--487, June 2015.

\bibitem[Mnu01]{Mnuk01Representing}
Michal Mnuk.
\newblock Representing graph properties by polynomial ideals.
\newblock In {\em Proceedings of the 4th International Workshop on Computer
  Algebra in Scientific Computing ({CASC}~'01)}, pages 431\nobreakdash--444,
  September 2001.

\bibitem[MW91]{MW91AsymptoticEnumeration}
Brendan~D. McKay and Nicholas~C. Wormald.
\newblock Asymptotic enumeration by degree sequence of graphs with degrees
  $o(n^{1/2})$.
\newblock {\em Combinatorica}, 11(4):369\nobreakdash--382, December 1991.

\bibitem[Par00]{Parrilo00Thesis}
Pablo~A. Parrilo.
\newblock {\em Structured Semidefinite Programs and Semialgebraic Geometry
  Methods in Robustness and Optimization}.
\newblock PhD thesis, California Institute of Technology, May 2000.
\newblock Available at
  \url{http://resolver.caltech.edu/CaltechETD:etd-05062004-055516}.

\bibitem[Pud00]{Pudlak00ProofsAsGames}
Pavel Pudl\'{a}k.
\newblock Proofs as games.
\newblock {\em American Mathematical Monthly}, pages 541\nobreakdash--550,
  2000.

\bibitem[Raz98]{Razborov98LowerBound}
Alexander~A. Razborov.
\newblock Lower bounds for the polynomial calculus.
\newblock {\em Computational Complexity}, 7(4):291\nobreakdash--324, December
  1998.

\bibitem[Raz17]{Razborov17WidthSemialgebraic}
Alexander~A. Razborov.
\newblock On the width of semialgebraic proofs and algorithms.
\newblock {\em Mathematics of Operations Research},
  42(4):1106\nobreakdash--1134, May 2017.

\bibitem[Rec75]{Reckhow75Thesis}
Robert~A. Reckhow.
\newblock {\em On the Lengths of Proofs in the Propositional Calculus}.
\newblock PhD thesis, University of Toronto, 1975.
\newblock Available at
  \url{https://www.cs.toronto.edu/~sacook/homepage/reckhow_thesis.pdf}.

\bibitem[Rob65]{Robinson65Machine-oriented}
John~Alan Robinson.
\newblock A machine-oriented logic based on the resolution principle.
\newblock {\em Journal of the ACM}, 12(1):23\nobreakdash--41, January 1965.

\bibitem[RT22]{RT22GraphsLargeGirth}
Juli{\'a}n~Ariel {Romero Barbosa} and Levent Tun{\c{c}}el.
\newblock Graphs with large girth and chromatic number are hard for
  {N}ullstellensatz.
\newblock Technical Report 2212.05365, arXiv.org, December 2022.

\bibitem[SA90]{SA90Hierarchy}
Hanif~D. Sherali and Warren~P. Adams.
\newblock A hierarchy of relaxations between the continuous and convex hull
  representations for zero-one programming problems.
\newblock {\em SIAM Journal on Discrete Mathematics}, 3:411\nobreakdash--430,
  1990.

\bibitem[Wor99]{Wormald99ModelsRandom}
Nicholas~C. Wormald.
\newblock Models of random regular graphs.
\newblock In J.~D. Lamb and D.~A. Preece, editors, {\em Surveys in
  Combinatorics, 1999}, London Mathematical Society Lecture Note Series, pages
  239\nobreakdash--–298. Cambridge University Press, 1999.

\bibitem[Zuc07]{Zuckerman07LinearDegreeExtractors}
David Zuckerman.
\newblock Linear degree extractors and the inapproximability of max clique and
  chromatic number.
\newblock {\em Theory of Computing}, 3(6):103\nobreakdash--128, August 2007.
\newblock Preliminary version in \emph{STOC~'06}.

\end{thebibliography}

\bibliographystyle{alpha}

\end{document}